\newcounter{algsplit}
\newtheorem{thrm}{Theorem}[section]
\newtheorem{lem}[thrm]{Lemma}
\newtheorem{prop}[thrm]{Proposition}
\newtheorem{conj}[thrm]{Conjecture}
\newtheorem{cor}[thrm]{Corollary}
\theoremstyle{definition}
\newtheorem{defn}[thrm]{Definition}
\theoremstyle{definition}
\newtheorem{notation}[thrm]{Notation}
\theoremstyle{definition}
\newtheorem{exmpl}[thrm]{Example}
\theoremstyle{definition}
\newtheorem{rmk}[thrm]{Remark}
\newcommand{\Z}{\mathbb{Z}}
\newcommand{\N}{\mathbb{N}}
\newcommand{\Q}{\mathbb{Q}}
\newcommand{\R}{\mathbb{R}}
\newcommand{\K}{\mathbb{K}}
\newcommand{\tM}{\widehat{M}}
\newcommand{\Rp}{\mathbb{R}_{\geq 0}}
\newcommand{\Qp}{\mathbb{Q}_{\geq 0}}
\newcommand{\Zp}{\mathbb{Z}_{\geq 0}}
\newcommand{\Zpp}{\mathbb{Z}_{>0}}
\newcommand{\UT}{\mathsf{UT}}
\newcommand{\SP}{\operatorname{SP}}
\newcommand{\IP}{\operatorname{IP}}
\newcommand{\SL}{\mathsf{SL}}
\newcommand{\id}{\operatorname{id}}
\newcommand{\PI}{\operatorname{PI}}
\newcommand{\Sym}{\operatorname{S}}
\newcommand{\Lie}{\mathfrak{L}}
\newcommand{\card}{\operatorname{card}}
\newcommand{\supp}{\operatorname{supp}}
\newcommand{\set}{\operatorname{set}}
\newcommand{\mG}{\mathcal{G}}
\newcommand{\sgmG}{\langle \mathcal{G} \rangle}
\newcommand{\mR}{\mathcal{R}}
\newcommand{\mC}{\mathcal{C}}
\newcommand{\mI}{\mathcal{I}}
\newcommand{\mS}{\mathcal{S}}
\newcommand{\mH}{\mathcal{H}}
\newcommand{\mun}{\mathfrak{u}(n)}
\newcommand{\bv}{\boldsymbol{v}}
\newcommand{\bx}{\boldsymbol{x}}
\newcommand{\ba}{\boldsymbol{a}}
\newcommand{\bb}{\boldsymbol{b}}
\newcommand{\bc}{\boldsymbol{c}}
\newcommand{\bj}{\boldsymbol{j}}
\newcommand{\bl}{\boldsymbol{\ell}}
\newcommand{\bzer}{\boldsymbol{0}}
\newcounter{ProblemCounter}
\begin{document}

\title{The Identity Problem in nilpotent groups of bounded class}

\date{}
\author{Ruiwen Dong\footnote{Department of Computer Science, University of Oxford, Oxford, OX1 3QD, United Kingdom, email: \url{ruiwen.dong@kellogg.ox.ac.uk}}}
\maketitle


\begin{abstract}
Let $G$ be a unitriangular matrix group of nilpotency class at most ten.
We show that the Identity Problem (does a semigroup contain the identity matrix?) and the Group Problem (is a semigroup a group?) are decidable in polynomial time for finitely generated subsemigroups of $G$.
Our decidability results also hold when $G$ is an arbitrary finitely generated nilpotent group of class at most ten.
This extends earlier work of Babai et al.\ on commutative matrix groups (SODA'96) and work of Bell et al.\ on $\mathsf{SL}(2, \mathbb{Z})$ (SODA'17).
Furthermore, we formulate a sufficient condition for the generalization of our results to nilpotent groups of class $d > 10$.
For every such $d$, we exhibit an effective procedure that verifies this condition in case it is true.
\end{abstract}

\section{Introduction}\label{sec:intro}
\paragraph*{Algorithmic problems in matrix semigroups.}
The computational theory of groups and semigroups is one of the oldest and most well-developed parts of computational algebra. 
Algorithmic problems for matrix semigroups have been studied in computer science continuously since the work of Markov~\cite{markov1947certain} in the 1940s.
The area now plays an essential role in analysing system dynamics, and has numerous applications in automata theory, randomized algorithms, program analysis, and interactive proof systems~\cite{babai1985trading, beals1993vegas, blondel2005decidable, choffrut2005some, derksen2005quantum, hrushovski2018polynomial}.
Among the most prominent problems in this area are \emph{Semigroup Membership} and \emph{Group Membership}, proposed respectively by Markov and Mikhailova in the mid twentieth century.
For these decision problems, we work in a fixed matrix group $G$. The input is a finite set of matrices $\mG = \{A_1, \ldots, A_K\} \subseteq G$ and a matrix $A$. 
Denote by $\sgmG$ the semigroup generated by $\mG$, and by $\sgmG_{grp}$ the group generated by $\mG$.
\begin{enumerate}[noitemsep,label = (\roman*)]
    \item \textit{(Semigroup Membership)} decide whether $\sgmG$ contains $A$.
    \item \textit{(Group Membership)} decide whether $\langle\mG\rangle_{grp}$ contains $A$.
    \setcounter{ProblemCounter}{\value{enumi}}
\end{enumerate}
Both problems are undecidable in general matrix groups by the classical results of Markov and Mikhailova~\cite{markov1947certain, mikhailova1966occurrence}.
In this paper, we consider two closely related problems introduced by Choffrut and Karhum\"{a}ki~\cite{choffrut2005some} in 2005: the \emph{Identity Problem} and the \emph{Group Problem}.
These two decision problems concern the \emph{structure} of semigroups rather than their \emph{membership}.
Given as input a finite set of matrices $\mG$:
\begin{enumerate}[noitemsep,label = (\roman*)]
    \setcounter{enumi}{\value{ProblemCounter}}
    \item \textit{(Identity Problem)} decide whether $\sgmG$ contains the identity matrix $I$.
    \item \textit{(Group Problem)} decide whether $\langle\mG\rangle$ is a group, in other words, whether $\sgmG = \langle\mG\rangle_{grp}$.
    \setcounter{ProblemCounter}{\value{enumi}}
\end{enumerate}
All four algorithmic problems remain undecidable for matrices in low dimensions: 
for example, for matrices in the group $\SL(4, \Z)$ of $4 \times 4$ integer matrices of determinant one~\cite{bell2010undecidability, mikhailova1966occurrence}.
The undecidability results stem from the fact that $\SL(4,\Z)$ can embed a direct product of two non-abelian free groups.
On the other hand, for matrices in $\SL(2, \Z)$, Semigroup Membership was shown to be decidable in \textbf{EXPSPACE} by Choffrut and
Karhumaki~\cite{choffrut2005some},
Group Membership is in \textbf{PTIME} by a result of Lohrey~\cite{lohrey2021subgroup}, 
and the Identity Problem and the Group Problem are \textbf{NP}-complete by results of Bell, Hirvensalo, and Potapov \cite{bell2017identity}.

The goal of this paper is to solve the Identity Problem and the Group Problem in matrix groups with additional structures.
To this end, we will consider the more general problem of computing \emph{invertible subsets}, which subsumes the Identity Problem and the Group Problem.

\begin{defn}\label{def:invset}
Let $G$ be a matrix group. 
Given a finite set of elements $\mG = \{A_1, \ldots, A_K\} \subseteq G$, the \emph{invertible subset} of $\mG$ is the set of matrices in $\mG$ who inverse lies in $\sgmG$.
\end{defn}


\begin{restatable}{prop}{propinvtoid}\label{prop:invtoid}
Given a finite set of matrices $\mG = \{A_1, \ldots, A_K\}$ in a matrix group $G$.
Denote by $\mG_{inv}$ the invertible subset of $\mG$.
\begin{enumerate}[nosep,label = (\roman*)]
    \item The Identity Problem for $\mG$ has a positive answer if and only if $\mG_{inv}$ is non-empty.
    \item The Group Problem for $\mG$ has a positive answer if and only if $\mG_{inv} = \mG$.
\end{enumerate}
\end{restatable}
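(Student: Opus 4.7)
The plan is to establish each equivalence in Proposition~\ref{prop:invtoid} by unpacking the relevant definitions and chasing products in $\sgmG$.

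For part (i), I would prove both directions directly. The easy direction is that if $\mG_{inv} \neq \emptyset$, pick any $A_i \in \mG_{inv}$; by definition $A_i^{-1} \in \sgmG$, and multiplying this expression on the left by $A_i \in \mG \subseteq \sgmG$ yields $I = A_i \cdot A_i^{-1} \in \sgmG$. For the converse, if $I \in \sgmG$ then there is a nonempty word $A_{i_1} A_{i_2} \cdots A_{i_n} = I$ with $A_{i_j} \in \mG$. If $n = 1$ the generator $A_{i_1}$ itself equals $I$, hence is its own inverse and lies in $\mG_{inv}$. If $n \geq 2$, the product $A_{i_2} \cdots A_{i_n} \in \sgmG$ is exactly $A_{i_1}^{-1}$, so $A_{i_1} \in \mG_{inv}$; in either case $\mG_{inv}$ is non-empty.

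For part (ii), I would again argue in two directions. If $\mG_{inv} = \mG$, then for every generator $A_i$ the inverse $A_i^{-1}$ lies in $\sgmG$. Since any element of $\sgmG_{grp}$ is a finite product of generators and their inverses, each such factor can be replaced by an element of $\sgmG$, so the whole product lies in $\sgmG$; combined with the trivial inclusion $\sgmG \subseteq \sgmG_{grp}$ this gives $\sgmG = \sgmG_{grp}$. Conversely, if $\sgmG = \sgmG_{grp}$ then $\sgmG$ is closed under inversion, so for each generator $A_i \in \sgmG$ we have $A_i^{-1} \in \sgmG$, which is precisely the condition $A_i \in \mG_{inv}$.

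There is no real technical obstacle here: both parts are book-keeping arguments about semigroup-versus-group generation. The only subtlety worth flagging in the write-up is the degenerate case $n=1$ in the converse of (i), which would otherwise leave the factorization $I = A_{i_1}$ without a well-defined ``complementary product'' to witness membership of $A_{i_1}^{-1}$ in $\sgmG$; handling it separately keeps the argument clean.
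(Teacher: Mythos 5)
Your proof is correct and follows essentially the same route as the paper: in part (i) you peel off the first letter of a word witnessing $I \in \sgmG$ and handle the length-one case separately (the paper does the same by writing $w = A_i w'$ and distinguishing whether $w'$ is empty), and in part (ii) both you and the paper observe that if every generator has an inverse in $\sgmG$ then every product of generators and inverses already lies in $\sgmG$, forcing $\sgmG = \sgmG_{grp}$. The only cosmetic difference is that the paper phrases the forward direction of (ii) as "$\langle \mG_{inv} \rangle$ contains only elements invertible in $\sgmG$, hence is a group," whereas you show the inclusion $\sgmG_{grp} \subseteq \sgmG$ directly by replacing factors; these are equivalent.
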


\paragraph*{Nilpotent groups, unitriangular matrices, and related work.}

Computation on matrix groups becomes easier in the presence of structural restrictions such as commutativity and nilpotence.  
In~\cite{babai1996multiplicative}, Babai et al.\ famously reduced algorithmic problems in \emph{commutative} matrix groups to computation on \emph{lattices}.
Thus, for commutative matrix groups, Group Membership reduces to linear algebra over $\Z$, and is hence decidable in \textbf{PTIME};
Semigroup Membership is equivalent to integer programming, and is hence \textbf{NP}-complete;
the Identity Problem and the Group Problem reduce to solving \emph{homogeneous} linear Diophantine equations, and are hence in \textbf{PTIME}.
The work of Babai left as an open problem how these complexity results generalize to nilpotent groups and solvable groups.
In this paper we work in the setting of nilpotent groups.

\begin{defn}
Given a group $G$ and a subgroup $H$ of $G$,
define the commutator $[G, H]$ to be the group generated by the elements in $\{ghg^{-1}h^{-1} \mid g \in G, h \in H\}$.
The \emph{lower central series} of a group $G$ is the inductively defined descending sequence of subgroups
\[
G = G_1 \geq G_2 \geq G_3 \geq \cdots,
\]
in which $G_k = [G, G_{k-1}]$.
A group $G$ is called \emph{nilpotent} if its lower central series terminates with $G_{d+1} = \{I\}$ for some $d$.
The smallest such $d$ is called the \emph{nilpotency class} of $G$.
\end{defn}

In particular, abelian groups are nilpotent of class one.
Nilpotent groups are one of the most studied classes of groups due to being the ``simplest" non-commutative groups. 
Much research has focused on algorithms for groups of relatively small nilpotency classes. For finite groups, the decades-old quest for a \textbf{PTIME} algorithm of the group isomorphism problem has focused on the very difficult case of
class two nilpotent groups~\cite{babai2011code, garzon1991isomorphism,sun2023faster}.
For infinite groups, a celebrated result of Grunewald and Segal~\cite{grunewald1980some} showed decidability of group isomorphism for all finitely generated nilpotent groups.
For membership problems, a classic result of Kopytov showed that Group Membership is decidable in nilpotent matrix groups~\cite{kopytov1968solvability}.
On the other hand, Roman'kov~\cite{roman2022undecidability} recently showed that Semigroup Membership is undecidable for a class two nilpotent matrix group.
The decidability and complexity of the Identity Problem and the Group Problem for nilpotent groups remained an intricate open problem.

The most prominent example of nilpotent groups is the group $\UT(n, \Q)$ of $n\times n$ unitriangular rational matrices.

\begin{defn}
Denote by $\UT(n, \Q)$ the group of $n \times n$ upper triangular rational matrices with ones along the diagonal:
\[
\UT(n, \Q) \coloneqq 
\left\{
\begin{pmatrix}
1 & * & \cdots & * \\
0 & 1 & \cdots & * \\
\vdots & \vdots & \ddots & \vdots \\
0 & 0 & \cdots & 1 \\
\end{pmatrix}
, \text{ where $*$ are elements of $\Q$ }. \right\}
\]
A group $G$ is called a \emph{unitriangular matrix group
over $\Q$} if it is a subgroup of $\UT(n, \Q)$ for some $n$.
\end{defn}

The group $\UT(n, \Q)$ is nilpotent of class $n-1$~\cite[Example~16.1.2]{kargapolov1979fundamentals}.
A strong motivation for studying $\UT(n, \Q)$ is the fact that every finitely generated nilpotent group is isomorphic to a subgroup of the direct product $\UT(n,\Q) \times F$ where $F$ is finite~\cite{Baumslag2007LectureNO, kargapolov1979fundamentals}.
For this reason, it suffices to focus our study on unitriangular matrix groups over $\Q$.

In \cite{ko2017identity}, Ko, Niskanen and Potapov showed the \textbf{PTIME} decidability of the Identity Problem in $\UT(3, \Q)$.
Later, utilising the special structure of the first term in the \emph{Baker-Campbell-Hausdorff (BCH) formula}, Colcombet, Ouaknine, Semukhin and Worrell proved the decidability of Semigroup Membership in $\UT(3, \Q)$ by encoding it into a Parikh automaton \cite{colcombet2019reachability}.
Recently, Dong~\cite{dong2022identity} showed the \textbf{PTIME} decidability of the Identity Problem in $\UT(4, \Z)$.
However, Dong's result relies on an \emph{ad hoc} argument from algebraic geometry, which seems unlikely to generalize to higher dimensions.
It was therefore left as an open problem whether the Identity Problem in $\UT(n, \Q)$ is decidable for $n \geq 5$.
On the undecidability side, Roman'kov~\cite{roman2022undecidability} showed that Semigroup Membership in $\UT(3, \Q)^k$ (which is of nilpotency class two) is undecidable for sufficiently large $k$.
His main technique is an embedding of the Hilbert's tenth problem.
In this paper, we generalize some of the above decidability results to unitriangular matrix groups of arbitrary \emph{dimension}, with bounded \emph{nilpotency class}.

\paragraph*{Main contribution.}
The highlight of our approach is combining convex geometry and Lie algebra to study semigroup algorithmic problems, which to the best of our knowledge is a new method in this area.
Convex geometry can be seen as the study of subsemigroups of the \emph{abelian} group $\R^n$.
Combined with Lie algebra techniques, we use it to study subsemigroups of \emph{nilpotent} groups.
The most significant contribution of our paper includes proving several intricate properties of the $k$-th term of the \emph{BCH formula}, from which our main result follows.
All but one of these properties are proven for every term of the BCH formula,
whereas the remaining one is verified term by term using assistance from computer algebra software. 
The huge computational power needed to verify this particular property is the reason why our result stops at nilpotency class ten\footnote{Nilpotent groups of high classes have intrinsically complicated structures.
Many conjectured results on nilpotent groups and $\UT(n, \K)$ are notoriously difficult to prove, but are verified for relatively small nilpotency classes.
For example, classification of nilpotent Lie algebras is done up to dimension seven~\cite{gong1998classification}, and Higman's conjecture~\cite{vera2003conjugacy} on the number of conjugacy classes in $\UT(n, \mathbb{F}_p)$ is verified up to $n \leq 13$. \label{foot:class}}.
However, we exhibit an effective procedure that verifies this property for higher classes in case it is true.



\section{Main results}\label{sec:mainres}
The main result of this paper is the following theorem.
\begin{restatable}{thrm}{thmn}\label{thm:invn}
Let $G$ be a unitriangular matrix group over $\Q$ with nilpotency class at most ten.
Given any finite set $\mG \subseteq G$,
the invertible subset of $\mG$ is computable in polynomial time.
\end{restatable}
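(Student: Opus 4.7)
The plan is to prove \autoref{thm:invn} by induction on the nilpotency class $d \le 10$, translating the problem into the Lie algebra $\mathfrak{g} = \log G$ via the Baker--Campbell--Hausdorff (BCH) formula. Since $G$ is nilpotent of class $d$, both $\exp$ and $\log$ are polynomial bijections between $G$ and $\mathfrak{g}$, and
\[
\log(\exp X \cdot \exp Y) = X + Y + \tfrac{1}{2}[X,Y] + \cdots
\]
terminates after at most $d$ nested-commutator terms. Setting $X_i = \log A_i$, the condition $A_i^{-1} \in \sgmG$ becomes: there exists a word $A_{j_1}\cdots A_{j_m}$ containing $A_i$ whose BCH image in $\mathfrak{g}$ equals $0$. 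The base case $d = 1$ is abelian: BCH collapses to an ordinary sum, and computing $\mG_{inv}$ reduces to a homogeneous linear Diophantine system, decidable in polynomial time by Babai et al.~\cite{babai1996multiplicative}.

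For the inductive step I would first pass to the quotient $\pi\colon G \to G/G_d$, which is nilpotent of class $d-1$. By the inductive hypothesis, the invertible subset $\pi(\mG)_{inv}$ is computable in polynomial time. Every $A_i \in \mG_{inv}$ necessarily satisfies $\pi(A_i) \in \pi(\mG)_{inv}$, so the remaining candidates are restricted to those. For each such $A_i$, the lower-class computation produces a cone of multiplicity vectors $(e_1,\dots,e_K) \in \Z_{\geq 0}^K$ with $e_i \geq 1$ that yield words whose BCH image vanishes modulo $\mathfrak{g}_d$. The outstanding task is to decide whether some reordering (or insertion of commutators, permitted precisely because the lower-layer generators are already known to be invertible) of such a word also kills its top-degree residue, which lives in the abelian layer $\mathfrak{g}_d$.

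This top-degree residue is a polynomial in the $X_j$'s controlled by the $d$-th BCH term, a fixed rational linear combination of $d$-fold nested brackets of the letters. The convex-geometric heart of the argument is to describe, as the ordering and the inserted commutators vary subject to the multiplicity constraints, the set of achievable residues in $\mathfrak{g}_d$, and to show that this set is a lattice cone always containing $0$. I expect this to split into two sub-claims on the $d$-th BCH term: one, provable uniformly for every $d$, showing that certain commutator corrections can be freely added, so the residue set is stable under a large lattice of translations; the other, a positivity/spanning identity for the coefficients of the $d$-th BCH term, ensuring that the generators of that lattice jointly point ``inward'' from $0$ in a suitable sense.

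The latter identity is the main obstacle: it appears to have no uniform proof and must be checked term by term, using computer algebra to expand the $d$-th BCH term in a Hall-type basis and verify the requisite linear relations. The computational cost of this verification grows rapidly with $d$, and $d = 10$ is the largest value within current reach -- this is precisely the source of the class-ten ceiling. Once both sub-claims are established, the PTIME bound follows by bounding the bit-sizes of the lattice and convex combinations used at each inductive layer, since each layer performs only integer linear algebra and convex hull manipulations of polynomial size in the input.
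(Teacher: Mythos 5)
Your high-level picture is right in two respects: the problem does get translated through BCH into the Lie algebra, and the crux really is a positivity identity on a fixed BCH term that must be verified computationally and caps the result at class ten. But the specific induction you set up has a gap at precisely the point you flag as the ``outstanding task,'' and the mechanism you propose for closing it will not work as stated.

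You reduce to: given a Parikh vector $\bl$ whose associated words already vanish modulo $\mathfrak{g}_d$, decide whether ``some reordering (or insertion of commutators)'' of such a word kills the residue in $\mathfrak{g}_d$, and you assert that the achievable residues form a lattice cone containing $0$. For a \emph{fixed} Parikh image this set is finite (there are only finitely many orderings), so it is never a cone; and ``inserting commutators'' is not a well-defined move on words in $\mG^{+}$ --- the inverse of a generator is itself a long word whose insertion scrambles all BCH terms, not just the top one. The paper's resolution is to pass to \emph{unbounded powers}: it replaces letters $B_i$ by $B_i^t$, so that $\log(B_1^t\cdots B_m^t)$ becomes a polynomial $P_b(t) = t\,\ba_{1b} + t^2 \ba_{2b} + \cdots$ in $t$, and then a convex-geometric argument (Proposition~\ref{prop:cone}, taking limit points of normal vectors as $t \to \infty$) shows that $\langle P_b(t)\rangle_{\Qp}$ is a full linear space, not merely a cone. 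This is the step that turns your ``some reordering kills the residue'' into a checkable linear condition; without it the decision problem over orderings is not linear algebra at all. Your proposal never introduces the $t$-scaling, so it has no mechanism to obtain a linear space of achievable residues.

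A second, related mismatch: you peel one layer $\mathfrak{g}_d$ at a time, recursing through $G/G_d$, $G/G_{d-1}$, etc. The paper does not do a single-layer quotient; it recurses through the nested filtration $\Lie_{\geq 2}(\cdot)$, $\Lie_{\geq 2}(\Lie_{\geq 2}(\cdot))$, $\Lie_{\geq 2}^{(3)}(\cdot)$, which halves the class at each step (so for class $2^d - 1$ only $d$ steps are needed). This is not merely an efficiency gain: the $\Lie_{\geq 2}(\Lie_{\geq 2}(\cdot))$ quotient is what makes the auxiliary alphabets $\mG', \mG'', \mG'''$ behave recursively --- $\Lie_{\geq 2}(\log\mG') = \Lie_{\geq 2}(\Lie_{\geq 2}(\log\mG_{\supp(\bl)}))$ by Lemma~\ref{lem:filalg}, so the same criterion applies verbatim one level down. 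With your single-layer quotient, the Parikh data you carry from the class-$(d-1)$ computation does not control the $\mathfrak{g}_d$ residue in any obvious way, because the residue depends on the actual orderings that witnessed vanishing in $G/G_d$, not just on the Parikh cone. You would need an additional lemma linking the two, and none is offered. Your two ``sub-claims on the $d$-th BCH term'' do correspond in spirit to Propositions~\ref{prop:lieperm} and~\ref{prop:k}, but without the $t$-scaling, the $\Lie_{\geq 2}^{(k)}$ recursion, and the convex-limit argument, the proposal does not yet amount to a proof.
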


Here, the input size is defined as the total bit length of the entries in the matrices of $\mG$.
The proof of Theorem~\ref{thm:invn} will be given in Sections~\ref{sec:alg} and \ref{sec:techthm}.
Together with Proposition~\ref{prop:invtoid}, Theorem~\ref{thm:invn} implies that the Identity Problem and the Group Problem are decidable in \textbf{PTIME} in unitriangular matrix groups over $\Q$ with nilpotency class at most ten.
For example, this result also applies to the direct product $\UT(11, \K)^m$ for any $m \in \N$ and any algebraic number field $\K$, since $\K$ can be embedded as matrices in $\Q^{k \times k}$, where $k$ is the degree of the field extension $\K / \Q$.

The following corollary extends Theorem~\ref{thm:invn} to arbitrary finitely generated nilpotent groups.
However, the complexity will depend on specific group embeddings, which we do not analyse.

\begin{restatable}{cor}{decnilp}\label{cor:decnilp}
Let $G$ be a finitely generated nilpotent group of class at most ten, given by a finite presentation~\cite[Chap.~8]{holt2005handbook}.
Then the Identity Problem and the Group Problem are decidable in $G$.
\end{restatable}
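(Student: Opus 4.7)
The plan is to reduce both decision problems to the unitriangular case handled by Theorem \ref{thm:invn}. Starting from the given finite presentation, I would first construct an explicit embedding $\iota \colon G \hookrightarrow U \times F$, where $U$ is a torsion-free finitely generated nilpotent group of class at most ten (hence a subgroup of $\UT(n,\Q)$ for a computable $n$ via a Mal'cev basis) and $F$ is a finite group. This follows from the classical structure theory of finitely generated nilpotent groups~\cite{Baumslag2007LectureNO, kargapolov1979fundamentals} and can be made effective by computing a polycyclic presentation of $G$, identifying its torsion subgroup, and building the unitriangular representation of the torsion-free part. Let $\pi_1, \pi_2$ denote the projections to $U$ and $F$, and set $N = \ker \pi_2$: a normal subgroup of $G$ of finite index which, being contained in $U$, is itself a unitriangular matrix group over $\Q$ of class at most ten.

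I would then prove two equivalences that push both problems inside $N$. First, $1_G \in \sgmG$ iff $1_G \in \sgmG \cap N$ (trivially, since $1_G \in N$). Second, $\sgmG$ is a group iff both $\sgmG \cap N$ is a group \emph{and} $\pi_2(\sgmG)$ is a subgroup of $F$. The forward direction is immediate. For the converse, given any $h \in \sgmG$ with $\pi_2(h) = f$, the $F$-side hypothesis provides $h' \in \sgmG$ with $\pi_2(h') = f^{-1}$; then $h'h \in \sgmG \cap N$, hence $(h'h)^{-1} \in \sgmG \cap N \subseteq \sgmG$ by the group hypothesis on $\sgmG \cap N$, and therefore $h^{-1} = (h'h)^{-1} \cdot h' \in \sgmG \cdot \sgmG \subseteq \sgmG$. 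The $F$-side condition amounts to $1_F \in \langle \pi_2(\mG)\rangle$, which is trivially decidable as $F$ is finite. The remaining task is thus to handle $\sgmG \cap N$: compute a finite semigroup generating set $\mG^\flat$ of $\sgmG \cap N$ in $N$, and apply Theorem \ref{thm:invn} together with Proposition \ref{prop:invtoid} to decide both the Identity Problem and the Group Problem for $\langle \mG^\flat \rangle$ in $N$.

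To build $\mG^\flat$, I would analyse the finite Schreier coset automaton $\mathcal{A}$ with states $\pi_2(G) \subseteq F$, start/accept state $1_F$, and transitions $f \xrightarrow{g_i} f f_i$: elements of $\sgmG \cap N$ correspond precisely to non-empty closed walks at $1_F$ in $\mathcal{A}$. By pigeonholing on walks of length exceeding $|F|$, any long closed walk can be ``shortened'' at a repeated state by extracting an interior sub-loop, which (by normality of $N$) itself lies in $N$. However, because a semigroup lacks inverses, the resulting ``conjugation corrections'' must themselves be realised as products of generators of bounded length, and this is precisely where the nilpotence hypothesis on $G$ is essential: the lower central series controls iterated commutators to a fixed depth and allows the needed corrections to live in a finitely generated portion of $\sgmG$. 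I expect this finite-generation and effective-construction step to be the main technical obstacle, whereas the embedding and the final appeal to Theorem \ref{thm:invn} are routine.
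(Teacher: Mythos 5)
Your overall skeleton (reduce to a unitriangular group by splitting $G$ into a finite part and a torsion-free part) matches the paper's, but the mechanism you use for the reduction is different and contains a genuine unresolved gap. You restrict attention to $\sgmG \cap N$ where $N = \ker \pi_2$ is a finite-index normal subgroup, and then need a \emph{finite semigroup generating set} $\mG^\flat$ for $\sgmG \cap N$. As you yourself note, this is the ``main technical obstacle,'' and it is a real one: the Schreier-coset / pigeonhole rewriting argument you sketch is an argument about groups, and does not carry over to semigroups because one cannot cancel to extract and re-splice interior sub-loops without inverses. It is not even established in your proposal that $\sgmG \cap N$ is finitely generated as a semigroup, let alone that a generating set is effectively computable; the vague appeal to ``conjugation corrections living in a finitely generated portion of $\sgmG$'' is not a proof. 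So as written the proposal does not go through.

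The paper sidesteps this difficulty completely by \emph{projecting away} the finite factor rather than restricting to its kernel. Writing $\phi \colon G \hookrightarrow A \times G/T$ with $A$ finite and $G/T$ torsion-free nilpotent, and $p$ the projection to $G/T$, the paper shows directly that the Identity Problem (resp.\ Group Problem) has a positive answer for $\phi(\mG)$ if and only if it has a positive answer for $p(\phi(\mG))$. The backward implication uses only that every element of $A$ has order dividing some fixed $k$: from a product $(a, e_{G/T}) \in \langle\phi(\mG)\rangle$ witnessing the projected instance, take the $k$-th power to get $e_{A\times G/T} \in \langle\phi(\mG)\rangle$, and for the Group Problem apply the same trick to a word covering all letters. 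This avoids any need to identify, finitely generate, or even reason about $\sgmG \cap N$; it works purely at the level of words in the original generators and then passes directly to $p(\phi(\mG)) \subseteq G/T \leq \UT(n,\Q)$, where Theorem~\ref{thm:invn} applies. Incidentally, your $F$-side condition ``$\pi_2(\sgmG)$ is a subgroup'' is automatic since every nonempty subsemigroup of a finite group is a subgroup, so it does not really carry any of the load; the burden all falls on the $\sgmG \cap N$ step, which is exactly the part the paper's projection argument dissolves.
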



\section{Preliminaries}\label{sec:prelim}
\paragraph*{Words and linear programming.}

All omitted proofs can be found in Appendix~\ref{app:proofCor}.
Given a finite set of matrix $\mG = \{A_1, \ldots, A_K\}$, one can consider $\mG$ as a finite alphabet.
Let $\mG^*$ denote the set of words over $\mG$, and let $\mG^+$ denote the set of \emph{non-empty} words over $\mG$.
Given a word $w \in \mG^+$, by multiplying consecutively the matrices appearing in $w$, we can evaluate $w$ as a matrix, which we denote by $\pi(w)$.
Then the semigroup $\sgmG$ consists of all matrices $\pi(w)$ where $w \in \mG^+$.
We now define some concepts necessary for analysing words with linear algebra.

\begin{defn}[Parikh image]
Given a finite alphabet $\mG = \{A_1, \ldots, A_K\}$,
the \emph{Parikh image} of a word $w = B_1 \cdots B_m$ in $\mG^*$ is the vector $\bl = (\ell_1, \ldots, \ell_K) \in \Zp^K$ defined by $\ell_i = \card(\{j \mid B_j = A_i\})$ (that is, $\ell_i$ is the number of times $A_i$ appears in $w$).
The Parikh image of $w$ over the alphabet $\mG$ is denoted by $\PI^{\mG}(w)$.
\end{defn}

\begin{defn}[Cones]
Let $V$ be a $\Q$-linear space.
A subset $\mC \subseteq V$ is called a \emph{$\Qp$-cone} if $a \in \mC \Longrightarrow a\Qp \subseteq \mC$, and $a, b \in \mC \Longrightarrow a + b \in \mC$.
Given a set of vectors $\mS \subseteq V$, denote by $\langle \mS \rangle_{\Qp}$ the \emph{$\Qp$-cone generated by $\mS$},
that is the smallest $\Qp$-cone of $V$ containing $\mS$.
Similarly, denote by $\langle \mS \rangle_{\Q}$ the $\Q$-linear space generated by $\mS$.
These notations extend to $\Rp$-cones and $\R$-linear spaces.
\end{defn}

\begin{defn}[Support]
A subset $\Lambda \subseteq \Zp^K$ is called a \emph{$\Zp$-cone} if $a, b \in \Lambda \Longrightarrow a + b \in \Lambda$, and $\bzer \in \Lambda$.
The \emph{support} of a vector $\bl = (\ell_1, \ldots, \ell_K) \in \Zp^K$ is defined as the set of indices where the entry of $\bl$ is non-zero:
\[
\supp(\bl) \coloneqq \{i \in \{1, \ldots, K\} \mid \ell_i > 0\}.
\]
The \emph{support} of a $\Zp$-cone $\Lambda$ is defined as the union of supports of all vectors in $\Lambda$:
\[
\supp(\Lambda) \coloneqq \bigcup\nolimits_{\bl \in \Lambda} \supp(\bl) = \{i \mid \exists (\ell_1, \ldots, \ell_K) \in \Lambda, \ell_i > 0\}.
\]
\end{defn}

Let $V$ be a $\Q$-linear subspace of $\Q^K$, represented as the solution set of linear homogeneous equations.
Then $\Zp^K \cap V$ is a $\Zp$-cone.
In this paper, we will need to compute the support of $\Zp$-cones of the form $\Lambda = \Zp^K \cap V$ (namely, in Algorithm~\ref{alg:invn}).

\begin{restatable}{lem}{lemcompsupp}\label{lem:compsupp}
Given $V$ represented as the solution set of linear homogeneous equations, one can compute the support of $\Lambda = \Zp^K \cap V$ in polynomial time.
\end{restatable}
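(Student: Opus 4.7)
The plan is to reduce the computation of $\supp(\Lambda)$ to $K$ linear programming feasibility queries, one per coordinate. The key preliminary observation is that
\[
\supp(\Lambda) \;=\; \supp(V \cap \Qp^K),
\]
where the right-hand side denotes the union of supports of rational non-negative vectors in $V$. The inclusion $\supp(\Lambda) \subseteq \supp(V \cap \Qp^K)$ is immediate since $\Lambda \subseteq V \cap \Qp^K$. For the reverse inclusion, if $\bl \in V \cap \Qp^K$ satisfies $\ell_i > 0$, then multiplying $\bl$ by the positive common denominator of its entries yields a vector $\bl' \in \Zp^K$ whose $i$-th coordinate is still positive; and $\bl' \in V$ because $V$ is a $\Q$-linear subspace, hence closed under positive rational scaling. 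Thus $\bl' \in \Lambda$ and $i \in \supp(\Lambda)$.

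Next, for each fixed $i \in \{1, \ldots, K\}$, testing whether $i \in \supp(V \cap \Qp^K)$ amounts to deciding the feasibility of the linear program in variables $\ell_1, \ldots, \ell_K \in \Q$ consisting of the homogeneous linear equations defining $V$, the non-negativity constraints $\ell_j \geq 0$ for all $j$, and the single inequality $\ell_i \geq 1$. (The normalization $\ell_i \geq 1$ rather than $\ell_i > 0$ is harmless because $V \cap \Qp^K$ is a $\Qp$-cone, so any witness with $\ell_i > 0$ can be rescaled.) The bit size of this LP is linear in the size of the input description of $V$, so by Khachiyan's algorithm each feasibility query can be answered in polynomial time.

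The final algorithm simply runs this LP feasibility test for every $i \in \{1, \ldots, K\}$ and returns the set of indices for which the answer is yes. Running $K$ polynomial-time LPs gives a polynomial-time algorithm overall, completing the proof. There is no substantive obstacle here: the only subtlety is the rational-to-integer rescaling, which works precisely because $V$ is a $\Q$-linear space rather than, say, an affine or integer-constrained object.
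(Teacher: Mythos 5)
Your proof is correct and follows essentially the same approach as the paper: iterate over each index $i$, replace the integer constraint by a rational one via homogeneity of the system, and decide feasibility of the resulting LP in polynomial time. The only cosmetic difference is that you spell out the normalization $\ell_i \geq 1$ in place of the strict inequality, a detail the paper leaves implicit.
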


\paragraph*{Lie algebra.}
For a general reference on Lie algebra, see~\cite{erdmann2006introduction}.
\begin{defn}[Lie algebra $\mun$]\label{def:un}
The \emph{Lie algebra} $\mun$ is defined as the $\Q$-linear space of $n \times n$ upper triangular rational matrices with \emph{zeros} on the diagonal.
There exist maps
\[
    \log: \UT(n, \Q) \rightarrow \mun, \quad
    A \mapsto \sum_{k = 1}^n \frac{(-1)^{k-1}}{k} (A - I)^k,
\]
and
\[
    \exp: \mun \rightarrow \UT(n, \Q), \quad
    X \mapsto \sum_{k = 0}^n \frac{1}{k!} X^k,
\]
which are inverse of one another.
In particular, $\log I = 0$ and $\exp(0) = I$.
\end{defn}

The Lie algebra $\mun$ is equipped with a \emph{Lie bracket} $[\cdot, \cdot] : \mun \times \mun \rightarrow \mun$ given by $[X, Y] = XY - YX$.
The Lie bracket is bilinear, anticommutative (meaning $[X, Y] = - [Y, X]$), and it additionally satisfies the \emph{Jacobi Identity}:
\[
[X, [Y, Z]] + [Y, [Z, X]] + [Z, [X, Y]] = 0 \text{ for all } X, Y, Z \in \mun.
\]

\begin{notation}
Given a set of matrices $\mG \subseteq \UT(n, \Q)$, we denote $\log \mG \coloneqq \{\log A \mid A \in \mG\}$. It is a subset of $\mun$.
If $G$ is a subgroup of $\UT(n, \Q)$, then $\log G$ is similarly defined by considering $G$ as a set.
Given a set of elements $\mH \subseteq \mun$ and an integer $k \geq 2$, define 
\[
[\mH]_{k} \coloneqq \big\{[\ldots [[X_1, X_2], X_3], \ldots, X_k] \mid X_1, X_2, \ldots, X_k \in \mH \big\}.
\]
That is, $[\mH]_{k}$ is the set of all ``left bracketing'' of length $k$ of elements in $\mH$.
\end{notation}

It is a standard result that, using bilinearity, anticommutativity and the Jacobi identity, any $k$-iteration of Lie brackets of elements in $\mH$ can be written as a linear combination of elements in $[\mH]_{k}$.
For example, for $k = 4$, one can write
\begin{align*}
    [[X_1, X_2], [X_3, X_4]]
    = & - [[X_2, [X_3, X_4]], X_1] - [[[X_3, X_4], X_1], X_2] \quad \text{(Jacobi identity)} \\
    = & \; [[[X_3, X_4], X_2], X_1] - [[[X_3, X_4], X_1], X_2] \quad \text{(Anticommutativity)}.
\end{align*}

The following lemma 
is a corollary of the so-called \emph{Mal'cev correspondence}~\cite{mal1951some}: 

\begin{restatable}{lem}{nilclass}\label{lem:nilclass}
Let $G$ be a subgroup of $\UT(n, \Q)$. If $G$ has nilpotency class $d$, then $[\log G]_{d+1} = \{0\}$.
\end{restatable}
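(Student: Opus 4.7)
The plan is a scaling-plus-BCH argument that reduces the vanishing of $[\ldots[[X_1,X_2],X_3],\ldots,X_{d+1}]$ to a polynomial identity in auxiliary integer parameters; this gives a direct proof without invoking the full machinery of the Mal'cev correspondence.

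\emph{Setup and reduction to a polynomial identity.} Given $X_1,\ldots,X_{d+1}\in\log G$, set $g_i=\exp X_i\in G$. For any $m_1,\ldots,m_{d+1}\in\Zpp$ we have $g_i^{m_i}\in G$, so the iterated group commutator
\[
c_m := \big[\ldots[[g_1^{m_1},g_2^{m_2}],g_3^{m_3}],\ldots,g_{d+1}^{m_{d+1}}\big]
\]
lies in $G_{d+1}=\{I\}$ by hypothesis, and hence $\log c_m = 0$. Because $\mun$ is nilpotent (of class $n-1$), iterated applications of the Baker--Campbell--Hausdorff (BCH) formula terminate and express $\log c_m$ as a finite $\Q$-linear combination of iterated Lie brackets of the matrices $m_i X_i$. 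By multilinearity of the bracket, this is a $\mun$-valued polynomial in $(m_1,\ldots,m_{d+1})$; since it vanishes on $\Zpp^{d+1}$ it is identically zero, so each coefficient vanishes. In particular, the coefficient of the monomial $m_1 m_2 \cdots m_{d+1}$ is zero.

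\emph{Identifying the multilinear coefficient.} The heart of the argument is to show that this coefficient equals precisely $[\ldots[[X_1,X_2],X_3],\ldots,X_{d+1}]$, which then concludes the proof. I prove this by induction on $k$ for the partial nested commutator $C_k := \log\big[\ldots[[g_1^{m_1},g_2^{m_2}],\ldots],g_k^{m_k}\big]$, establishing simultaneously that (a) the coefficient of $m_1\cdots m_k$ in $C_k$ is $[\ldots[[X_1,X_2],\ldots],X_k]$, and (b) $C_k$ is divisible by $m_1 m_2 \cdots m_k$, since setting any $m_j=0$ replaces one factor of the nested group commutator by $I$ and collapses the entire expression. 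In the inductive step, write $C_{k+1}=\log[\exp C_k,\exp(m_{k+1} X_{k+1})]$ and apply BCH: the leading term $[C_k, m_{k+1} X_{k+1}]$ contributes $\big[[\ldots[[X_1,X_2],\ldots],X_k], X_{k+1}\big]$ to the coefficient of $m_1\cdots m_{k+1}$ by the inductive hypothesis, giving the desired left-normed bracket. Every higher BCH bracket involves $a \ge 1$ copies of $C_k$ and $b \ge 1$ copies of $m_{k+1} X_{k+1}$ with $a+b \ge 3$; by (b), such a term has $m_i$-degree at least $a$ for each $i \le k$ and $m_{k+1}$-degree equal to $b$, so contributing to the multilinear monomial $m_1\cdots m_{k+1}$ would force $a=b=1$, contradicting $a+b\ge 3$.

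\emph{Main obstacle.} The key subtlety is exactly this last point: a priori, higher-order BCH brackets could also contribute to the multilinear coefficient and obscure the identification with the left-normed bracket. The divisibility observation (b) is what rules this out cleanly: it ties the ``bracket length'' of a higher BCH correction to its multidegree in the $m_i$, so such a correction necessarily involves some $m_i$ to power $\ge 2$ and cannot contaminate the $m_1\cdots m_{k+1}$ coefficient. Once this combinatorial point is in place, the remainder is a routine combination of multilinearity of the Lie bracket and the principle that a $\mun$-valued polynomial vanishing on all of $\Zpp^{d+1}$ is identically zero.
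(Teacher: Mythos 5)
Your proof is correct, and it takes a genuinely different route from the paper's. The paper's proof is essentially a citation chain through the Mal'cev correspondence: it observes that a unitriangular group over $\Q$ is torsion-free, embeds $G$ in its $\Q$-powered (Mal'cev) completion $\hat G$ of the same nilpotency class $d$, and then invokes the theorem that $\log \hat G$ is a Lie algebra whose Lie nilpotency class equals the group nilpotency class of $\hat G$, giving $[\log G]_{d+1} \subseteq [\log \hat G]_{d+1} = \{0\}$. Your argument is self-contained and more elementary: you reduce the vanishing of $[\ldots[[X_1,X_2],X_3],\ldots,X_{d+1}]$ to the vanishing of one coefficient of the $\mun$-valued polynomial $m \mapsto \log c_m$, which is identically zero because the iterated group commutator $c_m$ lies in $G_{d+1}=\{I\}$ for every $m\in\Zpp^{d+1}$. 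The subtle point is ruling out contamination of the multilinear coefficient by higher BCH corrections, and your divisibility invariant (b) together with the $a,b\ge 1$ structure of commutator-BCH terms handles this cleanly: a correction with $a+b\ge 3$ has degree $\ge a$ in each $m_i$ with $i\le k$ and degree exactly $b$ in $m_{k+1}$, so it cannot hit $m_1\cdots m_{k+1}$. What your approach buys is transparency and independence from the structure theory of $\Q$-powered nilpotent groups (only BCH, nilpotence of $\mun$, and a polynomial-vanishing argument are used); what the paper's approach buys is brevity, by outsourcing the content to two standard theorems. Both are sound, and yours fits naturally with the paper's heavy reliance on BCH elsewhere.
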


\paragraph*{The Baker-Campbell-Hausdorff (BCH) formula.}
\begin{thrm}[Baker-Campbell-Hausdorff (BCH) formula \cite{baker1905alternants, campbell1897law, hausdorff1906symbolische}]\label{thm:BCH}
Let $G$ be a unitriangular matrix group over $\Q$, whose nilpotency class is at most $d$. Let $B_1, \ldots, B_m$ be elements of $G$.
Then
\begin{equation}\label{eq:BCH}
\log(B_1 B_2 \cdots B_m) = \sum_{i = 1}^m \log B_i + \sum_{k = 2}^{d} H_k(\log B_1, \ldots, \log B_m),
\end{equation}
where the terms $H_k(\log B_1, \ldots, \log B_m), k = 2, 3, \ldots$, can be expressed as finite $\Q$-linear combinations of elements in $[\{\log B_1, \ldots, \log B_m\}]_k$.
\end{thrm}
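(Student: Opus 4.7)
The plan is to derive the statement from the classical BCH formula for two arguments together with nilpotency, and then reduce all iterated brackets to left-bracketed ones using the Jacobi identity and anticommutativity. The starting point is Dynkin's series: in the free associative algebra over $\Q$ in two indeterminates $X, Y$ one has a formal identity $\log(\exp X \cdot \exp Y) = X + Y + \sum_{k \geq 2} h_k(X,Y)$, where each $h_k$ is a homogeneous Lie polynomial of degree $k$ in $X, Y$, i.e.\ a $\Q$-linear combination of iterated brackets $[Z_1,[Z_2,[\ldots,[Z_{k-1},Z_k]\ldots]]]$ with each $Z_j \in \{X,Y\}$. Over $\mun$, Lemma~\ref{lem:nilclass} tells us that every bracket of length $> d$ vanishes, so the formal series collapses to a finite sum up to $k = d$, and the equality $\log(B_1 B_2) = \log B_1 + \log B_2 + \sum_{k=2}^d h_k(\log B_1, \log B_2)$ is a genuine identity in $\mun$.

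Next I would proceed by induction on $m$. The case $m = 1$ is trivial and $m=2$ is exactly what we just recalled. For the inductive step, write $B_1 \cdots B_m = (B_1 \cdots B_{m-1}) \cdot B_m$, apply the two-argument BCH to the pair $\bigl(\log(B_1 \cdots B_{m-1}),\; \log B_m\bigr)$, and then substitute the expression for $\log(B_1 \cdots B_{m-1})$ given by the inductive hypothesis. Every term produced is a $\Q$-linear combination of iterated Lie brackets of $\log B_1, \ldots, \log B_m$ of total length between $1$ and $d$ (longer ones are killed by nilpotency). Collecting terms of each fixed length $k$ gives a candidate for $H_k(\log B_1, \ldots, \log B_m)$ which is a $\Q$-linear combination of iterated brackets of length $k$ in $\log B_1, \ldots, \log B_m$, though not yet in the specific left-bracketed normal form required.

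The final step is to rewrite each such iterated bracket as a $\Q$-linear combination of elements of $[\{\log B_1, \ldots, \log B_m\}]_k$. This is a purely formal statement about the free Lie algebra: repeated application of the Jacobi identity
\[
[U,[V,W]] = [[U,V],W] + [V,[U,W]]
\]
together with anticommutativity $[U,V] = -[V,U]$ reduces any bracketing tree to a sum of left-combed (``left-normed'') trees, with the leaves drawn from the same multiset. The exact scheme is illustrated by the length-$4$ calculation already given in the preliminaries; formally one inducts on the total length $k$ and, within a fixed length, on the depth of the rightmost non-leaf subtree. Substituting these rewrites into the expressions for $H_k$ puts them in the required form.

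The main obstacle is bookkeeping rather than any deep idea: one must verify that the inductive substitution in the second step does not silently produce brackets of length exceeding $d$ that need to be discarded in a coherent way, and that the rewriting in the third step respects $\Q$-linearity uniformly in $k$. Convergence, which is the usual headache for BCH over analytic groups, is a non-issue here because $\mun$ is nilpotent of class $\leq n-1$ and everything truncates after finitely many steps; the entire argument takes place within polynomial identities in the free nilpotent Lie algebra of class $d$ on $m$ generators, specialized by the Lie homomorphism sending the $i$-th generator to $\log B_i$.
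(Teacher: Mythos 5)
The paper does not supply its own proof of this statement; it is cited as the classical Baker--Campbell--Hausdorff formula, with Dynkin's closed form for $H_k$ appearing later as Lemma~\ref{lem:dynkin}. Your reconstruction---bootstrapping the multivariate formula from the bivariate one by induction on $m$, truncating at degree $d$ via nilpotency, and rewriting to left-normed brackets using the Jacobi identity and anticommutativity---is a correct and standard alternative route. Two small points to tighten. First, ``over $\mun$, Lemma~\ref{lem:nilclass} tells us that every bracket of length $>d$ vanishes'' is imprecise: $\mun$ itself has nilpotency class $n-1$; what Lemma~\ref{lem:nilclass} gives is that iterated brackets of elements of $\log G$ of length $d+1$ (and hence any length $\geq d+1$) vanish, and that is what licenses the specialization from the free nilpotent Lie algebra of class $d$ on $m$ generators that you invoke at the end. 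Second, since the paper later asserts a specific closed form (Dynkin's) for the very same $H_k$, it is worth noting explicitly that the $H_k$ your induction produces coincide with the canonical BCH coefficients: the formal series $\log(\exp C_1 \cdots \exp C_m)$ in the free Lie algebra has a unique decomposition into homogeneous parts, and your inductive substitution computes precisely that decomposition, so no ambiguity arises. With these glosses the argument is complete.
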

In theory, one can compute the expressions $H_{k}$ effectively using recursion (see, for example \cite{casas2009efficient}). 
An explicit expression for the term $H_k$ was discovered by Dynkin~\cite{loday1992serie} (see Section~\ref{sec:techthm}).
However, as $k$ grows, these expressions quickly become very complicated.
For example, here are the explicit expressions of the first two terms.
\begin{align}\label{eq:defH23}
H_{2}(C_1, \ldots, C_m) & = \frac{1}{2} \sum_{i < j} [C_{i}, C_{j}], \nonumber \\
H_{3}(C_1, \ldots, C_m) & = \sum_{i < j < k} \left(\frac{[C_i, [C_j, C_k]]}{3} + \frac{[[C_i, C_k], C_j]}{6}\right) +  \sum_{i < j} \frac{[C_i, [C_i, C_j]] + [[C_i, C_j], C_j]}{12}.
\end{align}

\section{Polynomial time algorithm for Theorem~\ref{thm:invn}}\label{sec:alg}
In this section, we exhibit the algorithm that proves the main result of this paper (Theorem~\ref{thm:invn}).
In order to describe our algorithm, we need to introduce the following notation.
Let $\mH$ be a finite set of elements in the Lie algebra $\mun$. For any $k \geq 1$, denote
\[
\Lie_{\geq k}(\mH) \coloneqq \left\langle \bigcup\nolimits_{i \geq k} [\mH]_{i} \right\rangle_{\Q}.
\]
That is, $\Lie_{\geq k}(\mH)$ is the linear space spanned by the set of all ``left bracketing'' of length \emph{at least} $k$ of elements in $\mH$.
By Lemma~\ref{lem:nilclass}, if a unitriangular matrix group $G$ has nilpotency class $d$, then for any $\mH \subseteq \log G$, we have $\Lie_{\geq d+1}(\mH) = \{0\}$, and $\Lie_{\geq k}(\mH) = \langle[\mH]_{k}\rangle_{\Q} + \langle[\mH]_{k+1}\rangle_{\Q} + \cdots + \langle[\mH]_{d}\rangle_{\Q}$.
We have thus an ascending series of linear spaces $\{0\} = \Lie_{\geq d+1}(\mH) \subseteq \Lie_{\geq d}(\mH) \subseteq \cdots \subseteq \Lie_{\geq 1}(\mH) \subseteq \mun$, such that $[\Lie_{\geq i}(\mH), \Lie_{\geq j}(\mH)] \subseteq \Lie_{\geq i+j}(\mH)$.

\begin{exmpl}[label=exa:cont]\label{example:u4}
    We give a concrete example to show how the subspaces $\Lie_{\geq k}(\mH), k = d, \ldots, 2, 1,$ may look like.
    Let $G = \UT(4, \Q)$, so it has nilpotency class $d = 3$.
    Consider the Lie algebra 
    \[
    \mathfrak{u}(4) = 
    \left\{
    \begin{pmatrix}
        0 & * & * & *\\
        0 & 0 & * & *\\
        0 & 0 & 0 & * \\
        0 & 0 & 0 & 0 \\        
    \end{pmatrix},
    \text{ where $*$ are entries in $\Q$}
    \right\}.
    \]
    It is a $\Q$-linear space of dimension six.
    Let $\mG = \{A_1, A_2, A_3\}$, where
    \begin{equation*}
        A_1 = 
    \begin{pmatrix}
        1 & 2 & -1 & 1\\
        0 & 1 & 2 & 1\\
        0 & 0 & 1 & 2 \\
        0 & 0 & 0 & 1 \\
    \end{pmatrix},
    A_2 = 
    \begin{pmatrix}
        1 & -1 & -1 & 2\\
        0 & 1 & -1 & -1\\
        0 & 0 & 1 & 0 \\
        0 & 0 & 0 & 1 \\
    \end{pmatrix},
    A_3 = 
    \begin{pmatrix}
        1 & 0 & 3 & -1\\
        0 & 1 & 0 & 1\\
        0 & 0 & 1 & -1 \\
        0 & 0 & 0 & 1 \\
    \end{pmatrix}.
    \end{equation*}
    Let $\mH = \{\log A_1, \log A_2, \log A_3\}$. 
    In particular,
    \begin{equation}\label{eq:logexample}
    \log A_1 = 
    \begin{pmatrix}
        0 & 2 & -3 & \frac{11}{3}\\
        0 & 0 & 2 & -1\\
        0 & 0 & 0 & 2 \\
        0 & 0 & 0 & 0 \\
    \end{pmatrix},
    \log A_2 = 
    \begin{pmatrix}
        0 & -1 & -\frac{3}{2} & \frac{3}{2}\\
        0 & 0 & -1 & -1\\
        0 & 0 & 0 & 0 \\
        0 & 0 & 0 & 0 \\
    \end{pmatrix},
    \log A_3 = 
    \begin{pmatrix}
        0 & 0 & 3 & \frac{1}{2}\\
        0 & 0 & 0 & 1\\
        0 & 0 & 0 & -1 \\
        0 & 0 & 0 & 0 \\
    \end{pmatrix}.
    \end{equation}
    We have $[\mH]_4 = \{0\}$ by applying Lemma~\ref{lem:nilclass} with $d = 3$.
    Moreover,
    \begin{align*}
        [\mH]_3 & = \big\{[[\log A_1, \log A_2], \log A_1], [[\log A_1, \log A_2], \log A_2], \ldots, [[\log A_3, \log A_2], \log A_2]\big\}, \\
        [\mH]_2 & = \big\{[\log A_1, \log A_2], [\log A_1, \log A_3], [\log A_2, \log A_3], [\log A_2, \log A_1] = -[\log A_1, \log A_2], \ldots \big\}, \\
        [\mH]_1 & = \big\{\log A_1, \log A_2, \log A_3\big\}.
    \end{align*}
    
    Then, we have $\Lie_{\geq 4}(\mH) = \{0\}$, $\Lie_{\geq 3}(\mH) = \langle[\mH]_{3}\rangle_{\Q}$, $\Lie_{\geq 2}(\mH) = \langle[\mH]_{2}\rangle_{\Q} + \langle[\mH]_{3}\rangle_{\Q}$ and $\Lie_{\geq 1}(\mH) = \langle[\mH]_{1}\rangle_{\Q} + \langle[\mH]_{2}\rangle_{\Q} + \langle[\mH]_{3}\rangle_{\Q}$.
    By direct computation, this yields
    \begin{multline}\label{eq:L321}
        \Lie_{\geq 3}(\mH) = 
    \left\{
    \begin{pmatrix}
        0 & 0 & 0 & a\\
        0 & 0 & 0 & 0\\
        0 & 0 & 0 & 0 \\
        0 & 0 & 0 & 0 \\        
    \end{pmatrix}
    \;\middle|\;
    a \in \Q
    \right\}, 
    \Lie_{\geq 2}(\mH) = 
    \left\{
    \begin{pmatrix}
        0 & 0 & 0 & a\\
        0 & 0 & 0 & b\\
        0 & 0 & 0 & 0 \\
        0 & 0 & 0 & 0 \\        
    \end{pmatrix}
    \;\middle|\;
    a, b \in \Q
    \right\}, \\
    \Lie_{\geq 1}(\mH) = 
    \left\{
    \begin{pmatrix}
        0 & c_1 & c_3 & a\\
        0 & 0 & c_1 & b\\
        0 & 0 & 0 & c_2 \\
        0 & 0 & 0 & 0 \\        
    \end{pmatrix}
    \;\middle|\;
    a, b, c_1, c_2, c_3 \in \Q
    \right\}.
    \end{multline}
    Hence in this example, $\Lie_{\geq 3}(\mH), \Lie_{\geq 2}(\mH), \Lie_{\geq 1}(\mH)$ are respectively subspaces of $\mathfrak{u}(4)$ of dimension one, two and five.
\end{exmpl}

Let $G$ be a subgroup of $\UT(n, \Q)$ of nilpotency class at most ten, and fix $\mG = \{A_1, \ldots, A_K\}$ to be a finite alphabet of elements in $G$.
For any vector $\bl = (\ell_1, \ldots, \ell_K) \in \Zp^K$, define
\[
\mG_{\supp(\bl)} \coloneqq \{A_i \mid A_i \in \mG, i \in \supp(\bl)\}
\]
as the set of matrices in $\mG$ whose index appears in the support set $\supp(\bl)$.

Recall that for a word $w \in \mG^+$, the matrix $\pi(w)$ is obtained by multiplying consecutively the matrices appearing in $w$.
The key ingredient of our algorithm is the following Theorem~\ref{thm:equivn}, which provides a criterion for the existence of a non-empty word $w \in \mG^+$ satisfying $\log \pi(w) = 0$ (equivalently, $\pi(w) = I$).
In particular, this provides a criterion for whether $I \in \sgmG$ (the Identity Problem), and can be extended to the computation of invertible subsets.

\begin{restatable}{thrm}{equivn}\label{thm:equivn}
Let $\mG = \{A_1, \ldots, A_K\}$ be a finite set of matrices in $\UT(n, \Q)$ that satisfies $[\log \mG]_{11} = \{0\}$.
Given a non-zero vector $\bl = (\ell_1, \ldots, \ell_K) \in \Zp^K$:
\begin{enumerate}[nosep,label = (\roman*)]
    \item If there exists a word $w \in \mG^{+}$ with $\PI^{\mG}(w) = \bl$ and $\log \pi(w) = 0$, then
    \begin{equation}\label{eq:condn}
        \sum_{i = 1}^K \ell_i \log A_i \in \Lie_{\geq 2}(\log \mG_{\supp(\bl)}).
    \end{equation}
    \item If $\bl$ satisfies \eqref{eq:condn}, then there exists a word $w \in \mG^{+}$ with $\PI^{\mG}(w) \in \Zpp \cdot \bl$, such that $\log \pi(w) = 0$.
\end{enumerate}
\end{restatable}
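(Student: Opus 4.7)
Part (i) follows immediately from Theorem~\ref{thm:BCH}. Writing any realising word as $w = B_1 \cdots B_m$ with $B_j \in \mG$ and $\PI^{\mG}(w) = \bl$, one has $\sum_j \log B_j = \sum_{i=1}^K \ell_i \log A_i$. If $\log \pi(w) = 0$, the BCH expansion rearranges to
\[
\sum_{i=1}^K \ell_i \log A_i = -\sum_{k=2}^{10} H_k(\log B_1, \ldots, \log B_m).
\]
Each $H_k$ is a $\Q$-linear combination of $k$-fold left-bracketings of elements of $\log \mG_{\supp(\bl)}$, so the right-hand side lies in $\Lie_{\geq 2}(\log \mG_{\supp(\bl)})$, which is condition~\eqref{eq:condn}.

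For part (ii) the plan is an induction on the filtration level. Set $\mH := \log \mG_{\supp(\bl)}$; I aim to produce, for each $k = 2, 3, \ldots, 11$, a word $w_k \in \mG^+$ satisfying $\PI^{\mG}(w_k) \in \Zpp \cdot \bl$ and $\log \pi(w_k) \in \Lie_{\geq k}(\mH)$. Since $[\log \mG]_{11} = \{0\}$ implies $\Lie_{\geq 11}(\mH) = \{0\}$, the word $w_{11}$ will satisfy $\pi(w_{11}) = I$, which is the desired conclusion. For the base case $k = 2$ one takes $w_2 = A_1^{\ell_1} A_2^{\ell_2} \cdots A_K^{\ell_K}$: the BCH formula writes $\log \pi(w_2)$ as the abelian part $\sum_i \ell_i \log A_i$ (in $\Lie_{\geq 2}(\mH)$ by hypothesis~\eqref{eq:condn}) plus a sum of brackets of length $\geq 2$ (in $\Lie_{\geq 2}(\mH)$ by construction).

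For the inductive step $k \to k+1$, set $X_k := \log \pi(w_k) \in \Lie_{\geq k}(\mH)$; the task reduces to exhibiting $v \in \mG^+$ with $\PI^{\mG}(v) \in \Zpp \cdot \bl$ and $\log \pi(v) \equiv -X_k \pmod{\Lie_{\geq k+1}(\mH)}$. Given such a $v$, put $w_{k+1} := w_k v$ and apply BCH to $\pi(w_k) \pi(v) = \pi(w_{k+1})$: since $X_k \in \Lie_{\geq k}(\mH)$ and $\log \pi(v) \in \Lie_{\geq 2}(\mH)$, every cross-bracket in the BCH correction lies in $\Lie_{\geq k+2}(\mH) \subseteq \Lie_{\geq k+1}(\mH)$, so $\log \pi(w_{k+1}) \equiv X_k + \log \pi(v) \equiv 0 \pmod{\Lie_{\geq k+1}(\mH)}$, closing the induction.

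The hard part is therefore the following key lemma: for every $k \in \{2, \ldots, 10\}$ and every $X \in \Lie_{\geq k}(\mH)$, there exist $N \in \Zpp$ and $v \in \mG^+$ with $\PI^{\mG}(v) = N \bl$ and $\log \pi(v) \equiv X \pmod{\Lie_{\geq k+1}(\mH)}$. The main obstacle is that the natural way to realise an iterated bracket $[X_{i_1}, \ldots, X_{i_k}]$ by a word is the group commutator $A_{i_1} A_{i_2} A_{i_1}^{-1} A_{i_2}^{-1}$ and its iterates, which require inverses and are not available inside $\mG^+$. The proof must instead exploit the Parikh constraint $\PI(v) \in \Zpp \cdot \bl$, which via~\eqref{eq:condn} automatically kills the abelianisation of $\log \pi(v)$ modulo $\Lie_{\geq 2}(\mH)$, together with sharp combinatorial identities for the Dynkin expressions of the $H_k$. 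This is precisely where, as emphasised in the introduction, a term-by-term verification of a property of $H_k$ is required for $2 \leq k \leq 10$.
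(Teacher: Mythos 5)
Part (i) of your proposal is essentially the paper's own proof and is correct.

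For part (ii), your inductive strategy on the filtration degree $k$ (producing $w_k$ with $\log\pi(w_k) \in \Lie_{\geq k}(\mH)$ and correcting modulo $\Lie_{\geq k+1}(\mH)$) is genuinely different from the paper's, and I believe it has a real gap. Your key lemma asserts: for every $X\in\Lie_{\geq k}(\mH)$ there is $v\in\mG^+$ with $\PI^\mG(v)\in\Zpp\bl$ and $\log\pi(v)\equiv X \pmod{\Lie_{\geq k+1}(\mH)}$. At $k=10$, since $\Lie_{\geq 11}(\mH)=\{0\}$, this says the inverse $\pi(w_{10})^{-1}$ is hit exactly by a Parikh-constrained word --- but nothing in the machinery supports working modulo $\Lie_{\geq k+1}$. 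The computer-verified positivity identities (Lemmas~\ref{lem:H5}--\ref{lem:H9}), which are what turn certain $\Qp$-cones of achievable $\log\pi(v)$ values into linear spaces, only hold modulo the much smaller ideal $\Lie_{\geq 2}(\Lie_{\geq 2}(\mH))$, not modulo $\Lie_{\geq k+1}(\mH)$; the paper's remark after Lemma~\ref{lem:H5} stresses that the extra term $\Lie_{\geq 2}(\Lie_{\geq 2}(\log\mH))$ is \emph{crucial} for the identity to hold at all. Consequently the paper does not induct on $\Lie_{\geq k}$; it instead uses the ``doubling'' filtration $\Lie_{\geq 2}(\mH) \supseteq \Lie_{\geq 2}(\Lie_{\geq 2}(\mH)) \supseteq \Lie_{\geq 2}^{(3)}(\mH) \supseteq \{0\}$, building three successive alphabets $\mG',\mG'',\mG'''$ (each of whose logs generate the next doubled subspace as a $\Qp$-cone, via Propositions~\ref{prop:lieperm}, \ref{prop:k}, \ref{prop:cone}, \ref{prop:linspacen}) and finishing with a positive $\Zp$-combination in the commutative final stage. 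Your approach would need a proof of the key lemma in the finer quotient $\Lie_{\geq k}/\Lie_{\geq k+1}$, and there is no support for the required positivity there; this is the missing step, not merely a deferred computation.
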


Part (i) of Theorem~\ref{thm:equivn} is relatively easy to prove:
\begin{proof}[Proof of part (i) of Theorem~\ref{thm:equivn}]
Let $w$ be a word with $\PI^{\mG}(w) = \bl$. 
Write $w = B_1 B_2 \cdots B_m$ where $B_i \in \mG, i = 1, \ldots, m$.
Regrouping by letters, we have $\sum_{i = 1}^K \ell_i \log A_i = \sum_{i = 1}^m \log B_i$.

If $\log \pi(w) = 0$, then by the BCH formula (Theorem~\ref{thm:BCH}), we have
\[
\sum_{i = 1}^m \log B_i + \sum_{k=2}^{n-1} H_k(\log B_1, \ldots, \log B_m) = \log(B_1 B_2 \cdots B_m) = 0.
\]
The higher order terms $H_k, k \geq n$ vanish because $[\log \mG]_{n} = \{0\}$ (a consequence of $\mG \subseteq \UT(n, \Q)$).
Therefore, $\sum_{i = 1}^K \ell_i \log A_i = \sum_{i = 1}^m \log B_i = - \sum_{k=2}^{n-1} H_k(\log B_1, \ldots, \log B_m)$.

Since the Parikh image of the word $B_1 B_2 \cdots B_m$ is $\bl$, the matrices $B_i$ all lie in the subset $\{A_i \mid i \in \supp(\bl)\}$ of $\mG$.
Therefore, $\log B_i \in \log \mG_{\supp(\bl)}$ for all $i$.
By Theorem~\ref{thm:BCH}, for all $k \geq 2$ we have $- H_k(\log B_1, \ldots, \log B_m) \in \big\langle[\{\log B_i \mid i = 1, \ldots, m\}]_k \big\rangle_{\Q} \subseteq \Lie_{\geq k}(\log \mG_{\supp(\bl)}) \subseteq \Lie_{\geq 2}(\log \mG_{\supp(\bl)})$.
Therefore, we have
$
\sum_{i = 1}^K \ell_i \log A_i = - \sum_{k=2}^{n-1} H_k(\log B_1, \ldots, \log B_m) \in \Lie_{\geq 2}(\log \mG_{\supp(\bl)}).
$
\end{proof}

Proving part~(ii) of Theorem~\ref{thm:equivn} is highly non-trivial and will be the main focus of Section~\ref{sec:techthm}.
We continue Example~\ref{example:u4} to give an intuition of the Condition~\eqref{eq:condn} in Theorem~\ref{thm:equivn}.

\renewcommand\thmcontinues[1]{continued}
\begin{exmpl}[continues=exa:cont]
Let $\mG$ be as in Example~\ref{example:u4}.
As an example, we show that $\bl = (1, 2, 2)$ satisfies Equation~\eqref{eq:condn}.
When $\bl = (1, 2, 2)$, we have $\supp(\bl) = \{1, 2, 3\}$, so $\Lie_{\geq 2}(\log \mG_{\supp(\bl)}) = \Lie_{\geq 2}(\log \mG)$ as defined in Equation~\eqref{eq:L321}.
Therefore,
    \[
    \sum_{i = 1}^3 \ell_i \log A_i
    =
    \log A_1 + 2 \log A_2 + 2 \log A_3
    =
    \begin{pmatrix}
            0 & 0 & 0 & \frac{23}{3}\\
            0 & 0 & 0 & -1\\
            0 & 0 & 0 & 0 \\
            0 & 0 & 0 & 0 \\
        \end{pmatrix}
        \in 
    \Lie_{\geq 2}(\log \mG_{\supp(\bl)}),
    \]
where $\log A_i, i = 1, 2, 3,$ are given in Equation~\eqref{eq:logexample}.
Hence in this example, $\bl$ satisfies Equation~\eqref{eq:condn}. 

\end{exmpl}

Note that finding solutions of Equation~\eqref{eq:condn} relies only on linear algebra.
Assuming Theorem~\ref{thm:equivn}, we can devise the following Algorithm~\ref{alg:invn} that computes the invertible subset of any finite set $\mG \subseteq G$.

\begin{algorithm}[ht]
\caption{Computing the invertible subset of $\mG$}
\label{alg:invn}
\begin{description}[nosep]
\item[Input:] 
A finite set of elements $\mG = \{A_1, \ldots, A_K\}$ in $G$.
\item[Output:] The invertible subset $\mG_{inv}$ of $\mG$.
\end{description}
\begin{enumerate}[nosep,label = Step~\arabic*]
    \item \textbf{Initialization.}
    Set $S \coloneqq \{1, \ldots, K\}$.
    \item\label{step:alg1step2} \textbf{Main loop.} Repeat the following
    \begin{enumerate}
        \item Represent the $\Q$-linear subspace of $\Q^K$: 
        \[
        V \coloneqq \left\{(\ell_1, \ldots, \ell_K) \in \Q^K \middle| \sum\nolimits_{i=1}^K \ell_i \log A_i \in \Lie_{\geq 2}(\{\log A_i \mid i \in S\}) \right\}
        \]
        as the solution set of homogeneous linear equations.
        \item Define $\Lambda \coloneqq \Zp^K \cap V$ and compute $\supp(\Lambda)$ using Lemma~\ref{lem:compsupp}.
        \item If $\supp(\Lambda) = S$, terminate the algorithm and return $\mG_{inv} = \{A_i \mid i \in S\}$. \\
        Otherwise let $S \coloneqq \supp(\Lambda)$ and continue.
    \end{enumerate}
\end{enumerate}
\end{algorithm}

\begin{proof}[Proof of Theorem~\ref{thm:invn} and proof of correctness of Algorithm~\ref{alg:invn} (assuming Theorem~\ref{thm:equivn})]
After each iteration of \ref{step:alg1step2}, the cardinality of $\supp(\Lambda)$ strictly decreases. 
Therefore, the algorithm terminates after at most $K$ iterations of \ref{step:alg1step2}.

Since $G$ has nilpotency class at most ten, by Lemma~\ref{lem:nilclass}, its subset $\mG$ satisfies $[\log \mG]_{11} = \{0\}$.
We start by showing that, when the algorithm terminates, every element of $\{A_i \mid i \in S\}$ has an inverse in the semigroup $\sgmG$.
When the algorithm terminates at \ref{step:alg1step2}(c), we have $\supp(\Lambda) = S$.
By the additivity of $\Lambda$ (that is, $\ba, \bb \in \Lambda \implies \ba+\bb \in \Lambda$), there exists a vector $\bl = (\ell_1, \ldots, \ell_K) \in \Lambda$ such that $\supp(\bl) = \supp(\Lambda) = S$.
This vector then satisfies
$
\sum_{i=1}^K \ell_i \log A_i \in \Lie_{\geq 2}(\{\log A_i \mid i \in \supp(\bl)\})
$
by the definition of $V$.
By Theorem~\ref{thm:equivn}(ii), this shows that there exists a non-empty word $w$, with $\PI^{\mG}(w) \in \Zpp \cdot \bl$ such that $\log \pi(w) = 0$ (that is, $\pi(w) = I$).
For any $i \in S$, since $\supp(\bl) = S$, the letter $A_i$ appears in the word $w$.
Write $w = w_1 A_i w_2$; then since $\pi(w_1 A_i w_2) = I$, we have $\pi(w_1) A_i\pi(w_2) = I$.
Hence, $A_i^{-1} = \pi(w_2) \pi(w_1) \in \sgmG \cup \{I\}$, so $A_i^{-1} \in \sgmG$.

We then show that for every matrix $A_i$ invertible in $\sgmG$, the index $i$ is in the set $S$ at the termination of the algorithm.
Suppose $A_i^{-1}$ is equal to $\pi(w)$, where $w$ is a non-empty word. Then the product of the word $w' = w A_i$ is equal to the identity, that is, $\log \pi(w') = 0$.
By Theorem~\ref{thm:equivn}(i), the Parikh image $\bl = \PI^{\mG}(w')$ satisfies
    $
        \sum_{i = 1}^K \ell_i \log A_i \in \Lie_{\geq 2}(\{\log A_i \mid i \in \supp(\bl)\}).
    $
    
We show that $\supp(\bl) \subseteq S$ is an invariant of the algorithm.
At initialization, we obviously have $\supp(\bl) \subseteq S$.
Before each iteration of \ref{step:alg1step2}(b), suppose we have $\supp(\bl) \subseteq S$, then
\[
    \sum_{i = 1}^K \ell_i \log A_i \in \Lie_{\geq 2}(\{\log A_i \mid i \in \supp(\bl)\}) \subseteq \Lie_{\geq 2}(\{\log A_i \mid i \in S\}).
\]
Hence $\bl \in \Lambda = \Zp^K \cap V$.
Consequently, $\supp(\bl) \subseteq \supp(\Lambda)$ at the beginning of \ref{step:alg1step2}(c), which shows that $\supp(\bl) \subseteq S$ still holds after the iteration of \ref{step:alg1step2}.
This invariant shows that $i \in \supp(\bl) \subseteq S$ by the end of the algorithm.
Combining with the previous implication, we conclude that by the end of the algorithm, $S$ is exactly the set of elements in $\mG$ with inverse in $\sgmG$.

For the complexity analysis, recall that the algorithm terminates after at most $K$ iterations of \ref{step:alg1step2}.
At each iteration of \ref{step:alg1step2}(b), the support $\supp(\Lambda)$ can be computed in polynomial time by Lemma~\ref{lem:compsupp}. 
The total input size of these linear programming instances is polynomial with respect to the total bit length of the matrix entries in $\mG$.
Indeed, a $\Q$-basis of $\Lie_{\geq 2}(\{\log A_i \mid i \in S\})$ is simply the set $\bigcup_{10 \geq k \geq 2} [\{\log A_i \mid i \in S\}]_{k}$, whose total bit length is of polynomial size in $\mG$.
From this, one can express $V$ as the solution set of a system of homogeneous linear equations whose total bit length is polynomial in $\mG$ (note that the total bit length of $\log A_i$ is also polynomial in $\mG$).
Therefore, the overall complexity of Algorithm~\ref{alg:invn} is polynomial with respect to the input $\mG$.
\end{proof}

\section{Proof of Theorem~\ref{thm:equivn}(ii)}\label{sec:techthm}
In this section we give the proof of Theorem~\ref{thm:equivn}(ii).
We first give an intuition of the proof by continuing Example~\ref{example:u4}.
This will illustrate some of the ideas necessary to prove the general case.

\renewcommand\thmcontinues[1]{final part}
\begin{exmpl}[continues=exa:cont]
Let $\mG$ be as in Example~\ref{example:u4}.
Let $\bl = (1, 2, 2)$.
We have already shown that $\bl$ satisfies Equation~\eqref{eq:condn},
so Theorem~\ref{thm:equivn}(ii) claims that there exists a word $w \in \mG^+$ with $\PI^{\mG}(w) \in \Zpp \cdot (1, 2, 2)$, such that $\log \pi(w) = 0$.
We illustrate here how to construct this word $w$ in two steps.
By slight abuse of notation we now write $\log A$ instead of $\log \pi(A)$ for any word $A \in \mG^+$.

\textbf{Step 1.} We find elements $A'_1, A'_2, A'_3$ in $\mG^+$, such that $\log A'_1, \log A'_2, \log A'_3$ generate the subspace $\Lie_{\geq 2}(\log \mG_{\supp(\bl)})$ as a \emph{$\Qp$-cone}.
The idea is to take
\begin{equation}\label{eq:choiceAprime}
A'_1 \coloneqq A_1^t A_2^{2t} A_3^{2t}, \quad A'_2 \coloneqq A_2^{2t} A_3^{2t} A_1^t, \quad A'_3 \coloneqq A_2^{2t} A_1^t A_3^{2t},
\end{equation}
for a suitable $t \in \N$.
Apply the BCH formula~\eqref{eq:BCH} with $B_1 \coloneqq A_1^t, B_2 \coloneqq A_2^{2t}, B_3 \coloneqq A_3^{2t}$, we obtain
\begin{multline}\label{eq:applyBCH}
    \log A'_1 = \log (A_1^t A_2^{2t} A_3^{2t}) = \log A_1^t + \log A_2^{2t} + \log A_3^{2t} + \sum\nolimits_{k = 2}^{3} H_k(\log A_1^t, \log A_2^{2t}, \log A_3^{2t}) \\
    = t \cdot (\log A_1 + 2 \log A_2 + 2 \log A_3) + \sum\nolimits_{k = 2}^{3} t^k \cdot H_k(\log A_1, 2\log A_2, 2 \log A_3).
\end{multline}
The last equality is due to $\log A^t = t \log A$ and because the term $H_k$ is a linear combination of $k$-iterations of Lie brackets.

The linear term $t \cdot (\log A_1 + 2 \log A_2 + 2 \log A_3)$ in \eqref{eq:applyBCH} falls in the subspace $\Lie_{\geq 2}(\log \mG_{\supp(\bl)})$ by Condition~\eqref{eq:condn}.
The non-linear terms $t^k \cdot H_k(\log A_1, 2\log A_2, 2 \log A_3), k = 2, 3,$ also fall in the subspace $\Lie_{\geq 2}(\log \mG_{\supp(\bl)})$ by Theorem~\ref{thm:BCH}.
Therefore, we have $\log A'_1 \in \Lie_{\geq 2}(\log \mG_{\supp(\bl)})$.
Similarly, $\log A'_2$ and $\log A'_3$ are also in $\Lie_{\geq 2}(\log \mG_{\supp(\bl)})$.

Using the exact expression~\eqref{eq:defH23} for the terms $H_2$ and $H_3$, we obtain that the expressions for $\log A'_1, \log A'_2$ and $\log A'_3$ are respectively
\begin{equation*}
\begin{pmatrix}
    0 & 0 & 0 & \frac{4}{3} t^3 + \frac{23}{3} t\\
    0 & 0 & 0 & 2 t^2 - t\\
    0 & 0 & 0 & 0 \\
    0 & 0 & 0 & 0 \\
\end{pmatrix},
\begin{pmatrix}
    0 & 0 & 0 & -\frac{8}{3} t^3 + 2 t^2 + \frac{23}{3} t\\
    0 & 0 & 0 & 2 t^2 - t\\
    0 & 0 & 0 & 0 \\
    0 & 0 & 0 & 0 \\
\end{pmatrix},
\text{ and }
\begin{pmatrix}
    0 & 0 & 0 & \frac{4}{3} t^3 + \frac{23}{3} t\\
    0 & 0 & 0 & - 2 t^2 - t\\
    0 & 0 & 0 & 0 \\
    0 & 0 & 0 & 0 \\
\end{pmatrix}.
\end{equation*}
We then choose $t = 10$. This choice is made so that $t$ is large enough for $\log A'_1, \log A'_2, \log A'_3$ to exhibit their ``asymptotic'' behaviour.
When $t = 10$, we have
\begin{equation}\label{eq:logAprime}
\log A'_1 = 
\begin{pmatrix}
    0 & 0 & 0 & 1410\\
    0 & 0 & 0 & 190\\
    0 & 0 & 0 & 0 \\
    0 & 0 & 0 & 0 \\
\end{pmatrix},
\log A'_2 = 
\begin{pmatrix}
    0 & 0 & 0 & -2390\\
    0 & 0 & 0 & 190\\
    0 & 0 & 0 & 0 \\
    0 & 0 & 0 & 0 \\
\end{pmatrix},
\log A'_3 = 
\begin{pmatrix}
    0 & 0 & 0 & 1410\\
    0 & 0 & 0 & -210\\
    0 & 0 & 0 & 0 \\
    0 & 0 & 0 & 0 \\
\end{pmatrix}.
\end{equation}
Then indeed we have
$
\langle \log A'_1, \log A'_2, \log A'_3 \rangle_{\Qp} = 
\Lie_{\geq 2}(\log \mG_{\supp(\bl)})
$,
which is proved by linear programming.
Furthermore, the Parikh images are
$
\PI^{\mG}(A'_1) = \PI^{\mG}(A'_2) = \PI^{\mG}(A'_3) = (10, 20, 20)
$.

\textbf{Step 2.} Consider the new alphabet $\mG' \coloneqq \{A'_1, A'_2, A'_3\}$. 
We now find a non-empty word $A'' \in \left(\mG'\right)^+$, such that $\log A'' \in \Lie_{\geq 2}(\Lie_{\geq 2}(\log \mG_{\supp(\bl)})) = \{0\}$.
Directly computing from~\eqref{eq:logAprime} yields
\begin{equation}\label{eq:poscomb}
    117 \cdot \log A'_1 + 282 \cdot \log A'_2 + 361 \cdot \log A'_3 = 0.
\end{equation}
Let 
$
A'' \coloneqq \left(A'_1\right)^{117} \cdot \left(A'_2\right)^{282} \cdot \left(A'_3\right)^{361}.
$
By the BCH formula~\eqref{eq:BCH}, we have
$
\log A'' = 117 \cdot \log A'_1 + 282 \cdot \log A'_2 + 361 \cdot \log A'_3 = 0.
$
This is because all the terms $H_k, k \geq 2$ in the BCH formula are in
\begin{equation}\label{eq:L4van}
\Lie_{\geq 2}(\Lie_{\geq 2}(\log \mG_{\supp(\bl)})) \subseteq \Lie_{\geq 4}(\log \mG_{\supp(\bl)}) = \{0\}.
\end{equation}
Furthermore, the Parikh image of $A''$ is
$
\PI^{\mG}(A'') = 117 \cdot \PI^{\mG}(A'_1) + 282 \cdot \PI^{\mG}(A'_2) + 361 \cdot \PI^{\mG}(A'_3) = 7600 \cdot (1, 2, 2)
$.
We have thus found the word $w = A''$ satisfying $\log \pi(w) = 0$, with Parikh image $7600 \cdot (1, 2, 2)$.
This concludes our example.
\end{exmpl}

The following subsections aim to formalize the idea exhibited in this example and provide a rigorous proof of Theorem~\ref{thm:equivn}(ii).
Here is an overview of the main difficulties in formalizing a proof.
\begin{enumerate}[nosep, wide, label = (\roman*)]
    \item\label{dif:1} In Equation~\eqref{eq:choiceAprime} we took a specific choice of $A'_1, A'_2, A'_3$.
    In the general case, we will use a similar idea of taking $A'_i$ to be words of the form $A_{i_1}^t A_{i_2}^t \cdots A_{i_m}^t$. However, the permutations $(i_1, i_2, \ldots, i_m)$ need to be chosen carefully.
    We need to show that there exist enough permutations so that the constructed elements $\log A'_1, \log A'_2, \ldots,$ generate the linear space $\Lie_{\geq 2}(\log \mG_{\supp(\bl)})$. We achieve this by proving a deep combinatorial property of the terms $H_k$ (Proposition~\ref{prop:lieperm}).
    \item\label{dif:2}
    Furthermore, $\log A'_1, \log A'_2, \ldots,$ need to generate $\Lie_{\geq 2}(\log \mG_{\supp(\bl)})$ as a \emph{cone}.
    The coefficients $(117, 282, 361)$ obtained in Equation~\eqref{eq:poscomb} happen to be all positive, but this is \emph{a priori} not always the case.
    We need to show that $0$ can always be written as a \emph{positive} combination of $\log A'_i$.
    This is proved by finding identities over the terms $H_k$ using computer assistance (Proposition~\ref{prop:k}).
    \item\label{dif:3} The exponent $t$ in Equation~\eqref{eq:choiceAprime} needs to be chosen carefully. In fact, we may even need to take several different $t$. Such $t$ are chosen using techniques from convex geometry (Proposition~\ref{prop:cone}).
    \item\label{dif:4} In the above example the nilpotency class of $G$ is three. This is the reason why in Step 2, Equation~\eqref{eq:L4van} holds, and the matrices $A'_1, A'_2, A'_3$ commute with each other.
    In the general case, we deal with groups of nilpotency class up to ten. Then, Equation~\eqref{eq:L4van} no longer holds.
    Hence, we need to repeat the above process for more steps. 
    In general, when $G$ has nilpotency class up to $2^d-1$, we need to repeat the process for $d$ steps (Subsection~\ref{subsec:fullproof}). 
\end{enumerate}

Thus, our formal proof of Theorem~\ref{thm:equivn}(ii) relies on the three following technical propositions. 
For $k \in \Zpp$, denote by $\Sym_k$ the permutation group of the set $\{1, \ldots, k\}$.

\begin{restatable}{prop}{lieperm}\label{prop:lieperm}
For every $k \geq 2$, there exists a function $\mu_k \colon \Sym_k \rightarrow \Z$, such that for any sequence of elements $C_1, \ldots, C_m, m \geq k,$ in the Lie algebra $\mun$ we have
\begin{equation}\label{eq:lieperm}
    [\ldots[[C_1, C_2], C_3], \ldots, C_k] = \sum\nolimits_{\sigma \in \Sym_k} \mu_k(\sigma) H_k(C_{\sigma(1)}, \ldots, C_{\sigma(k)}, C_{k+1}, \ldots, C_m).
\end{equation}
\end{restatable}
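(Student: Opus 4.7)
My approach is to work in the free Lie algebra $L$ over $\Q$ generated by $Y_1, \ldots, Y_m$, and prove the identity there; by the universal property of $L$, it then descends to $\mun$ upon substituting $Y_i \mapsto C_i$. The task becomes to find integers $\mu_k(\sigma)$ satisfying
\[
\sum_{\sigma \in \Sym_k} \mu_k(\sigma)\, a_\sigma H_k \;=\; L_k \;:=\; [\ldots[[Y_1, Y_2], Y_3], \ldots, Y_k] \quad \text{in } L,
\]
where $a_\sigma$ is the automorphism of $L$ that permutes $Y_1, \ldots, Y_k$ according to $\sigma$ and fixes $Y_{k+1}, \ldots, Y_m$.

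First I would apply Dynkin's explicit formula to express $H_k(Y_1, \ldots, Y_m)$ as a specific $\Q$-linear combination of length-$k$ left-iterated Lie brackets, and decompose $H_k = \sum_S H_k^{(S)}$ according to the multiset $S$ of variables used in each monomial. Since $a_\sigma H_k^{(S)}$ lies in the $\sigma(S)$-component, the target equation decouples into two conditions: (a) the multilinear contribution $\sum_\sigma \mu_k(\sigma)\, a_\sigma H_k^{(\{1,\ldots,k\})} = L_k$, and (b) the cancellations $\sum_\sigma \mu_k(\sigma)\, a_\sigma H_k^{(S)} = 0$ for every other multiset $S$. For (a), classical Lie-representation theory identifies the multilinear subspace of $L$ with the Lie module $\mathrm{Lie}_k$ of $\Sym_k$ (cyclic of dimension $(k-1)!$). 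The Dynkin--Specht--Wever theorem certifies $H_k^{(\{1,\ldots,k\})}$ as a nonzero element of $\mathrm{Lie}_k$ (its left-bracketing returns $k$ times itself), and the cyclic structure of $\mathrm{Lie}_k$ then yields that the $\Sym_k$-orbit of $H_k^{(\{1,\ldots,k\})}$ spans $\mathrm{Lie}_k$, so rational coefficients $\mu_k(\sigma)$ fulfilling (a) exist.

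The main obstacle is ensuring that the same $\mu_k$ simultaneously satisfies (b). I would address this by analyzing the combined linear system on $\mu_k \in \Q^{\Sym_k}$: for each non-multilinear $S$, the $\Sym_k$-submodule generated by $H_k^{(S)}$ has its own isotypic decomposition, typically distinct from that of $\mathrm{Lie}_k$ (either because $\mathrm{Stab}_{\Sym_k}(S)$ is nontrivial when $S$ contains indices greater than $k$, or because the repeated-variable structure forces the representation into different irreducibles). Choosing $\mu_k$ inside the $\mathrm{Lie}_k$-isotypic component of $\Q[\Sym_k]$ then kills (b) by Schur orthogonality while preserving (a). For integrality, I would rely on explicit calculation in low $k$ (namely $\mu_2(\id) = 1,\ \mu_2((12)) = -1$ for $k=2$; and $\mu_3(\id) = 1,\ \mu_3((12)) = -1,\ \mu_3((13)) = 1,\ \mu_3((132)) = -1$ with the remaining values zero for $k=3$) paired with a lattice argument: the rational solutions form a $\Sym_k$-equivariant sublattice of $\Q^{\Sym_k}$, and the integer structure of Dynkin's formula forces this lattice to contain an integer point realizing $L_k$ exactly (as opposed to a nonzero rational multiple). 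The simultaneous resolution of (a) and (b) with integer coefficients is the step I expect to require the most care.
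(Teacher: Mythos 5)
Your plan correctly isolates the core difficulty, but the mechanism you propose for condition (b) does not work. The claim that for every non-multilinear $S$ the $\Sym_k$-submodule generated by $H_k^{(S)}$ is isotypically disjoint from $\operatorname{Lie}_k$ is false in general, so Schur orthogonality gives no automatic cancellation. For a concrete failure take $k=3$, $m=4$, $S=\{1,2,4\}$: the stabilizer in $\Sym_3$ is $\langle(12)\rangle$, and the span of the multidegree components hit by the $\Sym_3$-orbit of $S$ is the induction $\operatorname{Ind}_{\langle(12)\rangle}^{\Sym_3}(\mathrm{triv}\oplus\mathrm{sgn})\cong\mathrm{triv}\oplus\mathrm{sgn}\oplus\mathrm{std}^{\oplus 2}$, which contains two copies of $\mathrm{std}\cong\operatorname{Lie}_3$; likewise for the repeated-index multiset $S=\{1,1,2\}$ the stabilizer is trivial and the orbit carries the full regular representation, which contains $\operatorname{Lie}_3$. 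Thus restricting $\mu$ to the $\operatorname{Lie}_k$-isotypic part of $\Q[\Sym_k]$ cannot, by itself, force $\sum_\sigma\mu(\sigma)\,a_\sigma H_k^{(S)}=0$; that vanishing is a genuine extra condition on the choice of $\mu$. The paper's proof constructs a specific $\mu_k$ by a recursion and then \emph{proves} precisely this cancellation as the combinatorial identity~\eqref{eq:sumsigma}, by induction on $k$; nothing of the sort follows from isotypic placement alone. You would also need, for step (a), that $\varphi_k(Y_1,\ldots,Y_k)$ is a \emph{cyclic generator} of $\operatorname{Lie}_k$, not merely a nonzero element: already $\operatorname{Lie}_4\cong S^{(3,1)}\oplus S^{(2,1,1)}$ is reducible with two distinct irreducibles, so being nonzero is not enough.

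The integrality step is a second unclosed gap. The rational $\mu$ satisfying (a) and (b) form an affine subspace of $\Q^{\Sym_k}$, and nothing you wrote forces it to contain an integer point; Dynkin's formula carries denominators $k^2\binom{k-1}{d(\sigma)}$, so there is no ambient integral structure to lean on, and rescaling a rational solution to clear denominators also rescales the target $L_k$. The paper sidesteps both problems at once by exhibiting $\mu_k$ explicitly with values in $\{-1,0,1\}$ and verifying~\eqref{eq:lieperm} for it directly through the three identities \eqref{eq:simplify}, \eqref{eq:penult}, \eqref{eq:final}.
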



\begin{restatable}{prop}{propk}\label{prop:k}
Let $k \leq 10$ and let $\mH \subset \UT(n, \Q)$ be a finite set of matrices for some $n \geq 2$.
Then there exist a non-negative integer $r$, positive rational numbers $\alpha_1, \ldots, \alpha_r$, as well as, for $s = 1, \ldots, r$, words $\bj_s = j_{s,1} j_{s,2} \cdots j_{s,m_s}$ in the alphabet $\mI = \{1, 2, \ldots, k+1\}$, such that $\PI^{\mI}(\bj_s) \in \{(1, \ldots, 1), (2, \ldots, 2)\}$ and
\begin{multline}\label{eq:propk}
    \sum_{\sigma \in \Sym_{k+1}} H_{k}(\log B_{\sigma(1)}, \ldots, \log B_{\sigma(k+1)})
    + \sum_{s = 1}^r \alpha_{s} \sum_{\sigma \in \Sym_{k+1}} H_k(\log B_{\sigma(j_{s,1})}, \ldots, \log B_{\sigma(j_{s,m_s})}) \\
    \in \Lie_{\geq k+1}(\log \mH) + \Lie_{\geq 2}(\Lie_{\geq 2}(\log \mH))
\end{multline}
for all $B_1, \ldots, B_{k+1} \in \UT(n, \Q)$ satisfying $\log B_i \in \Lie_{\geq 1}(\log \mH)$
and $\sum_{i = 1}^{k+1} \log B_i \in \Lie_{\geq 2}(\log \mH)$.
\end{restatable}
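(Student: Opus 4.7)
The plan is to reduce \eqref{eq:propk} to a universal linear identity in a truncated free Lie algebra and then verify that identity by computer algebra separately for each $k \in \{2, \ldots, 10\}$. By Dynkin's explicit formula, each summand $H_k(\log B_{i_1}, \ldots, \log B_{i_m})$ on the left hand side of \eqref{eq:propk} is a $\Q$-linear combination of iterated Lie brackets of length exactly $k$ in the $\log B_{i_j}$, so the entire statement lives inside the free Lie algebra $\mathfrak{L}$ on generators $x_1, \ldots, x_{k+1}$ truncated in degree $\leq k$. The right hand side $\Lie_{\geq k+1}(\log \mH) + \Lie_{\geq 2}(\Lie_{\geq 2}(\log \mH))$ has a corresponding universal counterpart, and it suffices to prove the statement at this universal level.

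Next, I would encode the hypothesis $\sum_{i=1}^{k+1}\log B_i \in \Lie_{\geq 2}(\log \mH)$ by formally substituting $x_{k+1} = -\sum_{i=1}^{k} x_i + z$, where $z$ represents an arbitrary element of $\Lie_{\geq 2}(\log \mH)$. Expanding each $H_k$ by multilinearity, every resulting bracket of length $k$ either contains no $z$ (so it is a bracket in $x_1, \ldots, x_k$ only) or contains at least one $z$; in the latter case its total ``weight'' is at least $k+1$ and the term lies in $\Lie_{\geq k+1}(\log \mH) + \Lie_{\geq 2}(\Lie_{\geq 2}(\log \mH))$, so it can be discarded. What remains is to show that a specific universal element $R_k \in \mathfrak{L}$ of homogeneous degree $k$ in $x_1, \ldots, x_k$ can be cancelled, modulo the image of $\Lie_{\geq 2}(\Lie_{\geq 2})$, by a positive-rational combination of symmetrised $H_k$-expressions indexed by words $\bj_s$ of Parikh image $(2, \ldots, 2)$. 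Using Proposition~\ref{prop:lieperm} and Dynkin's formula to rewrite every $H_k$-term in left-normed form and then projecting onto a fixed Hall basis of the degree-$k$ component of $\mathfrak{L}$ modulo the image of $\Lie_{\geq 2}(\Lie_{\geq 2})$, this becomes a linear feasibility problem: find $r$, $\alpha_s > 0$, and admissible words $\bj_s$ such that a prescribed Hall-basis vector equals $\sum_s \alpha_s v_{\bj_s}$, where each $v_{\bj_s}$ is explicitly computable from $\bj_s$.

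The final step is to solve this linear program by computer algebra for each $k \in \{2, \ldots, 10\}$ in turn: enumerate (a sufficient subset of) admissible words $\bj_s$ over $\{1, \ldots, k+1\}$ with Parikh image $(2, \ldots, 2)$, symbolically compute each $v_{\bj_s}$ via Dynkin's formula and Hall-basis reduction, and check feasibility over $\alpha_s \in \mathbb{Q}_{>0}$. The principal difficulty is computational rather than conceptual: at $k = 10$, Dynkin's explicit expression for $H_{10}$ has an enormous number of summands, the Hall basis of the degree-$10$ component of the free Lie algebra on ten generators is substantial, and the number of candidate words of Parikh image $(2, \ldots, 2)$ is $22!/2^{11}$. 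To keep the computation tractable, I would exploit the $\Sym_{k+1}$-symmetry already present in each $\sum_\sigma$, effectively working modulo orbit equivalence on Hall words, and prune the search over $\bj_s$ aggressively. This combinatorial blow-up at $k = 10$ is precisely why the main theorem stops at nilpotency class ten, as anticipated in footnote~\ref{foot:class}.
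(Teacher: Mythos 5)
Your proposal follows essentially the same route as the paper's: reduce via Dynkin's formula to a universal identity in the degree-$k$ part of a free Lie algebra, work modulo $\Lie_{\geq 2}(\Lie_{\geq 2})$ (the metabelian quotient) and modulo $\Lie_{\geq k+1}$, exploit the $\Sym_{k+1}$-symmetry, and settle the resulting linear feasibility problem by computer for each $k$. Two remarks on points you gloss over. First, the paper avoids computation altogether for half the cases: for even $k$ the antisymmetry $H_k(C_1,\ldots,C_m)=-H_k(C_m,\ldots,C_1)$ already makes the symmetrised sum vanish (so $r=0$ works), and for $k=3$ a short hand calculation shows the symmetrised $H_3$ collapses into $\Lie_{\geq 4}$; only $k=5,7,9$ need machine assistance, so the bottleneck is $k=11$ (needed for class $11$), not $k=10$ as you state. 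Second, your worry about a ``substantial'' Hall basis misses the punchline that makes the method tractable: after quotienting by $\Lie_{\geq 2}(\Lie_{\geq 2})$, symmetrising over $\Sym_{k+1}$, and using the constraint $\sum_i\log B_i\in\Lie_{\geq 2}$ to kill all set partitions with a singleton block, the entire invariant space collapses to a handful of symbols $\tM(P,c)$ indexed by integer partitions of $k$ with all parts $\geq 2$ and a marked part $c\neq\max(P)$ — dimensions $1$, $3$, $6$ for $k=5,7,9$. That collapse, not aggressive pruning of the $\binom{2(k+1)}{2,\ldots,2}$-sized search over words, is what makes the feasibility check a small rational linear-algebra problem. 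Your substitution $x_{k+1}=-\sum_{i\leq k}x_i+z$ also breaks the $\Sym_{k+1}$-symmetry that the paper's coarsening argument (Lemma~\ref{lem:effectmodk1}) preserves; both encode the same hypothesis, but the symmetric version is what lets the basis shrink so dramatically.
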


\begin{restatable}{prop}{cone}\label{prop:cone}
Let $V$ be a finite dimensional $\Q$-linear space.
Let $d$ be a positive integer, $\mI$ be a finite index set, and $\ba_{1i}, \ldots, \ba_{di}, i \in \mI$ be vectors in $V$.
For any $t \in \Zpp$ and $i \in \mI$, define
\[
P_i(t) \coloneqq t \cdot \ba_{1i} + t^2 \cdot \ba_{2i} + \cdots + t^d \cdot \ba_{di}.
\]
Suppose the following two conditions hold:
\begin{enumerate}[nolistsep, label=(\roman*)]
    \item The $\Qp$-cone $\mC_{d} \coloneqq \langle \ba_{di} \mid i \in \mI \rangle_{\Qp}$ is a linear space.
    \item For $k = d-1, d-2, \ldots, 1$, the inductively defined $\Qp$-cones $\mC_{k} \coloneqq \langle \ba_{ki} \mid i \in \mI \rangle_{\Qp} + \mC_{k+1}$ are linear spaces.
\end{enumerate}
Then the $\Qp$-cone $\langle P_i(t) \mid i \in \mI, t \in \Zpp \rangle_{\Qp}$ is equal to $\mC_1$.
\end{restatable}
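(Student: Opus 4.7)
The plan is to prove the non-trivial inclusion $\mC_1 \subseteq \langle P_i(t) \mid i \in \mI, t \in \Zpp \rangle_{\Qp}$ by a convex-duality argument carried out over $\R$, followed by a rationality transfer. The reverse inclusion is immediate since each $P_i(t)$ is a non-negative rational combination of vectors $\ba_{k,i} \in \mC_k \subseteq \mC_1$ and $\mC_1$ is a $\Q$-linear space. First I would observe that iterating the identity $\mC_k = \langle \ba_{k,i} \mid i \in \mI \rangle_{\Qp} + \mC_{k+1}$, together with the assumption that every $\mC_k$ is a linear space, forces $\mC_1 = \langle \ba_{k,i} \mid 1 \leq k \leq d,\; i \in \mI \rangle_{\Q}$, the full $\Q$-linear span of the coefficient vectors.

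The heart of the argument is to establish that the real convex cone $C_\R := \langle P_i(t) \mid i, t \rangle_{\Rp}$ inside $V_\R := V \otimes_\Q \R$ coincides with $\mC_1 \otimes_\Q \R$. I would prove this via duality: any functional $\phi \in V_\R^*$ which is non-negative on every generator $P_i(t)$ must vanish on $\mC_1 \otimes \R$. For fixed $j$, the polynomial $\phi(P_j(t)) = \sum_k t^k \phi(\ba_{k,j})$ is non-negative for all $t \in \Zpp$, so its leading coefficient is strictly positive whenever any $\phi(\ba_{k,j})$ is non-zero. Suppose $\phi$ does not vanish on $\mC_1 \otimes \R$, and let $k^*$ be the largest index $k$ for which $\phi(\ba_{k,i_0}) \neq 0$ for some $i_0 \in \mI$; then $\phi$ vanishes on $\mC_{k^*+1}$, and applying the leading-coefficient observation to each $j$ yields $\phi(\ba_{k^*,j}) \geq 0$ for all $j$. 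Using that $\mC_{k^*}$ is a linear space, I would write $-\ba_{k^*,i_0} = \sum_j \lambda_j \ba_{k^*,j} + z$ with $\lambda_j \geq 0$ and $z \in \mC_{k^*+1}$, then apply $\phi$ to both sides to obtain $-\phi(\ba_{k^*,i_0}) \geq 0$, contradicting the strict positivity of $\phi(\ba_{k^*,i_0})$.

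Once the duality lemma is in hand, the bipolar theorem gives $\overline{C_\R} = \mC_1 \otimes \R$, and since the closure is a linear subspace, the standard identity $\operatorname{relint}(C_\R) = \operatorname{relint}(\overline{C_\R})$ from convex analysis forces $C_\R$ itself to equal $\mC_1 \otimes \R$. The final step is a routine rationality descent: any $v \in \mC_1$ lies in $C_\R$, so there is an equation $v = \sum_l \mu_l P_{j_l}(t_l)$ with $\mu_l \in \Rp$; because all data are rational, this feasibility problem is defined by a rational polyhedron and hence admits a non-negative rational solution, yielding $v \in \langle P_i(t) \rangle_{\Qp}$. The step I expect to be most delicate is the duality argument itself: one must align the algebraic filtration $\mC_{k+1} \subseteq \mC_k$, the ``largest-index'' pigeonhole for $\phi$, and the linear-space hypothesis on each $\mC_k$ so that the asymptotic positivity of the polynomial $\phi(P_j(t))$ can be converted into a contradiction via a single application of the decomposition inside $\mC_{k^*}$.
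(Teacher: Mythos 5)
Your proposal is correct, and it reaches the conclusion by a genuinely different route from the paper. The paper proves, by downward induction on $k$, that $\langle P_i(t) \mid i,t \rangle_{\Qp} + \mC_k = \mC_1$; at each inductive step it fixes finitely many generators, supposes no suitable $t$ exists, picks a normalized separating functional $\bv_t$ for each $t$, and extracts a limit point $\bv_{lim}$ by compactness of the unit sphere, contradicting that $\mC_1$ is a linear space. You instead compute the dual cone $C_\R^*$ in one shot: the pigeonhole on the largest index $k^*$ with $\phi(\ba_{k^*,i_0})\neq 0$, together with the sign of the leading coefficient of the polynomial $\phi(P_j(t))$ and the linear-space hypothesis on $\mC_{k^*}$, forces $C_\R^* = (\mC_1\otimes\R)^\perp$; then the bipolar theorem gives $\overline{C_\R}=\mC_1\otimes\R$, the identity $\operatorname{relint}(C_\R)=\operatorname{relint}(\overline{C_\R})$ upgrades closure equality to set equality, and rationality of the defining data descends the real conic combination to a $\Qp$-combination. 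Your version avoids the induction and the ad hoc limit-point argument at the price of invoking bipolarity and Rockafellar's relative-interior lemma; the paper's version is more constructive (it produces finite generating sets of $P_i(t)$'s step by step) but is slightly looser in presentation (e.g., the normalized $\bv_t$ live in $V_\R$, not $V$). If you write this up, it would be worth spelling out the easy $\supseteq$ half of $C_\R^* = (\mC_1\otimes\R)^\perp$ and the standard fact that a rational linear feasibility system with a real non-negative solution has a rational one, since both are currently implicit.
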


This concludes the overview. 
In the following subsections we will gradually prove these technical propositions.
The intuition of Proposition~\ref{prop:lieperm} is as follows.
Theorem~\ref{thm:BCH} showed that in the BCH formula, the terms $H_k(\log B_1, \ldots, \log B_m)$ can be written as a linear combination of $k$-iterated Lie brackets $[\ldots[[\log B_{i_1}, \log B_{i_2}], \log B_{i_3}], \ldots, \log B_{i_k}]$.
Here, Proposition~\ref{prop:lieperm} shows that a converse of it is true:
for any $k \geq 2$, the $k$-iterated Lie bracket $[\ldots[[\log B_1, \log B_2], \log B_3], \ldots, \log B_k]$ can be written as a linear combination of expressions in $H_k$.

Proposition~\ref{prop:k} shows that for $k \leq 10$, one can find a linear combination with \emph{positive} coefficients of the terms $H_k$ that lies in $\Lie_{\geq k+1} + \Lie_{\geq 2}(\Lie_{\geq 2}(\cdot))$. (Note that \emph{a priori} $H_k$ lies in $\Lie_{\geq k}(\cdot)$.)
Proposition~\ref{prop:k} is the only one among the three above propositions that is limited by the nilpotency class.
This constitutes the main obstacle to generalizing Theorem~\ref{thm:equivn} to higher nilpotency classes.

Finally, Proposition~\ref{prop:cone} concerns only convex geometry and is responsible for finding a suitable $t$ from difficulty~\ref{dif:3}.

\subsection{Proof of Proposition~\ref{prop:lieperm}}
For a permutation $\sigma \in \Sym_k$, define $d(\sigma)$ to be the number of \emph{descents} in $\sigma$, that is, the number of $i \in  \{1, \ldots, k-1\}$ such that $\sigma(i) > \sigma(i+1)$.
In order to prove Proposition~\ref{prop:lieperm}, we need an explicit expression for the terms $H_k$.
This expression is provided by Dynkin\footnote{Dynkin originally only proved the bivariate case of Lemma~\ref{lem:dynkin}. It was later been generalized to the multivariate case without much difficulty.}:
\begin{lem}[Dynkin formula \cite{dynkin2000calculation}, {\cite[Proposition~3.4 and Proposition~4.2]{loday1992serie}}]\label{lem:dynkin}
We have
\begin{equation}\label{eq:Dynkin1}
H_k(C_{1}, \ldots, C_m) = \sum_{i_1 + \cdots + i_m = k} \frac{1}{i_1 ! \ldots i_m !} \varphi_k(\underbrace{C_1, \ldots, C_1}_{i_1}, \underbrace{C_2, \ldots, C_2}_{i_2}, \ldots, \underbrace{C_m, \ldots, C_m}_{i_m}),
\end{equation}
where the indices $i_1, \ldots, i_m$ are non-negative integers, and
\begin{equation}\label{eq:Dynkin2}
    \varphi_k(X_1, \ldots, X_k) = \sum_{\sigma \in \Sym_k} \frac{(-1)^{d(\sigma)}}{k^2 \binom{k-1}{d(\sigma)}} [\ldots[[X_{\sigma(1)}, X_{\sigma(2)}], X_{\sigma(3)}], \ldots, X_{\sigma(k)}].
\end{equation}
\end{lem}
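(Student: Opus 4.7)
The plan is to derive Dynkin's formula by combining the \emph{Dynkin--Specht--Wever} (DSW) theorem with a direct power-series expansion of the logarithm in the free associative algebra.

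First, I would work in the completed free associative $\Q$-algebra on \emph{distinct} formal variables $X_1, \ldots, X_k$ and consider the \emph{multilinear} component $M(X_1, \ldots, X_k)$ of $\log(e^{X_1} e^{X_2} \cdots e^{X_k})$ (the component of multi-degree $(1, \ldots, 1)$). By Theorem~\ref{thm:BCH}, $M$ is a homogeneous Lie polynomial of degree $k$. The DSW theorem asserts that any such element $P$ satisfies $P = \tfrac{1}{k}\rho(P)$, where $\rho$ is the linear map sending each associative monomial $X_{j_1} X_{j_2} \cdots X_{j_k}$ to the left-nested bracket $[\ldots[[X_{j_1}, X_{j_2}], X_{j_3}], \ldots, X_{j_k}]$. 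So it suffices to compute $M$ explicitly in the associative algebra and then apply $\tfrac{1}{k}\rho$.

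Second, I would expand $M$ directly: setting $u = e^{X_1} \cdots e^{X_k} - 1$ and using $\log(1+u) = \sum_{n \geq 1} \tfrac{(-1)^{n-1}}{n} u^n$, the multilinear degree-$k$ part works out to
\[
M(X_1, \ldots, X_k) \;=\; \sum_{n \geq 1} \frac{(-1)^{n-1}}{n} \sum_{(S_1, \ldots, S_n)} X_{s_{1,1}} X_{s_{1,2}} \cdots X_{s_{n, |S_n|}},
\]
where the inner sum is over ordered set partitions of $\{1, \ldots, k\}$ into $n$ non-empty blocks, and within each $S_\ell = \{s_{\ell,1} < \cdots < s_{\ell, |S_\ell|}\}$ the indices are listed in increasing order. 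Applying $\tfrac{1}{k}\rho$ to this sum yields $\tfrac{1}{k}$ times the left-nested bracket of each such word, producing an element of the free Lie algebra.

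Third, I would identify the coefficient of each left-nested bracket $[\ldots[[X_{\sigma(1)}, X_{\sigma(2)}], \ldots], X_{\sigma(k)}]$ for $\sigma \in \Sym_k$. Such a bracket arises from ordered set partitions whose concatenation produces the sequence $(\sigma(1), \ldots, \sigma(k))$; since each block must be strictly increasing in $\sigma$, the block boundaries must include every descent of $\sigma$. With $d = d(\sigma)$ descents, the number of valid $n$-block partitions is $\binom{k-1-d}{n-1-d}$, so the coefficient equals
\[
\frac{1}{k}\sum_{n = d+1}^{k}\frac{(-1)^{n-1}}{n}\binom{k-1-d}{n-1-d} \;=\; \frac{(-1)^d}{k^2 \binom{k-1}{d}},
\]
which I would evaluate using the Beta-function identity $\tfrac{1}{n} = \int_0^1 x^{n-1}\,dx$ to convert the alternating binomial sum into $(-1)^d \int_0^1 x^d (1-x)^{k-1-d}\,dx = (-1)^d B(d+1, k-d)$. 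This establishes $M = \varphi_k$, i.e., Equation~\eqref{eq:Dynkin2} in the multilinear case. To then obtain Equation~\eqref{eq:Dynkin1}, I would substitute $X_1, \ldots, X_{i_1} \mapsto C_1$, $X_{i_1+1}, \ldots, X_{i_1+i_2} \mapsto C_2$, and so on; the prefactor $\tfrac{1}{i_1!\cdots i_m!}$ appears because each $e^{C_j} = \sum_{i_j \geq 0} \tfrac{C_j^{i_j}}{i_j!}$, and the $i_j!$ orderings of the $i_j$ formal copies of $C_j$ each contribute an identical term in the multilinear sum.

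The main obstacle is the final passage from the multilinear case to the general one: one must verify rigorously that the sum $\sum_{(i_1, \ldots, i_m)} \tfrac{1}{i_1! \cdots i_m!}\, \varphi_k(\underbrace{C_1, \ldots}_{i_1}, \ldots, \underbrace{C_m, \ldots}_{i_m})$ reproduces every term of $H_k(C_1, \ldots, C_m)$ with exactly the correct combinatorial weight, reconciling the bookkeeping of repeated letters with the multinomial cancellations coming from the $\tfrac{1}{\prod i_{j,\ell}!}$ factors of the $e^{C_j}$ power-series expansion. The DSW theorem and the Beta-function sum are both classical and can be invoked as black boxes.
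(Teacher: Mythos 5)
The paper does not actually prove this lemma: it is stated as a classical result and attributed to Dynkin and to Loday (\cite{dynkin2000calculation}, \cite{loday1992serie}), so there is no in-paper proof to compare against. Your proposal is a genuine, essentially complete derivation of the cited formula, and it is the standard one. The three steps are all sound: (a) the multilinear component $M(X_1,\ldots,X_k)$ of $\log(e^{X_1}\cdots e^{X_k})$ is a Lie element by Theorem~\ref{thm:BCH}, so Dynkin--Specht--Wever gives $M=\tfrac1k\rho(M)$; (b) the multilinear part of $\log(1+u)$ is indeed the signed sum over ordered set partitions into increasing blocks, and the count of $n$-block cuttings of the word $\sigma(1),\ldots,\sigma(k)$ into increasing runs is $\binom{k-1-d}{n-1-d}$ because every descent gap must be cut; (c) the Beta-integral evaluation $\tfrac{1}{k}\sum_{n\geq d+1}\tfrac{(-1)^{n-1}}{n}\binom{k-1-d}{n-1-d}=\tfrac{(-1)^d}{k}B(d+1,k-d)=\tfrac{(-1)^d}{k^2\binom{k-1}{d}}$ checks out. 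The one step you flag as an obstacle — the passage to repeated letters — does work, but the cleanest justification is slightly different from the one you sketch: substitute $X_j\mapsto C_r$ for $j$ in the $r$-th consecutive block of size $i_r$; an ordered set partition $(S_1,\ldots,S_n)$ then specializes to the word $\prod_{\ell}C_1^{a^{(\ell)}_1}\cdots C_m^{a^{(\ell)}_m}$ with $a^{(\ell)}_r=\card(S_\ell\cap G_r)$ (the increasing order inside each block is what forces the letters into this normal form), and the number of ordered set partitions producing a fixed sequence $(a^{(1)},\ldots,a^{(n)})$ is $\prod_r i_r!/\prod_{\ell,r}a^{(\ell)}_r!$. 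Comparing with the direct expansion of $\log(e^{C_1}\cdots e^{C_m})$, whose multidegree-$(i_1,\ldots,i_m)$ terms carry the weights $1/\prod_{\ell,r}a^{(\ell)}_r!$, shows the specialized multilinear sum overcounts by exactly $i_1!\cdots i_m!$, which is precisely the prefactor in \eqref{eq:Dynkin1}. With that bookkeeping made explicit, the argument is complete.
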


Define recursively the following maps $\mu_k: \Sym_k \rightarrow \Z, k = 2, 3, \ldots$.
For $k = 2$, let $\mu_2(\id) = 1, \mu_2((12)) = -1$, where $\id$ is the constant permutation and $(12)$ is the permutation that swaps $1$ and $2$.
For $k \geq 3$, denote by $(j_1 j_2 \cdots j_m)$ the cyclic permutation that sends $j_i$ to $j_{i+1}$, $i = 1, \ldots, m-1$, and sends $j_m$ to $j_1$.
Suppose $\mu_{k-1}$ already defined, we then define
\begin{align}\label{eq:defmu}
\mu_{k}(\sigma) \coloneqq 
    \begin{cases}
    \mu_{k-1}(\sigma) \quad & k = \sigma(k) \\
    - \mu_{k-1}(\sigma \circ (12 \cdots k)) \quad & k = \sigma(1) \\
    0 \quad & k = \sigma(i), i = 2, \ldots, k-1.
    \end{cases}
\end{align}
In the first two cases, the permutation $\sigma$ or $\sigma \circ (12 \cdots k)$ fixes $k$, so they can be considered as elements in $\Sym_{k-1}$, hence $\mu_{k-1}(\sigma)$ is well defined.
For example, $\mu_3(\sigma) = 1$ when $\sigma = \id$ or $(13)$; $\mu_3(\sigma) = -1$ when $\sigma = (12)$ or $(132)$; and $\mu_3(\sigma) = 0$ otherwise.
We will show that, for this $\mu_k$, the Equation~\eqref{eq:lieperm} in Proposition~\ref{prop:lieperm} is satisfied:

\lieperm*
\begin{proof}
Take $\mu_k$ to be the function defined recursively in \eqref{eq:defmu}.
For every $j \geq 2$, there is a natural embedding $f_j: \Sym_j \hookrightarrow \Sym_{j+1}$, defined by $f_j(\sigma)(i) = \sigma(i), i = 1, \ldots, j$, $f_j(\sigma)(j+1) = j+1$.
It is easy to verify that under this natural embedding, $\mu_j$ and $\mu_{j+1}$ are identified, that is, $\mu_j = \mu_{j+1} \circ f_j$.
Therefore, we can denote by $\mu$ the map $\cup_{k \geq 2} \Sym_k \rightarrow \Z$ as $\mu(\sigma) = \mu_k(\sigma)$, where $\sigma \in \Sym_k$.
We prove Equation~\eqref{eq:lieperm} in three steps.

\begin{enumerate}[nolistsep, label = (\arabic*)]
    \item 
First, we simplify the right hand side of Equation~\eqref{eq:lieperm} by showing
\begin{equation}\label{eq:simplify}
\sum_{\sigma \in \Sym_k} \mu(\sigma) H_k(C_{\sigma(1)}, \ldots, C_{\sigma(k)}, C_{k+1}, \ldots, C_m) = \sum_{\sigma \in \Sym_k} \mu(\sigma) \varphi_k(C_{\sigma(1)}, \ldots, C_{\sigma(k)}),
\end{equation}
where $\varphi_k$ is defined in Lemma~\ref{lem:dynkin}.

Thanks to Lemma~\ref{lem:dynkin}, $H_k(C_1, \ldots, C_m)$ can be written as
\begin{multline}\label{eq:sumrewrite}
H_k(C_1, \ldots, C_m) = \\
\sum_{1\leq j_1 < j_2 < \cdots < j_k \leq m}\varphi_k(C_{j_1}, \ldots, C_{j_k})
+ \sum_{l = 2}^{k-1}\sum_{1\leq j_1 < j_2 < \cdots < j_l \leq m} H_{kl}(C_{j_1}, \ldots, C_{j_l}),
\end{multline}
where $H_{kl}(C_{j_1}, \ldots, C_{j_l})$ is some linear combination of elements in $[\{C_{j_1}, \ldots, C_{j_l}\}]_k$.
By abuse of notation, for $\sigma \in \Sym_k$ and $x > k$, we define $\sigma(x) = \sigma^{-1}(x) = x$.
For any $l = 2, \ldots, k-1$, we have
\begin{multline}\label{eq:sumordchange}
 \sum_{\sigma \in \Sym_k} \sum_{1\leq j_1 < j_2 < \cdots < j_l \leq m} \mu(\sigma) H_{kl}(C_{\sigma(j_1)}, \ldots, C_{\sigma(j_l)}) \\
=  \sum_{\overset{t_1, t_2, \ldots, t_l \in \{1, \ldots, m\}}{\text{pairwise distinct}}} H_{kl}(C_{t_1}, \ldots, C_{t_l}) \sum_{\overset{\sigma \in \Sym_k}{\sigma^{-1}(t_1) < \cdots < \sigma^{-1}(t_l)}} \mu(\sigma).
\end{multline}
We claim that, for any pairwise distinct $t_1, t_2, \ldots, t_l \in \{1, \ldots, m\}$, $l < k$, we have
\begin{equation}\label{eq:sumsigma}
    \sum_{\overset{\sigma \in \Sym_k}{\sigma^{-1}(t_1) < \cdots < \sigma^{-1}(t_l)}} \mu(\sigma) = 0.
\end{equation}
We show \eqref{eq:sumsigma} by induction on $k$.
When $k = 2$, by the definition of $\mu$, \eqref{eq:sumsigma} holds.
Suppose \eqref{eq:sumsigma} holds for $k - 1$.
Denote by $c$ the cyclic permutation $(12 \cdots k)$, then by the recursive definition of $\mu$,
\begin{equation}\label{eq:sumkminusone}
\sum_{\overset{\sigma \in \Sym_k}{\sigma^{-1}(t_1) < \cdots < \sigma^{-1}(t_l)}} \mu(\sigma) = \sum_{\overset{\sigma \in \Sym_{k-1}}{\sigma^{-1}(t_1) < \cdots < \sigma^{-1}(t_l)}} \mu(\sigma) - \sum_{\overset{\sigma \in \Sym_{k-1}}{c \circ \sigma^{-1}(t_1) < \cdots < c \circ \sigma^{-1}(t_l)}} \mu(\sigma).
\end{equation}
Without loss of generality, suppose the sum on the left hand side is not empty.
That is, there exists at least one permutation $\sigma \in \Sym_k$ such that $\sigma^{-1}(t_1) < \cdots < \sigma^{-1}(t_l)$.
Since $\sigma^{-1} \in \Sym_k$ does not permute any $t_j$ with $t_j > k$, the elements of $\{t_1, \ldots, t_l\}$ which are larger than $k$ must appear after the elements which are smaller or equal to $k$, and must appear in increasing order.
In other words, there exists some $s \geq 1$, such that $t_i \leq k$ for all $i < s$, and $k < t_{s} < \cdots < t_{l}$. ($s$ could be $l + 1$, in which case $t_i \leq k$ for all $i = 1, \ldots, l$.)
Since $\sigma \in \Sym_k$ does not change the value of $t_{s}, \cdots, t_l$, one can discard them without changing the sum.
Hence, we suppose without loss of generality $t_1, \ldots, t_l \in \{1, \ldots, k\}$.
\begin{enumerate}[nolistsep]
    \item \textbf{If $t_i = k$ for some $i = 2, \ldots, l-1$.}
    Then no permutation $\sigma \in \Sym_{k-1}$ can satisfy $\sigma^{-1}(t_1) < \sigma^{-1}(t_i) = k < \sigma^{-1}(t_l)$ or $c \circ \sigma^{-1}(t_1) < c \circ \sigma^{-1}(t_i) = 1 < c \circ \sigma^{-1}(t_l)$.
    Hence, both sums on the right hand side of Equation~\eqref{eq:sumkminusone} are empty. The claim \eqref{eq:sumsigma} follows.
    \item \textbf{If $t_1 = k$.} Then no permutation $\sigma \in \Sym_{k-1}$ can satisfy $\sigma^{-1}(t_1) < \sigma^{-1}(t_i) = k < \sigma^{-1}(t_l)$, so the first sum on the right hand side of Equation~\eqref{eq:sumkminusone} is empty. As for the second sum, because $c \circ \sigma^{-1}(t_1) = c(k) = 1$, we have $c \circ \sigma^{-1}(t_1) < \cdots < c \circ \sigma^{-1}(t_l)$ if and only if $\sigma^{-1}(t_2) < \cdots < \sigma^{-1}(t_l)$. Hence, using the induction hypothesis on $t_2, \ldots, t_l \in \{1, \ldots, k\}$ yields
    \[
    \sum_{\overset{\sigma \in \Sym_{k-1}}{c \circ \sigma^{-1}(t_1) < \cdots < c \circ \sigma^{-1}(t_l)}} \mu(\sigma) = \sum_{\overset{\sigma \in \Sym_{k-1}}{\sigma^{-1}(t_2) < \cdots < \sigma^{-1}(t_l)}} \mu(\sigma) = 0.
    \]
    Therefore both sums on the right hand side of Equation~\eqref{eq:sumkminusone} equal zero. The claim \eqref{eq:sumsigma} follows.
    \item \textbf{If $t_l = k$.} Similar to the previous case, the second sum on the right hand side of Equation~\eqref{eq:sumkminusone} is empty. As for the first sum, because $\sigma^{-1}(t_l) = k$, we have $\sigma^{-1}(t_1) < \cdots < \sigma^{-1}(t_l)$ if and only if $\sigma^{-1}(t_1) < \cdots < \sigma^{-1}(t_{l-1})$. Hence, using the induction hypothesis on $t_1, \ldots, t_{l-1} \in \{1, \ldots, k\}$ shows the sum is zero. The claim \eqref{eq:sumsigma} follows.
    \item \textbf{If $t_i \neq k$ for all $i = 1, \ldots, l$.} Then $\sigma^{-1}(t_1) < \cdots < \sigma^{-1}(t_l)$ if and only if $c \circ \sigma^{-1}(t_1) < \cdots < c \circ \sigma^{-1}(t_l)$. Hence, the two sums on the right hand side of Equation~\eqref{eq:sumkminusone} are the same. The claim \eqref{eq:sumsigma} follows.
\end{enumerate}
Using the claim \eqref{eq:sumsigma} on Equation~\eqref{eq:sumordchange} yields
\begin{equation}\label{eq:sumHkl}
\sum_{\sigma \in \Sym_k} \sum_{1\leq j_1 < j_2 < \cdots < j_l \leq m} \mu(\sigma) H_{kl}(C_{\sigma(j_1)}, \ldots, C_{\sigma(j_l)}) = 0,
\end{equation}
and this combined with Equation~\eqref{eq:sumrewrite} yields
\begin{align}\label{eq:simp1}
& \sum_{\sigma \in \Sym_k} \mu(\sigma) H_k(C_{\sigma(1)}, \ldots, C_{\sigma(k)}, C_{k+1}, \ldots, C_m) \nonumber \\
= & \sum_{\sigma \in \Sym_k} \mu(\sigma) H_k(C_{\sigma(1)}, \ldots, C_{\sigma(m)}) \quad \text{(define $\sigma(s) = s$ for $\sigma \in \Sym_k$ and $s > k$)} \nonumber \\
= & \sum_{\sigma \in \Sym_k} \mu(\sigma) \sum_{1\leq j_1 < j_2 < \cdots < j_k \leq m}\varphi_k(C_{\sigma(j_1)}, \ldots, C_{\sigma(j_k)}) \quad \text{(by \eqref{eq:sumrewrite} and \eqref{eq:sumHkl})} \nonumber \\
= & \sum_{\overset{t_1, t_2, \ldots, t_k \in \{1, \ldots, m\}}{\text{pairwise distinct}}} \varphi_k(C_{t_1}, \ldots, C_{t_k}) \sum_{\overset{\sigma \in \Sym_k}{\sigma^{-1}(t_1) < \cdots < \sigma^{-1}(t_k)}} \mu(\sigma) \nonumber \\
= & \sum_{l = 0}^{k}\sum_{\overset{\overset{t_1, t_2, \ldots, t_l \in \{1, \ldots, k\}}{\text{pairwise distinct,}}}{k < t_{l+1} < \cdots < t_k \leq m}} \varphi_k(C_{t_1}, \ldots, C_{t_k}) \sum_{\overset{\sigma \in \Sym_k}{\sigma^{-1}(t_1) < \cdots < \sigma^{-1}(t_l)}} \mu(\sigma) 
\end{align}
Because Equation~\eqref{eq:sumsigma} holds for $l < k$, that is, the sum
$
\sum_{\overset{\sigma \in \Sym_k}{\sigma^{-1}(t_1) < \cdots < \sigma^{-1}(t_l)}} \mu(\sigma)
$
vanishes whenever $l < k$,
the above expression~\eqref{eq:simp1} is equal to
\[
\sum_{\overset{t_1, t_2, \ldots, t_k \in \{1, \ldots, k\}}{\text{pairwise distinct}}} \varphi_k(C_{t_1}, \ldots, C_{t_k}) \sum_{\overset{\sigma \in \Sym_k}{\sigma^{-1}(t_1) < \cdots < \sigma^{-1}(t_k)}} \mu(\sigma) = \sum_{\sigma \in \Sym_k} \mu(\sigma) \varphi_k(C_{\sigma(1)}, \ldots, C_{\sigma(k)}).
\]
We have hence shown Equation~\eqref{eq:simplify}:
\begin{equation*}
\sum_{\sigma \in \Sym_k} \mu(\sigma) H_k(C_{\sigma(1)}, \ldots, C_{\sigma(k)}, C_{k+1}, \ldots, C_m) = \sum_{\sigma \in \Sym_k} \mu(\sigma) \varphi_k(C_{\sigma(1)}, \ldots, C_{\sigma(k)}),
\end{equation*}

\item
The second step is to show
\begin{equation}\label{eq:penult}
\sum_{\sigma \in \Sym_k} \mu(\sigma) \varphi_k(C_{\sigma(1)}, \ldots, C_{\sigma(k)}) = \sum_{T \in \Sym_k} \frac{\mu(T)}{k} [\ldots[[C_{T(1)}, C_{T(2)}], C_{T(3)}], \ldots, C_{T(k)}].
\end{equation}
Using the exact expression for $\varphi_k$ in Lemma~\ref{lem:dynkin}, we have
\begin{align}\label{eq:inter}
    & \sum_{\sigma \in \Sym_k} \mu(\sigma) \varphi_k(C_{\sigma(1)}, \ldots, C_{\sigma(k)}) \nonumber \\*
    = & \sum_{\sigma \in \Sym_k} \sum_{\tau \in \Sym_k} \frac{(-1)^{d(\tau)} \mu(\sigma)}{k^2 \binom{k-1}{d(\tau)}} [\ldots[[C_{\sigma \circ \tau(1)}, C_{\sigma \circ \tau(2)}], C_{\sigma \circ \tau(3)}], \ldots, C_{\sigma \circ \tau(k)}] \nonumber \\*
    = & \sum_{T \in \Sym_k} [\ldots[[C_{T(1)}, C_{T(2)}], C_{T(3)}], \ldots, C_{T(k)}] \sum_{\sigma \in \Sym_k} \frac{(-1)^{d(\sigma^{-1} \circ T)} \mu(\sigma)}{k^2 \binom{k-1}{d(\sigma^{-1} \circ T)}}
\end{align}
We will compute the value of $\sum_{\sigma \in \Sym_k} \frac{(-1)^{d(\sigma^{-1} \circ T)} \mu(\sigma)}{k^2 \binom{k-1}{d(\sigma^{-1} \circ T)}}$ depending on the permutation $T$.
We show by induction on $k$ that
\begin{equation}\label{eq:indgoal}
\sum_{\sigma \in \Sym_k} \frac{(-1)^{d(\sigma^{-1} \circ T)} \mu(\sigma)}{k^2 \binom{k-1}{d(\sigma^{-1} \circ T)}} = \frac{\mu(T)}{k}.
\end{equation}
When $k = 2$, by direct computation, $\sum_{\sigma \in \Sym_k} \frac{(-1)^{d(\sigma^{-1} \circ T)} \mu(\sigma)}{k^2 \binom{k-1}{d(\sigma^{-1} \circ T)}}$ is equal to $\frac{1}{2}$ if $T = \id$ and to $- \frac{1}{2}$ if $T = (12)$.
This matches the values of $\frac{\mu(T)}{k}$.
If $k \geq 3$, suppose \eqref{eq:indgoal} proven for $k-1$.
Again denote by $c$ the cyclic permutation $(12 \cdots k)$, by the recursive definition of $\mu$ we have
    \begin{equation}\label{eq:twosums}
        \sum_{\sigma \in \Sym_k} \frac{(-1)^{d(\sigma^{-1} \circ T)} \mu(\sigma)}{k^2 \binom{k-1}{d(\sigma^{-1} \circ T)}} = \sum_{\sigma \in \Sym_{k-1}} \frac{(-1)^{d(\sigma^{-1} \circ T)} \mu(\sigma)}{k^2 \binom{k-1}{d(\sigma^{-1} \circ T)}} - \sum_{\sigma \in \Sym_{k-1}} \frac{(-1)^{d(c \circ \sigma^{-1} \circ T)} \mu(\sigma)}{k^2 \binom{k-1}{d(c \circ \sigma^{-1} \circ T)}}.
    \end{equation}
\begin{enumerate}[nolistsep]
    \item \textbf{If $T(i) = k$ for some $i = 2, \ldots, k-1$.}
    We claim that $d(\sigma^{-1} \circ T) = d(c \circ \sigma^{-1} \circ T)$ for all $\sigma \in \Sym_{k-1}$.
    In fact, for $\sigma \in \Sym_{k-1}$, we have $\sigma^{-1} \circ T(i) = k$ and $c \circ \sigma^{-1} \circ T(i) = 1$.
    Therefore $\sigma^{-1} \circ T(i) > \sigma^{-1} \circ T(i+1)$, $\sigma^{-1} \circ T(i) > \sigma^{-1} \circ T(i-1)$, whereas $c \circ \sigma^{-1} \circ T(i) < c \circ \sigma^{-1} \circ T(i+1)$, $c \circ \sigma^{-1} \circ T(i) < c \circ \sigma^{-1} \circ T(i-1)$.
    And for $j \neq i-1, i$, we have $\sigma^{-1} \circ T(j) > \sigma^{-1} \circ T(j+1)$ if and only if $c \circ \sigma^{-1} \circ T(j) > c \circ \sigma^{-1} \circ T(j+1)$.
    This shows $d(\sigma^{-1} \circ T) = d(c \circ \sigma^{-1} \circ T)$.
    Hence, the two sums on the right hand side of \eqref{eq:twosums} are equal, and $\sum_{\sigma \in \Sym_k} \frac{(-1)^{d(\sigma^{-1} \circ T)} \mu(\sigma)}{k^2 \binom{k-1}{d(\sigma^{-1} \circ T)}} = 0 = \frac{\mu(T)}{k}$. %
    
    \item \textbf{If $T(1) = k$.} Similar to the above discussion, we can show that $d(\sigma^{-1} \circ T) = d(c \circ \sigma^{-1} \circ T) + 1$. 
    Hence the right hand side of \eqref{eq:twosums} is equal to
    \begin{align*}
        & - \sum_{\sigma \in \Sym_{k-1}} \left(\frac{(-1)^{d(c \circ \sigma^{-1} \circ T)} \mu(\sigma)}{k^2 \binom{k-1}{d(c \circ \sigma^{-1} \circ T) + 1}} + \frac{(-1)^{d(c \circ \sigma^{-1} \circ T)} \mu(\sigma)}{k^2 \binom{k-1}{d(c \circ \sigma^{-1} \circ T)}} \right) \\
        = & \;- \sum_{\sigma \in \Sym_{k-1}} \frac{(-1)^{d(c \circ \sigma^{-1} \circ T)} \mu(\sigma)}{k(k-1) \binom{k-2}{d(c \circ \sigma^{-1} \circ T)}} \\
        = & \; \frac{-(k-1)}{k} \sum_{\sigma \in \Sym_{k-1}} \frac{(-1)^{d(c \circ \sigma^{-1} \circ T)} \mu(\sigma)}{(k-1)^2 \binom{k-2}{d(c \circ \sigma^{-1} \circ T)}}
    \end{align*}
    We claim that $d(c \circ \sigma^{-1} \circ T) = d(\sigma^{-1} \circ T \circ c)$.
    This is because $\sigma^{-1} \circ T \circ c (k-1) < \sigma^{-1} \circ T \circ c (k) = k$, $1 = c \circ \sigma^{-1} \circ T(1) < c \circ \sigma^{-1} \circ T(2)$, and $c \circ \sigma^{-1} \circ T(i+1) > c \circ \sigma^{-1} \circ T(i)$ if and only if $\sigma^{-1} \circ T \circ c (i) > \sigma^{-1} \circ T \circ c (i-1)$, for $i = 2, 3, \ldots, k-1$.
    Hence,
    \begin{alignat*}{2}
        & \frac{-(k-1)}{k} \sum_{\sigma \in \Sym_{k-1}} \frac{(-1)^{d(c \circ \sigma^{-1} \circ T)} \mu(\sigma)}{(k-1)^2 \binom{k-2}{d(c \circ \sigma^{-1} \circ T)}} \\
        = & \; \frac{- (k-1)}{k} \sum_{\sigma' \in \Sym_{k-1}^c} \frac{(-1)^{d(\sigma^{-1} \circ T \circ c)} \mu(\sigma)}{(k-1)^2 \binom{k-2}{d(\sigma^{-1} \circ T \circ c)}} \\
        = & \; \frac{-(k-1)}{k} \frac{\mu(T \circ c)}{k-1} && \quad \text{(by induction hypothesis)}\\
        = & \; \frac{(k-1)}{k} \frac{\mu(T)}{k-1} && \quad \text{(by definition of $\mu$)}\\
        = & \; \frac{\mu(T)}{k}.
    \end{alignat*}

    \item \textbf{If $T(k) = k$.} Similar to the above discussion, we can show that $d(c \circ \sigma^{-1} \circ T) = d(\sigma^{-1} \circ T) + 1$. And hence the right hand side of \eqref{eq:twosums} is equal to 
    \[
    \frac{(k-1)}{k} \sum_{\sigma \in \Sym_{k-1}} \frac{(-1)^{d(\sigma^{-1} \circ T)} \mu(\sigma)}{(k-1)^2 \binom{k-2}{d(\sigma^{-1} \circ T)}} = \frac{(k-1)}{k} \frac{\mu(T)}{k-1} = \frac{\mu(T)}{k}
    \]
    by the induction hypothesis, where $T$ can be considered as an element in $\Sym_{k-1}$ since it stabilizes $k$.
\end{enumerate}
We have thus shown the claim \eqref{eq:indgoal}.
Putting this into Equation~\eqref{eq:inter} shows Equation~\eqref{eq:penult}:
\begin{equation*}
\sum_{\sigma \in \Sym_k} \mu(\sigma) \varphi_k(C_{\sigma(1)}, \ldots, C_{\sigma(k)}) = \sum_{T \in \Sym_k} \frac{\mu(T)}{k} [\ldots[[C_{T(1)}, C_{T(2)}], C_{T(3)}], \ldots, C_{T(k)}].
\end{equation*}

\item
The third and last step is to show\footnote{A direct way of proving Equation~\eqref{eq:final} is to use the Dynkin-Specht-Wever theorem \cite{dynkin2000calculation}, which states that if a non-commutative polynomial $f \in \Q\langle C_1, \ldots, C_k \rangle$ is \emph{Lie}, then one can replace all monomials $C_{i_1} C_{i_2} \cdots C_{i_k}$ by $[\ldots[C_{i_1}, C_{i_2}], \ldots, C_{i_k}]/k$ without changing its value.
Writing the right hand side of \eqref{eq:final} as an element in $\Q\langle C_1, \ldots, C_k \rangle$ gives $k \sum_{\sigma \in \Sym_k} \mu(\sigma) C_{\sigma(1)} C_{\sigma(2)} \cdots C_{\sigma(k)}$ (we can check this using the definition of $\mu$), which is equal to the left hand side by replacing the monomials $C_{\sigma(1)} C_{\sigma(2)} \cdots C_{\sigma(k)}$ by the Lie brackets $[\ldots[[C_{\sigma(1)}, C_{\sigma(2)}], C_{\sigma(3)}], \ldots, C_{\sigma(k)}]/k$.
Nevertheless, here we will give a self-contained proof without using the Dynkin-Specht-Wever theorem.}
\begin{equation}\label{eq:final}
    \sum_{T \in \Sym_k} \mu(T) [\ldots[[C_{T(1)}, C_{T(2)}], C_{T(3)}], \ldots, C_{T(k)}] = k [\ldots[[C_1, C_{2}], C_{3}], \ldots, C_{k}].
\end{equation}
First, using induction on $k$, we will show that
\begin{equation}\label{eq:indgoal2}
    \sum_{T \in \Sym_{k}} \mu(T) [\ldots[[C_{k+1}, C_{T(1)}], C_{T(2)}], \ldots, C_{T(k)}] = - [\ldots[[C_{1}, C_{2}], C_{3}], \ldots, C_{k+1}].
\end{equation}
The case where $k = 2$ is immediate.
Suppose Equation~\eqref{eq:indgoal2} hold for $k-1$, then
\begin{multline}\label{eq:twosums2}
    \sum_{T \in \Sym_{k}} \mu(T) [\ldots[[C_{k+1}, C_{T(1)}], C_{T(2)}], \ldots, C_{T(k)}] \\*
    = \sum_{T \in \Sym_{k-1}} \mu(T) [[\ldots[[C_{k+1}, C_{T(1)}], C_{T(2)}], \ldots, C_{T(k-1)}], C_{k}] \\*
    - \sum_{T \in \Sym_{k-1}} \mu(T) [\ldots[[C_{k+1}, C_{k}], C_{T(1)}], \ldots, C_{T(k-1)}].
\end{multline}
By the induction hypothesis, the first sum on the right hand side is equal to
\[
- [[[\ldots[[C_1, C_{2}], C_{3}], \ldots, C_{k-1}], C_{k+1}], C_k],
\]
and the second sum on the right hand side is equal to
\[
- [[\ldots[[C_1, C_{2}], C_{3}], \ldots, C_{k-1}], [C_{k+1}, C_k]].
\]
Using the Jacobi identity and the anticommutativity of Lie brackets, we have
\begin{multline*}
- [[[\ldots[[C_1, C_{2}], C_{3}], \ldots, C_{k-1}], C_{k+1}], C_k] + [[\ldots[[C_1, C_{2}], C_{3}], \ldots, C_{k-1}], [C_{k+1}, C_k]] \\
= - [[[\ldots[[C_1, C_{2}], C_{3}], \ldots, C_{k-1}], C_{k}], C_{k+1}].
\end{multline*}
Hence, Equation~\eqref{eq:twosums2} yields
\[
\sum_{T \in \Sym_{k}} \mu(T) [\ldots[[C_{k+1}, C_{T(1)}], C_{T(2)}], \ldots, C_{T(k)}] = - [[\ldots[[C_1, C_{2}], C_{3}], \ldots, C_{k}], C_{k+1}],
\]
concluding the proof by induction for Equation~\eqref{eq:indgoal2}.

Next, we will again use induction on $k$ to prove Equation~\eqref{eq:final}:
\begin{equation*}
    \sum_{T \in \Sym_k} \mu(T) [\ldots[[C_{T(1)}, C_{T(2)}], C_{T(3)}], \ldots, C_{T(k)}] = k [\ldots[[C_1, C_{2}], C_{3}], \ldots, C_{k}].
\end{equation*}
The case of $k = 2$ results from direct computation.
Suppose \eqref{eq:final} hold for $k-1$, then
\begin{alignat*}{2}
    & \sum_{T \in \Sym_k} \mu(T) [\ldots[[C_{T(1)}, C_{T(2)}], C_{T(3)}], \ldots, C_{T(k)}] \\
    = & \sum_{T \in \Sym_{k-1}} \mu(T) [[\ldots[[C_{T(1)}, C_{T(2)}], C_{T(3)}], \ldots, C_{T(k-1)}], C_k] \\
    & \quad - \sum_{T \in \Sym_{k-1}} \mu(T) [\ldots[[C_{k}, C_{T(1)}], C_{T(2)}], \ldots, C_{T(k-1)}] \\
    = & \; (k-1) [\ldots[[C_1, C_{2}], C_{3}], \ldots, C_{k}] \\
    & \quad - \sum_{T \in \Sym_{k-1}} \mu(T) [\ldots[[C_{k}, C_{T(1)}], C_{T(2)}], \ldots, C_{T(k-1)}] && \quad (\text{by induction hypothesis}) \\
    = & \; k [\ldots[[C_1, C_{2}], C_{3}], \ldots, C_{k}] && \quad (\text{by Equation~\eqref{eq:indgoal2} for } k - 1).
\end{alignat*}
We have thus shown Equation~\eqref{eq:final}.
\end{enumerate}
Combining the Equations~\eqref{eq:simplify}, \eqref{eq:penult} and \eqref{eq:final} obtained in the three steps gives us
\[
\sum_{\sigma \in \Sym_k} \mu(\sigma) H_k(C_{\sigma(1)}, \ldots, C_{\sigma(k)}, C_{k+1}, \ldots, C_m) = [\ldots[[C_1, C_2], C_3], \ldots, C_k].
\]
\end{proof}

\subsection{Proof of Proposition~\ref{prop:k}}
In this subsection we prove Proposition~\ref{prop:k}.
Again, the key is understanding the structure of the expressions for $H_k$.
For even $k$, the following lemma shows that the expression $H_k(C_1, \ldots, C_m)$ is ``antisymmetric'', and immediately yields Proposition~\ref{prop:k}.
\begin{lem}\label{lem:antisym}
When $k$ is even, we have
\[
H_k(C_1, \ldots, C_m) = - H_k(C_m, \ldots, C_1).
\]
\end{lem}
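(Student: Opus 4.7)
The plan is to exploit the basic inversion identity $\bigl(\exp(C_1)\exp(C_2)\cdots \exp(C_m)\bigr)^{-1} = \exp(-C_m)\cdots\exp(-C_1)$ together with $\log(A^{-1}) = -\log(A)$. Apply the BCH formula~\eqref{eq:BCH} to both products. On the one hand,
\[
\log\bigl(\exp(C_1)\cdots\exp(C_m)\bigr) = \sum_{i=1}^m C_i + \sum_{k \geq 2} H_k(C_1,\ldots,C_m),
\]
while on the other,
\[
\log\bigl(\exp(-C_m)\cdots \exp(-C_1)\bigr) = -\sum_{i=1}^m C_i + \sum_{k \geq 2} H_k(-C_m,\ldots,-C_1).
\]
Since these two logarithms are negatives of each other, the linear parts cancel and we are left with the identity
\[
\sum_{k \geq 2} H_k(-C_m,\ldots,-C_1) = -\sum_{k \geq 2} H_k(C_1,\ldots,C_m).
\]

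Next, I would invoke the structural description in Theorem~\ref{thm:BCH}: each $H_k$ is a fixed $\Q$-linear combination of $k$-fold iterated Lie brackets of its arguments, hence homogeneous of degree $k$. Therefore $H_k(-C_m,\ldots,-C_1) = (-1)^k H_k(C_m,\ldots,C_1)$. To isolate a single $H_k$ from the summed identity, I would substitute $C_i \mapsto tC_i$ for a scalar $t$ and match the coefficient of $t^k$; because the $H_k$ have pairwise distinct homogeneous degrees, this cleanly extracts
\[
(-1)^k H_k(C_m,\ldots,C_1) = -H_k(C_1,\ldots,C_m)
\]
for every $k \geq 2$. Specializing to even $k$ gives $H_k(C_m,\ldots,C_1) = -H_k(C_1,\ldots,C_m)$, which is the claim.

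The only technicality is that the BCH identity as stated in Theorem~\ref{thm:BCH} was phrased for matrices in a nilpotent group, while we want to treat the $C_i$ as arbitrary (possibly formal) elements. This is routine: one can either realise arbitrary $C_1,\ldots,C_m \in \mathfrak{u}(n)$ (for $n$ large enough that all relevant $k$-fold brackets survive) and apply the matrix identity directly, or work in the degree-$k$ quotient of the free Lie algebra over $\Q$ in $m$ generators, where both sides of the BCH formula are universal polynomials. I expect no genuine obstacle; the argument is entirely a consequence of the antisymmetry built into $(-1)^k$ combined with the homogeneity of each $H_k$.
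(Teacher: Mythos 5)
Your proposal is correct and takes essentially the same route as the paper: both rely on the inversion identity $\bigl(\exp(C_1)\cdots\exp(C_m)\bigr)^{-1} = \exp(-C_m)\cdots\exp(-C_1)$, the degree-$k$ homogeneity of $H_k$, and a scalar parameter $t$ to separate the graded pieces by comparing coefficients of $t^k$. The only cosmetic difference is that the paper introduces $t$ at the outset (taking $B_i = \exp(tC_i)$), while you first obtain the summed identity and scale afterward; the underlying argument is identical.
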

\begin{proof}
Define a new variable $t$.
Replacing $B_i$ by $\exp({t C_i})$ in the BCH formula~\eqref{eq:BCH}, we have
\begin{equation}\label{eq:BCHpos}
    \log(\exp({t C_1}) \cdots \exp({t C_m})) = t \sum_{i = 1}^m C_i + t^k \sum_{k=2}^{d-1} H_k(C_1, \ldots, C_m).
\end{equation}
Now, replace $B_i$ by $\exp({- t C_{m + 1 - i}})$, $i = 1, \ldots, m$, in the BCH formula~\eqref{eq:BCH}, we obtain
\begin{equation}\label{eq:BCHneg}
    \log(\exp({- t C_m}) \cdots \exp({- t C_1})) = - t \sum_{i = 1}^m C_i + (-t)^k \sum_{k=2}^{d-1} H_k(C_m, \ldots, C_1).
\end{equation}
Since $\log(\exp({t C_1}) \cdots \exp({t C_m})) = - \log(\exp({- t C_m}) \cdots \exp({- t C_1}))$, comparing the coefficients of $t^k$ in \eqref{eq:BCHpos} and \eqref{eq:BCHneg} yields
\[
H_k(C_1, \ldots, C_m) = - H_k(C_m, \ldots, C_1)
\]
for even $k$.
\end{proof}

Next, we need the following lemmas regarding the odd terms $H_3$, $H_5$, $H_7$ and $H_9$.
These correspond to Proposition~\ref{prop:k} for $k = 3, 5, 7, 9$.
\begin{lem}\label{lem:H3}
Let $\mH \subset \UT(n, \Q)$ be a finite set of matrices.
Given matrices $B_1, \ldots, B_m$ in $\UT(n, \Q)$ such that $\log B_i \in \Lie_{\geq 1}(\log \mH)$, $i = 1, \ldots, m$,
and $\sum_{i = 1}^m \log B_i \in \Lie_{\geq 2}(\log \mH)$,
then
\[
\sum_{\sigma \in \Sym_m} H_3(\log B_{\sigma(1)}, \ldots, \log B_{\sigma(m)}) \in \Lie_{\geq 4}(\log \mH).
\]
\end{lem}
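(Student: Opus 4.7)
The plan is to expand $H_3$ via the explicit formula in Equation~\eqref{eq:defH23} and symmetrize term by term. Write $C_i \coloneqq \log B_i$ and $S \coloneqq \sum_{i=1}^m C_i$; by hypothesis each $C_i \in \Lie_{\geq 1}(\log \mH)$ and $S \in \Lie_{\geq 2}(\log \mH)$. The strategy is to show that the four-term structure of $H_3$ symmetrizes to a single ``anisotropic'' expression in which one factor of $S$ can be pulled out, contributing the extra depth needed.

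First I would exchange the permutation sum with each of the four index sums appearing in $H_3$. A multinomial count -- each ordered triple of distinct indices is hit by $\binom{m}{3}(m-3)!$ pairs $(\sigma, i{<}j{<}k)$, and each ordered pair of distinct indices by $\binom{m}{2}(m-2)!$ pairs $(\sigma, i{<}j)$ -- rewrites the left-hand side as
\[
\sum_{\sigma \in \Sym_m} H_3(C_{\sigma(1)}, \ldots, C_{\sigma(m)}) = \tfrac{m!}{18}\, T_1 + \tfrac{m!}{36}\, T_2 + \tfrac{m!}{24}(T_3 + T_4),
\]
where $T_1 \coloneqq \sum_{a,b,c \text{ distinct}} [C_a,[C_b,C_c]]$, $T_2 \coloneqq \sum_{a,b,c \text{ distinct}}[[C_a,C_b],C_c]$, $T_3 \coloneqq \sum_{a\ne b}[C_a,[C_a,C_b]]$, and $T_4 \coloneqq \sum_{a\ne b}[[C_a,C_b],C_b]$. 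I would then apply inclusion--exclusion on the ``distinct'' constraints to re-express each $T_i$ in terms of $S$: the fully unconstrained sums collapse via $[S,[S,S]] = 0$ and $[[S,S],S] = 0$, while the diagonal corrections either vanish through $[C_a,C_a]=0$ or collect, using the anticommutativity identity $[C_a,[S,C_a]]=-[C_a,[C_a,S]]$, into expressions of the form $\sum_a [C_a,[C_a,S]]$. I expect this bookkeeping to yield $T_1 = T_2 = 0$ as identities in the free Lie algebra, and $T_3 = T_4 = \sum_a [C_a,[C_a,S]]$.

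With these simplifications the whole symmetrization reduces to the scalar multiple $\tfrac{m!}{12}\sum_a [C_a,[C_a,S]]$. Invoking the bracket inclusion $[\Lie_{\geq i}(\log\mH), \Lie_{\geq j}(\log\mH)] \subseteq \Lie_{\geq i+j}(\log\mH)$ recorded just before Example~\ref{example:u4}, each summand $[C_a,[C_a,S]]$ sits in $\Lie_{\geq 1+1+2}(\log\mH) = \Lie_{\geq 4}(\log\mH)$, which gives the lemma. The main obstacle I anticipate is the combinatorial bookkeeping -- tracking the exact multinomial coefficients and signs during inclusion--exclusion so that $T_1$ and $T_2$ really do cancel and $T_3$ and $T_4$ really do coincide. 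Once these free-Lie-algebra identities are verified, the depth bound is an immediate consequence of the hypothesis $S \in \Lie_{\geq 2}(\log\mH)$.
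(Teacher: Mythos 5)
Your proposal is correct and reaches exactly the paper's key identity $\sum_{\sigma} H_3 = \frac{m!}{12}\sum_i [C_i,[C_i,S]]$ by the same overall route: expand the explicit formula for $H_3$, symmetrize, cancel the cross terms, and apply the bracket-depth bound $[\Lie_{\geq 1}, [\Lie_{\geq 1}, \Lie_{\geq 2}]] \subseteq \Lie_{\geq 4}$. The only cosmetic difference is the organization of the cancellation: the paper groups the $H_{33}$ contributions into cyclic orbits and invokes the Jacobi identity, whereas you split into the raw bracket types $T_1,\dots,T_4$ and kill $T_1, T_2$ by anticommutative pairing (and indeed $T_1=T_2=0$ holds immediately by pairing $(a,b,c)$ with $(a,c,b)$, respectively $(b,a,c)$, with no inclusion–exclusion needed), which is, if anything, slightly more direct.
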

\begin{proof}
Denote $C_i \coloneqq \log B_i, i = 1, \ldots, m$, we will show the following identity
\begin{equation}\label{eq:H3}
\sum_{\sigma \in \Sym_m} H_3(C_{\sigma(1)}, \ldots, C_{\sigma(m)}) = \frac{m!}{12} \sum_{i=1}^m \left[C_i, \left[C_i, \sum_{j=1}^m C_j\right]\right].
\end{equation}
Write
\[
H_3(C_{\sigma(1)}, \ldots, C_{\sigma(m)}) = \sum_{i < j < k} H_{33}(C_{\sigma(i)}, C_{\sigma(j)}, C_{\sigma(k)}) + \sum_{i < j} H_{32}(C_{\sigma(i)}, C_{\sigma(j)}),
\]
where
\[
H_{33}(X, Y, Z) = \frac{1}{3}[X, [Y, Z]] + \frac{1}{6}[[X, Z], Y],
\]
\[
H_{32}(X, Y) = \frac{1}{12} ([X, [X, Y]] + [[X, Y], Y]).
\]
Using the Jacobi identity, we have
\begin{align*}
& H_{33}(C_{i}, C_{j}, C_{k}) + H_{33}(C_{j}, C_{k}, C_{i}) + H_{33}(C_{k}, C_{i}, C_{j}) \\
= & \; \frac{1}{3}\left([C_i, [C_j, C_k]] + [C_j, [C_k, C_i] + [C_k, [C_i, C_j]]]\right) \\
& + \frac{1}{6}\left([[C_i, C_j], C_k] + [[C_j, C_k], C_i] + [[C_k, C_i], C_j]\right) \\
= & \; 0
\end{align*}
for any $i,j,k$.
Similarly,
\[
H_{33}(C_{k}, C_{j}, C_{i}) + H_{33}(C_{j}, C_{i}, C_{k}) + H_{33}(C_{i}, C_{k}, C_{j}) = 0.
\]
Hence,
\begin{align*}
& \sum_{\sigma \in \Sym_m} \sum_{i < j < k} H_{33}(C_{\sigma(i)}, C_{\sigma(j)}, C_{\sigma(k)}) \\
= & \; \frac{m!}{6} \sum_{i < j < k} (H_{33}(C_{i}, C_{j}, C_{k}) + H_{33}(C_{j}, C_{k}, C_{i}) + H_{33}(C_{k}, C_{i}, C_{j})) \\
& + \frac{m!}{6} \sum_{i < j < k} (H_{33}(C_{k}, C_{j}, C_{i}) + H_{33}(C_{j}, C_{i}, C_{k}) + H_{33}(C_{i}, C_{k}, C_{j})) \\
= & \; 0.
\end{align*}
Whereas 
\begin{align*}
& \sum_{\sigma \in \Sym_m} \sum_{i < j} H_{3,2}(C_{\sigma(i)}, C_{\sigma(j)}) \\
= & \; \frac{m!}{2} \sum_{i \neq j} H_{3,2}(C_{i}, C_{j}) \\
= & \; \frac{m!}{2} \sum_{i \neq j} \left(\frac{1}{12}[C_i, [C_i, C_j]] + \frac{1}{12}[[C_i, C_j], C_j]\right) \\
= & \; \frac{m!}{2} \sum_{i=1}^m\sum_{j=1}^m \left(\frac{1}{12}[C_i, [C_i, C_j]] + \frac{1}{12}[[C_i, C_j], C_j]\right) \\
= & \; \frac{m!}{2} \sum_{i=1}^m \frac{1}{12}\left[C_i, \left[C_i, \sum_{j=1}^m C_j\right]\right] + \frac{m!}{2} \sum_{j=1}^m \frac{1}{12}\left[\left[\sum_{i=1}^m C_i, C_j\right], C_j \right] \\
= & \; \frac{m!}{12} \sum_{i=1}^m \left[C_i, \left[C_i, \sum_{j=1}^m C_j\right]\right].
\end{align*}
Adding up the two above expressions yields Equation~\eqref{eq:H3}.
Since $\log B_i \in \Lie_{\geq 1}(\log \mH)$ for all $i$ and $\sum_{i = 1}^m \log B_i \in \Lie_{\geq 2}(\log \mH)$, Equation~\eqref{eq:H3} yields
\begin{align*}
    & \sum_{\sigma \in \Sym_m} H_3(\log B_{\sigma(1)}, \ldots, \log B_{\sigma(m)}) \\
    = & \; \frac{m!}{12} \sum_{i=1}^m \left[\log B_i, \left[\log B_i, \sum_{j=1}^m \log B_j\right]\right] \\
    \in & \; \frac{m!}{12} \sum_{i=1}^m \left[\log B_i, \left[\log B_i, \Lie_{\geq 2}(\log \mH)\right]\right] \\
    \in & \; \Lie_{\geq 4}(\log \mH)
\end{align*}
\end{proof}

The following Lemmas~\ref{lem:H5}, \ref{lem:H7} and \ref{lem:H9} regarding $H_5, H_7, H_9$ are proven using computer assistance from the software SageMath \cite{sagemath}.
In what follows, we give a sketch of their proof.
Details of the full proof along with the algorithm used for computer assistance are given in Section~\ref{app:proofH}.
Links to the code can be found in the respective proofs.
\begin{restatable}{lem}{Hfive}\label{lem:H5}
Let $\mH \subset \UT(n, \Q)$ be a finite set of matrices.
There exists a permutation $(j_1, j_2, \ldots, j_{12})$ of the tuple $(1, 1, 2, 2, \ldots, 6, 6)$, such that
for any given set of matrices $B_1, \ldots, B_6$ in $\UT(n, \Q)$ with $\log B_i \in \Lie_{\geq 1}(\log \mH)$
and $\sum_{i = 1}^6 \log B_i \in \Lie_{\geq 2}(\log \mH)$, we have
\begin{multline}\label{eq:H5mod}
    \sum_{\sigma \in \Sym_{6}} H_5(\log B_{\sigma(1)}, \ldots, \log B_{\sigma(6)})
    + \sum_{\sigma \in \Sym_{6}} H_5(\log B_{\sigma(j_1)}, \ldots, \log B_{\sigma(j_{12})}) \\
    \in \Lie_{\geq 6}(\log \mH) + \Lie_{\geq 2}(\Lie_{\geq 2}(\log \mH)).
\end{multline}
Namely, we can take $(j_{1}, j_{2}, \ldots, j_{12}) = (1, 2, 3, 4, 4, 5, 5, 6, 6, 1, 2, 3)$.
\end{restatable}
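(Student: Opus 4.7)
My plan is to reduce Lemma~\ref{lem:H5} to a finite linear-algebra identity in a free Lie algebra, which is then verified by computer algebra (in the spirit of the paper's use of SageMath).

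\textbf{Dynkin expansion.} By Lemma~\ref{lem:dynkin}, each term $H_5(\log B_{\sigma(1)}, \ldots, \log B_{\sigma(m)})$ is an explicit $\mathbb{Q}$-linear combination of left-normed $5$-fold iterated brackets of the $\log B_i$. Applying this to both sums in~\eqref{eq:H5mod} yields an explicit $\mathbb{Q}$-linear combination of such brackets in $\log B_1, \ldots, \log B_6$, with the coefficient of each bracket depending only on the multiset and order-type of its index sequence, thanks to the $\Sym_6$ symmetrization.

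\textbf{Universal reformulation.} Let $L$ be the free Lie $\mathbb{Q}$-algebra on six generators $X_1, \ldots, X_6$, and let $P(X_1, \ldots, X_6) \in L$ be the homogeneous degree-$5$ Lie polynomial produced by the expansion. The hypothesis $\sum_i \log B_i \in \Lie_{\geq 2}(\log \mH)$ is encoded universally by introducing a fresh generator $S$ of weight $2$ and substituting $X_6 = S - (X_1 + \cdots + X_5)$, so that we now work in the weighted free Lie algebra on $X_1, \ldots, X_5, S$ with weights $1,1,1,1,1,2$. The lemma is equivalent to showing that the image of $P$ in this algebra lies in the subspace of weighted degree $\geq 6$ plus the subspace spanned by brackets $[U, V]$ with $U$ and $V$ both of weighted degree $\geq 2$.

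\textbf{Computer verification.} The weighted-degree-$5$ component of this algebra is a finite-dimensional $\mathbb{Q}$-vector space with an explicit basis, for instance a Hall basis. In SageMath I implement (i) the Dynkin formula for $H_5$, (ii) the $X_6$ substitution and the $\Sym_6$ symmetrization, (iii) reduction of iterated brackets to the chosen basis via the Jacobi identity and anticommutativity, and (iv) the quotient by the span of all $[U, V]$ with $U, V$ of weighted degree $\geq 2$. For the candidate permutation $(j_1, \ldots, j_{12}) = (1, 2, 3, 4, 4, 5, 5, 6, 6, 1, 2, 3)$ I then compute the image of $P$ and check that it is the zero vector, which is precisely~\eqref{eq:H5mod}.

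\textbf{Main obstacle.} The hardest part is not running a single verification but locating a permutation $(j_1, \ldots, j_{12})$ that makes the identity hold. The naive search space of permutations of the multiset $(1,1,2,2,\ldots,6,6)$ is very large, so the practical difficulty is to first quotient out the multiset symmetries and the relabelling symmetry among $\{B_1, \ldots, B_6\}$ and then sweep over representatives, testing each via the Hall-basis computation above until one succeeds. The specific witness stated in the lemma is extracted as the output of this search.
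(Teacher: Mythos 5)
Your proposal is correct and shares the paper's high-level strategy: reduce Lemma~\ref{lem:H5} via Dynkin's formula to a finite, computer-checkable identity in a Lie algebra modulo $\Lie_{\geq 6}(\log\mH) + \Lie_{\geq 2}(\Lie_{\geq 2}(\log\mH))$. Your one-directional claim (that vanishing of $P$ in the quotient implies the lemma) is exactly what is needed, and your verification that the substitution $X_i \mapsto \log B_i$, $S \mapsto \sum_i \log B_i$ maps weighted-degree-$\geq 6$ brackets into $\Lie_{\geq 6}$ and $[U,V]$ with $U,V$ of weight $\geq 2$ into $\Lie_{\geq 2}(\Lie_{\geq 2})$ is sound, since $[\Lie_{\geq a},\Lie_{\geq b}]\subseteq \Lie_{\geq a+b}$.

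However, your computational route differs genuinely from the paper's. You propose eliminating the linear constraint by introducing a fresh weight-$2$ generator $S$ and substituting $X_6 = S - \sum_{i<6} X_i$, then reducing in a Hall basis of the free Lie algebra on $X_1,\ldots,X_5,S$ and quotienting by $[U,V]$ with $\deg U,\deg V \geq 2$ at the end. The paper instead develops an explicit combinatorial machinery (Appendix~B): it works with $\Sym_{k+1}$-symmetrized sums $\Phi(S)$ and $M(S)$ indexed by set partitions, proves a reordering lemma for the metabelian quotient (Lemma~\ref{lem:effectmod22}), eliminates the constraint through a "coarsening" identity (Lemma~\ref{lem:effectmodk1} and its corollaries), and arrives at a compact basis of normal forms $\tM(P,c)$ indexed by partition--integer pairs. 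For $k=5$ this basis has a single element $\tM((3,2),2)$, so the whole verification collapses to checking that the two symmetrized $H_5$ sums are $+\tM((3,2),2)$ and $-\tM((3,2),2)$ respectively. Your Hall-basis approach buys conceptual simplicity (the substitution $X_6 = S - \sum X_i$ is a cleaner way to encode the hypothesis than the coarsening argument), but it does not exploit the $\Sym_6$-invariance up front and so works in a much larger ambient space before projecting. The paper's approach is more intricate to set up but dramatically smaller in practice, which matters as $k$ grows (the $\tM(P,c)$ spaces for $k=7,9$ still have only $3$ and $6$ elements). Either route is valid for $k=5$; make sure your implementation really does reduce modulo the full $\Lie_{\geq 2}(\Lie_{\geq 2})$ and not merely the derived-subalgebra commutator $[[L,L],[L,L]]$ — these agree in weighted degree $5$ after the $\Lie_{\geq 6}$ truncation, but the agreement is a small lemma you should state (it is the content of the paper's Lemma~\ref{lem:effectmod22}).
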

\begin{proof}[Sketch of proof of Lemma~\ref{lem:H5}]
For $x, y \in \mun$, denote $x \sim y$ if 
\[
x - y \in \Lie_{\geq 6}(\log \mH) + \Lie_{\geq 2}(\Lie_{\geq 2}(\log \mH)).
\]
The claim~\eqref{eq:H5mod} can be written as 
\begin{equation*}
    \sum_{\sigma \in \Sym_{6}} H_5(\log B_{\sigma(1)}, \ldots, \log B_{\sigma(6)})
    + \sum_{\sigma \in \Sym_{6}} H_5(\log B_{\sigma(j_1)}, \ldots, \log B_{\sigma(j_{12})}) \sim 0
\end{equation*}

By the Dynkin formula (Lemma~\ref{lem:dynkin}), the expressions
$
\sum_{\sigma \in \Sym_{6}} H_5(\log B_{\sigma(1)}, \ldots, \log B_{\sigma(6)})
$
and
$
\sum_{\sigma \in \Sym_{6}} H_5(\log B_{\sigma(j_1)}, \ldots, \log B_{\sigma(j_{12})})
$
can be expressed as a sum in the form of 
\begin{equation}\label{eq:def5alpha}
\sum_{\bj = (j_1, \ldots, j_5) \in \{1, \ldots, 6\}^5} \alpha_{\bj} \sum_{\sigma \in \Sym_{6}} \varphi_5(\log B_{\sigma(j_1)}, \ldots, \log B_{\sigma(j_5)}),
\end{equation}
where $\alpha_{\bj}$ are rational numbers.

Since $\sum_{i = 1}^6 \log B_i \in \Lie_{\geq 2}(\log \mH)$, for any tuple $\bj = (j_1, \ldots, j_5) \in \{1, \ldots, 6\}^5$, the expression $\sum_{\sigma \in \Sym_{6}} \varphi_5(\log B_{\sigma(j_1)}, \ldots, \log B_{\sigma(j_5)})$ is equivalent (under $\sim$) to a rational multiple of
\\
$\sum_{i \neq j} [[[[\log B_{i}, \log B_{j}], \log B_{j}], \log B_i], \log B_i]$.
(See Appendix~\ref{app:proofH} for detailed justification.)
In particular, using computer assistance, we can compute these rational multiples and show
\begin{align*}
    \sum_{\sigma \in \Sym_6} H_5(\log B_{\sigma(1)}, \ldots, \log B_{\sigma(6)})
    & \sim \sum_{i \neq j} [[[[\log B_{i}, \log B_{j}], \log B_{j}], \log B_i], \log B_i], \\
    \sum_{\sigma \in \Sym_6} H_5\big(\log B_{\sigma(j_1)}, \ldots, \log B_{\sigma(j_{12})}\big)
    & \sim - \sum_{i \neq j} [[[[\log B_{i}, \log B_{j}], \log B_{j}], \log B_i], \log B_i].
\end{align*}
This yields
\begin{equation*}
    \sum_{\sigma \in \Sym_{6}} H_5(\log B_{\sigma(1)}, \ldots, \log B_{\sigma(6)})
    + \sum_{\sigma \in \Sym_{6}} H_5(\log B_{\sigma(j_1)}, \ldots, \log B_{\sigma(j_{12})}) \sim 0.
\end{equation*}
The code for computer assistance can be found at \url{https://doi.org/10.6084/m9.figshare.20124146.v1}.
\end{proof}

\begin{rmk}
The added expression of $\Lie_{\geq 2}(\Lie_{\geq 2}(\log \mH))$ on the right hand side of Equation~\eqref{eq:H5mod} is crucial for its correctness.
In fact, we can consider Equation~\eqref{eq:H5mod} in the quotient Lie algebra $L \coloneqq \Lie_{\geq 1}(\log \mH) / \Lie_{\geq 2}(\Lie_{\geq 2}(\log \mH))$.
The Lie algebra $L$ is \emph{metabelian}, meaning $[[L,L],[L,L]] = 0$.
(Free) metabelian Lie algebras have significantly fewer dimensions compared to (free) Lie algebras having the same number of generators.
Moreover, free metabelian Lie algebras admit a relatively simple basis (sometimes called the \emph{Gr\"obner-Shirshov basis}) \cite{bokut1963basis}, making it computationally viable to find identities such as Equation~\eqref{eq:H5mod}. 
In our computer assisted proofs (see Appendix~\ref{app:proofH}), we are using a heavily modified version of this basis to compute Equation~\eqref{eq:H5mod} as well as Equations~\eqref{eq:H7mod} and \eqref{eq:H9mod} in the following lemmas.
\end{rmk}

\begin{restatable}{lem}{Hseven}\label{lem:H7}
Let $\mH \subset \UT(n, \Q)$ be a finite set of matrices.
There exist positive rational numbers $\alpha_1, \alpha_2$, as well as, for $s = 1, 2$, permutations $(j_{s,1}, j_{s,2}, \ldots, j_{s,16})$ of the tuple $(1, 1, 2, 2, \ldots, 8, 8)$, such that
for any given set of matrices $B_1, \ldots, B_{8}$ in $\UT(n, \Q)$ with $\log B_i \in \Lie_{\geq 1}(\log \mH)$
and $\sum_{i = 1}^{8} \log B_i \in \Lie_{\geq 2}(\log \mH)$, we have
\begin{multline}\label{eq:H7mod}
    \sum_{\sigma \in \Sym_{8}} H_7(\log B_{\sigma(1)}, \ldots, \log B_{\sigma(8)})
    + \sum_{s = 1}^2 \alpha_{s} \sum_{\sigma \in \Sym_{8}} H_7(\log B_{\sigma(j_{s,1})}, \ldots, \log B_{\sigma(j_{s,16})}) \\
    \in \Lie_{\geq 8}(\log \mH) + \Lie_{\geq 2}(\Lie_{\geq 2}(\log \mH)).
\end{multline}
Namely, we can take $\alpha_1 = \frac{1}{15}, \alpha_2 = \frac{8}{15}$, and 
\begin{align*}
   (j_{1,1}, j_{1,2}, \ldots, j_{1,16}) & = (1, 2, 3, 4, 5, 5, 6, 6, 7, 7, 8, 8, 1, 2, 3, 4), \\
   (j_{2,1}, j_{2,2}, \ldots, j_{2,16}) & = (1, 2, 3, 4, 5, 4, 6, 7, 1, 2, 8, 3, 5, 6, 7, 8).
\end{align*}
\end{restatable}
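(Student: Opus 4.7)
The plan is to mirror the strategy used in the proof of Lemma~\ref{lem:H5}, passing to the quotient where the calculation becomes tractable. Write
\[
N \coloneqq \Lie_{\geq 8}(\log \mH) + \Lie_{\geq 2}(\Lie_{\geq 2}(\log \mH)),
\]
and declare $x \sim y$ whenever $x - y \in N$. The goal is to verify that, for the stated values $\alpha_1 = \tfrac{1}{15}$, $\alpha_2 = \tfrac{8}{15}$ and the two given index tuples $(j_{s,1}, \ldots, j_{s,16})$,
\[
\sum_{\sigma \in \Sym_8} H_7(\log B_{\sigma(1)}, \ldots, \log B_{\sigma(8)}) + \sum_{s=1}^{2} \alpha_s \sum_{\sigma \in \Sym_8} H_7(\log B_{\sigma(j_{s,1})}, \ldots, \log B_{\sigma(j_{s,16})}) \sim 0
\]
whenever $\log B_i \in \Lie_{\geq 1}(\log \mH)$ for all $i$ and $\sum_{i=1}^{8} \log B_i \in \Lie_{\geq 2}(\log \mH)$.

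First I would apply the Dynkin formula (Lemma~\ref{lem:dynkin}) to each of the three symmetrised $H_7$-expressions, rewriting the whole left-hand side as a $\Q$-linear combination of terms of the form $\sum_{\sigma \in \Sym_8} \varphi_7(\log B_{\sigma(i_1)}, \ldots, \log B_{\sigma(i_7)})$, where each $\varphi_7$ is itself a signed sum of seven-fold left-bracketings. Next I would push each such term down into the metabelian quotient associated with $N$: modulo $\Lie_{\geq 2}(\Lie_{\geq 2}(\log \mH))$, any seven-fold iterated Lie bracket reduces to a linear combination of expressions $[\ldots[[X_1, X_2], X_3], \ldots, X_7]$ with all of $X_3, \ldots, X_7$ outside the ``inner bracket'', and the hypothesis $\sum_i \log B_i \in \Lie_{\geq 2}(\log \mH)$ allows us to replace any occurrence of an unbracketed sum of all eight variables by something in $N$. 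As in the proof of Lemma~\ref{lem:H5}, this step collapses each symmetrised $\varphi_7$ term to a rational multiple of a small, fixed family of ``canonical'' metabelian basis elements (the analogue of $\sum_{i\neq j}[[[[\log B_i, \log B_j], \log B_j], \log B_i], \log B_i]$), whose number is larger but still manageable because the metabelian constraint kills most of the free Lie algebra.

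Finally I would verify, with computer assistance in SageMath along the exact lines indicated in the proof sketch of Lemma~\ref{lem:H5} (and spelled out in Appendix~\ref{app:proofH}), that the rational coefficients attached to each canonical basis element add up to zero on the left-hand side for the specific choice of $\alpha_1, \alpha_2$ and the two index tuples. The main obstacle is purely computational: $|\Sym_8| = 40320$, the Dynkin expansion of each $H_7$ has many more terms than for $H_5$, and the metabelian basis in degree seven is substantially larger; the difficulty lies not in any conceptual step but in generating the expansions, choosing an efficient (Gröbner--Shirshov-style) basis for the free metabelian Lie algebra in degree seven on eight generators, and solving the resulting large sparse linear system to \emph{discover} the coefficients $\alpha_1, \alpha_2$ and the two permutations. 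Once the values are given, the remaining verification is a finite, mechanical check in the quotient.
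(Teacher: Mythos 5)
Your proposal follows the same strategy as the paper: define the equivalence relation modulo $N = \Lie_{\geq 8}(\log \mH) + \Lie_{\geq 2}(\Lie_{\geq 2}(\log \mH))$, expand each symmetrised $H_7$ via the Dynkin formula, reduce in the metabelian quotient to a small canonical basis (the paper uses the three elements $\tM((5,2),2)$, $\tM((4,3),3)$, $\tM((3,2,2),2)$, i.e.\ three symmetrised bracketings analogous to the single one appearing for $k=5$), and then verify by computer that the coefficients cancel for the stated $\alpha_1, \alpha_2$ and index tuples. This matches the paper's argument (Appendix~\ref{app:proofH}, Algorithm~\ref{alg:lincombrewrite}) in all essentials.
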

\begin{proof}[Sketch of proof of Lemma~\ref{lem:H7}]
Similar to Lemma~\ref{lem:H5}, define the equivalence relation 
\[
x \sim y \iff x - y \in \Lie_{\geq 8}(\log \mH) + \Lie_{\geq 2}(\Lie_{\geq 2}(\log \mH)).
\]
By the Dynkin formula (Lemma~\ref{lem:dynkin}), both the expressions
$\sum_{\sigma \in \Sym_{8}} H_7(\log B_{\sigma(1)}, \ldots, \log B_{\sigma(8)})$ and $\sum_{\sigma \in \Sym_{8}} H_7(\log B_{\sigma(j_1)}, \ldots, \log B_{\sigma(j_{16})})$
can be expressed as a sum in the form of 
\[
\sum_{\bj = (j_1, \ldots, j_7) \in \{1, \ldots, 8\}^7} \alpha_{\bj} \sum_{\sigma \in \Sym_{8}} \varphi_7(\log B_{\sigma(j_1)}, \ldots, \log B_{\sigma(j_7)}),
\]
where $\alpha_{\bj}$ are rational numbers.

Denote $C_i \coloneqq \log B_i, i = 1, \ldots, m$.
Since $\sum_{i = 1}^8 C_i \in \Lie_{\geq 2}(\log \mH)$, for any tuple $\bj = (j_1, \ldots, j_7) \in \{1, \ldots, 8\}^7$, the expression $\sum_{\sigma \in \Sym_{8}} \varphi_7(C_{\sigma(j_1)}, \ldots, C_{\sigma(j_7)})$ is equivalent to a linear combination (with rational coefficients) of 
\begin{align*}
\sum_{i \neq j} [[[[[[C_{i}, C_{j}], C_{j}], C_i], C_i], C_i], C_i], \\
\sum_{i \neq j} [[[[[[C_{i}, C_{j}], C_{j}], C_j], C_i], C_i], C_i],
\end{align*}
and 
\[
\sum_{\overset{i, j, k}{\text{ distinct}}} [[[[[[C_{i}, C_{j}], C_{j}], C_k], C_k], C_i], C_i].
\]
(See Section~\ref{app:proofH} for detailed justification.)
In fact, using computer assistance, we show that
\begin{multline*}
     \sum_{\sigma \in \Sym_8} H_7(\log B_{\sigma(1)}, \ldots, \log B_{\sigma(8)})
    \sim \frac{34}{15} \cdot \sum_{i \neq j} [[[[[[C_{i}, C_{j}], C_{j}], C_i], C_i], C_i], C_i] \\
    - \frac{34}{45} \cdot \sum_{i \neq j} [[[[[[C_{i}, C_{j}], C_{j}], C_j], C_i], C_i], C_i]
     + \frac{68}{15} \cdot \sum_{\overset{i, j, k}{\text{ distinct}}} [[[[[[C_{i}, C_{j}], C_{j}], C_k], C_k], C_i], C_i],
\end{multline*}
\begin{multline*}
     \sum_{\sigma \in \Sym_8} H_7\big(\log B_{\sigma(j_{1,1})}, \ldots, \log B_{\sigma(j_{1,16})}\big)
    \sim \frac{34}{15} \cdot\sum_{i \neq j} [[[[[[C_{i}, C_{j}], C_{j}], C_i], C_i], C_i], C_i] \\
    + \frac{238}{45} \cdot \sum_{i \neq j} [[[[[[C_{i}, C_{j}], C_{j}], C_j], C_i], C_i], C_i]
     - \frac{68}{5} \cdot\sum_{\overset{i, j, k}{\text{ distinct}}} [[[[[[C_{i}, C_{j}], C_{j}], C_k], C_k], C_i], C_i],
\end{multline*}
and
\begin{multline*}
    \sum_{\sigma \in \Sym_8} H_7\big(\log B_{\sigma(j_{2,1})}, \ldots, \log B_{\sigma(j_{2,16})}\big)
    \sim - \frac{68}{15} \cdot \sum_{i \neq j} [[[[[[C_{i}, C_{j}], C_{j}], C_i], C_i], C_i], C_i] \\
    + \frac{34}{45} \cdot \sum_{i \neq j} [[[[[[C_{i}, C_{j}], C_{j}], C_j], C_i], C_i], C_i]
    - \frac{34}{5} \cdot \sum_{\overset{i, j, k}{\text{ distinct}}} [[[[[[C_{i}, C_{j}], C_{j}], C_k], C_k], C_i], C_i].
\end{multline*}
This yields
\begin{equation*}
    \sum_{\sigma \in \Sym_{8}} H_7(\log B_{\sigma(1)}, \ldots, \log B_{\sigma(8)})
    + \sum_{s = 1}^2 \alpha_{s} \sum_{\sigma \in \Sym_{8}} H_7(\log B_{\sigma(j_{s,1})}, \ldots, \log B_{\sigma(j_{s,16})}) \sim 0,
\end{equation*}
where $\alpha_1 = \frac{1}{15}, \alpha_2 = \frac{8}{15}$.
The code can be found at \url{https://doi.org/10.6084/m9.figshare.20124113.v1}.
\end{proof}

\begin{restatable}{lem}{Hnine}\label{lem:H9}
Let $\mH \subset \UT(n, \Q)$ be a finite set of matrices.
There exist positive rational numbers $\alpha_1, \ldots, \alpha_6$, as well as, for $s = 1, \ldots, 6$, permutations $(j_{s,1}, j_{s,2}, \ldots, j_{s,20})$ of the tuple $(1, 1, 2, 2, \ldots, 10)$, such that
for any given set of matrices $B_1, \ldots, B_{10}$ in $\UT(n, \Q)$ with $\log B_i \in \Lie_{\geq 1}(\log \mH)$
and $\sum_{i = 1}^{10} \log B_i \in \Lie_{\geq 2}(\log \mH)$, we have
\begin{multline}\label{eq:H9mod}
    \sum_{\sigma \in \Sym_{10}} H_9(\log B_{\sigma(1)}, \ldots, \log B_{\sigma(10)})
    + \sum_{s = 1}^6 \alpha_{s} \sum_{\sigma \in \Sym_{10}} H_9(\log B_{\sigma(j_{s,1})}, \ldots, \log B_{\sigma(j_{s,20})}) \\
    \in \Lie_{\geq 10}(\log \mH) + \Lie_{\geq 2}(\Lie_{\geq 2}(\log \mH)).
\end{multline}
Namely, we can take $\alpha_1 = \frac{44566633}{13702661}, \alpha_2 = \frac{557040}{13702661}, \alpha_3 = \frac{205175}{3915046}, \alpha_4 = \frac{1307207}{13702661}, \alpha_5 = \frac{86275275}{27405322}$, $\alpha_6 = \frac{4105194}{1957523}$, and 
\begin{align*}
   (j_{1,1}, j_{1,2}, \ldots, j_{1,20}) & = (5, 4, 7, 10, 2, 8, 3, 8, 1, 9, 7, 6, 5, 6, 2, 3, 9, 10, 1, 4), \\
   (j_{2,1}, j_{2,2}, \ldots, j_{2,20}) & = (8, 3, 5, 7, 10, 6, 8, 2, 1, 10, 2, 4, 9, 1, 5, 9, 3, 6, 7, 4), \\
   (j_{3,1}, j_{3,2}, \ldots, j_{3,20}) & = (7, 10, 2, 6, 4, 9, 6, 4, 1, 5, 3, 5, 1, 9, 3, 7, 10, 2, 8, 8), \\
   (j_{4,1}, j_{4,2}, \ldots, j_{4,20}) & = (10, 2, 2, 6, 7, 1, 9, 3, 9, 4, 8, 7, 8, 5, 5, 1, 4, 10, 6, 3), \\
   (j_{5,1}, j_{5,2}, \ldots, j_{5,20}) & = (3, 5, 10, 1, 4, 8, 6, 9, 3, 2, 7, 6, 1, 10, 9, 7, 2, 4, 5, 8), \\
   (j_{6,1}, j_{6,2}, \ldots, j_{6,20}) & = (4, 7, 2, 10, 2, 1, 3, 5, 8, 1, 6, 9, 10, 7, 6, 8, 3, 5, 9, 4).
\end{align*}
\end{restatable}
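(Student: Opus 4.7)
The plan is to follow the same template used for Lemmas~\ref{lem:H5} and~\ref{lem:H7}, but at degree nine instead of five or seven. Namely, for $x, y \in \mun$ I would define $x \sim y$ to mean $x - y \in \Lie_{\geq 10}(\log \mH) + \Lie_{\geq 2}(\Lie_{\geq 2}(\log \mH))$, and reduce the statement to showing that the displayed linear combination of symmetrized $H_9$ expressions is $\sim 0$. Setting $C_i \coloneqq \log B_i$, I would invoke the Dynkin formula (Lemma~\ref{lem:dynkin}) to rewrite each term $\sum_{\sigma \in \Sym_{10}} H_9\!\bigl(C_{\sigma(j_{1})},\ldots,C_{\sigma(j_{m})}\bigr)$ as a $\Q$-linear combination of expressions of the form $\sum_{\sigma \in \Sym_{10}} \varphi_9(C_{\sigma(i_1)}, \ldots, C_{\sigma(i_9)})$, with coefficients that are products of reciprocal factorials arising from \eqref{eq:Dynkin1}.

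The next step is to identify a small ``canonical basis'' of left-bracketed monomials modulo $\sim$, so that every $\sum_\sigma \varphi_9$ appearing above can be expressed in this basis. The quotient Lie algebra $L \coloneqq \Lie_{\geq 1}(\log \mH) / \Lie_{\geq 2}(\Lie_{\geq 2}(\log \mH))$ is metabelian, so inside $\Lie_{\geq 9}(\log \mH)/(\Lie_{\geq 10}(\log \mH) + \Lie_{\geq 2}(\Lie_{\geq 2}(\log \mH)))$, every iterated bracket of the $C_i$ reduces (via anticommutativity, Jacobi, and a Gr\"obner--Shirshov-style basis for free metabelian Lie algebras~\cite{bokut1963basis}) to a sum of ``left-combs'' $[[\ldots[[C_{i_1},C_{i_2}],C_{i_3}],\ldots],C_{i_9}]$ of a restricted shape. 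Furthermore, the condition $\sum_{i=1}^{10} C_i \in \Lie_{\geq 2}(\log \mH)$ can be folded into the $\sim$-reduction: whenever an index $j$ appears only once in such a comb and the outer symmetrization runs over $\Sym_{10}$, one may substitute $C_j \mapsto -\sum_{i \neq j} C_i$ modulo $\sim$, collapsing the basis further to combs in which each index appears at least twice. I would enumerate this finite set of residual canonical combs and, for each shape, expand $\sum_{\sigma \in \Sym_{10}} \varphi_9(C_{\sigma(i_1)}, \ldots, C_{\sigma(i_9)})$ in it.

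Having a linear map $\Phi$ from tuples $\bj = (j_1, \ldots, j_{20})$ with Parikh image $(2,\ldots,2)$ to this finite-dimensional $\Q$-vector space, the goal becomes: find a positive rational combination $\Phi(1, 2, \ldots, 10, 1, 2, \ldots, 10) + \sum_{s=1}^r \alpha_s \Phi(\bj_s) = 0$. This is a cone-membership problem: one asks whether $-\Phi(1, 2, \ldots, 10, 1, 2, \ldots, 10)$ lies in the $\Qp$-cone generated by $\{\Phi(\bj) : \PI(\bj) = (2, \ldots, 2)\}$. I would solve it by enumerating tuples $\bj$, computing their images $\Phi(\bj)$, and running a linear program (e.g.\ via Fourier--Motzkin or standard LP solvers over $\Q$) to find positive $\alpha_s$ witnesses. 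The six specific tuples and coefficients given in the statement are then verified a posteriori by substituting into the computed $\Phi$ and checking that the resulting vector is zero; this verification step is a pure, certifiable rational linear-algebra calculation.

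The main obstacle is computational scale rather than conceptual: the ambient space of symmetrized degree-nine brackets on ten generators is huge, so the reduction to the canonical basis, the expansion of each $\sum_\sigma \varphi_9$, and especially the LP search for positive witnesses $(\alpha_s, \bj_s)$ all require careful computer assistance in SageMath, analogous to (but strictly heavier than) the proofs of Lemmas~\ref{lem:H5} and~\ref{lem:H7}. A secondary subtlety is that the denominators in $\alpha_1, \ldots, \alpha_6$ are large, which suggests the LP's feasible region is narrow; I expect that naive enumeration of tuples must be pruned (e.g.\ by restricting to tuples whose images span $\Phi$'s range) to keep the LP tractable. Once the witness is found, the verification is mechanical: compute each side of \eqref{eq:H9mod} in the canonical basis and observe cancellation.
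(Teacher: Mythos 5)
Your proposal is essentially the paper's own proof: the same equivalence relation $\sim$, the same use of the Dynkin formula (Lemma~\ref{lem:dynkin}) to express symmetrized $H_9$ terms as $\Q$-combinations of symmetrized $\varphi_9$'s, the same reduction modulo the metabelian quotient $\Lie_{\geq 2}(\Lie_{\geq 2}(\log \mH))$ and the constraint $\sum_i C_i \in \Lie_{\geq 2}(\log \mH)$ down to a small canonical set of normal forms (the paper's six expressions $\tM(P,c)$ with $c \neq \max(P)$, $\min(P)\geq 2$ for $k=9$), and the same strategy of searching for positive witnesses computationally and verifying them by rational linear algebra in SageMath. One minor slip to correct: the first term of \eqref{eq:H9mod} is $\sum_{\sigma}H_9(\log B_{\sigma(1)},\ldots,\log B_{\sigma(10)})$, a length-ten tuple with Parikh image $(1,\ldots,1)$, so the LP target should be $-\Phi(1,2,\ldots,10)$ rather than $-\Phi(1,\ldots,10,1,\ldots,10)$; the cone generators do have Parikh image $(2,\ldots,2)$, but the target does not.
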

\begin{proof}[Sketch of proof of Lemma~\ref{lem:H9}]
Similar to Lemma~\ref{lem:H5}, and Lemma~\ref{lem:H7}, denote $C_i = \log B_i$ for $i = 1, \ldots, m$.
Since $\sum_{i = 1}^{10} C_i \in \Lie_{\geq 2}(\log \mH)$, for any tuple $\bj = (j_1, \ldots, j_{9}) \in \{1, \ldots, 10\}^{9}$, the expression $\sum_{\sigma \in \Sym_{10}} \varphi_9(C_{\sigma(j_1)}, \ldots, C_{\sigma(j_9)})$ is equivalent to a linear combination (with rational coefficient) of 
\begin{align*}
& \sum_{i \neq j} [[[[[[[[C_{i}, C_{j}], C_{j}], C_i], C_i], C_i], C_i], C_i], C_i], \\
& \sum_{i \neq j} [[[[[[[[C_{i}, C_{j}], C_{j}], C_j], C_i], C_i], C_i], C_i], C_i], \\
& \sum_{i \neq j} [[[[[[[[C_{i}, C_{j}], C_{j}], C_j], C_j], C_i], C_i], C_i], C_i], \\
& \sum_{\overset{i, j, k}{\text{ distinct}}} [[[[[[[[C_{i}, C_{j}], C_{j}], C_k], C_k], C_i], C_i], C_i], C_i], \\
& \sum_{\overset{i, j, k}{\text{ distinct}}} [[[[[[[[C_{i}, C_{j}], C_{j}], C_j], C_k], C_k], C_i], C_i], C_i],
\end{align*}
and
\begin{align*}
    & \sum_{\overset{i, j, k}{\text{ distinct}}} [[[[[[[[C_{i}, C_{j}], C_{j}], C_k], C_k], C_k], C_i], C_i], C_i].
\end{align*}
Similar to the previous lemmas, the rest of the proof can be done by computer assistance.
The code can be found at \url{https://doi.org/10.6084/m9.figshare.20122979.v1}.
\end{proof}

Combining Lemma~\ref{lem:antisym}-\ref{lem:H9}, we obtain Proposition~\ref{prop:k}.

\propk*
\begin{proof}
For even $k$, Equation~\eqref{eq:propk} is satisfied by Lemma~\ref{lem:antisym} by taking $r = 0$ and pairing each permutation $\sigma$ with its \emph{reversal}:
\begin{align*}
& \sum_{\sigma \in \Sym_{k+1}} H_{k}(\log B_{\sigma(1)}, \ldots, \log B_{\sigma(k+1)}) \\
= & \; \frac{1}{2} \left(\sum_{\sigma \in \Sym_{k+1}} H_{k}(\log B_{\sigma(1)}, \ldots, \log B_{\sigma(k+1)}) + \sum_{\sigma \in \Sym_{k+1}} H_{k}(\log B_{\operatorname{rev}(\sigma)(1)}, \ldots, \log B_{\operatorname{rev}(\sigma)(k+1)})\right) \\
= & \; \frac{1}{2} \sum_{\sigma \in \Sym_{k+1}} \left(H_{k}(\log B_{\sigma(1)}, \ldots, \log B_{\sigma(k+1)}) + H_{k}(\log B_{\operatorname{rev}(\sigma)(1)}, \ldots, \log B_{\operatorname{rev}(\sigma)(k+1)})\right) \\
= & \; 0.
\end{align*}
Here $\operatorname{rev}(\sigma) \in \Sym_{k+1}$ is the reversal of $\sigma$, meaning $\operatorname{rev}(\sigma)(i) = \sigma(k+2 - i), i = 1, \ldots, k+1$.
For $k = 3, 5, 7, 9$, Equation~\eqref{eq:propk} is satisfied by Lemma~\ref{lem:H3}, \ref{lem:H5}, \ref{lem:H7} and \ref{lem:H9} respectively.
\end{proof}

\subsection{Proof of Proposition~\ref{prop:cone}}
In this subsection, we give the proof of Proposition~\ref{prop:cone}.

\cone*
\begin{proof}
For convenience, define $\mC_{d+1} \coloneqq \{0\}$. 
We will prove that, for all $k = 2, \ldots, d+1$, the cone $\langle P_i(t) \mid i \in \mI, t \in \Zpp \rangle_{\Qp} + \mC_k$ is equal to $\mC_1$.
Notice that the claim in the proposition is the case where $k = d+1$. We use induction on $k$.

For $k = 2$, since $\ba_{ki} \in \mC_2$ for $k \geq 2$, we have $P_i(t) + \mC_2 = t \ba_{1i} + \mC_2$, so
\[
\langle P_i(t) \mid i \in \mI, t \in \Zpp \rangle_{\Qp} + \mC_2 = \langle t \ba_{1i} \mid i \in \mI, t \in \Zpp \rangle_{\Qp} + \mC_2 \overset{\text{(ii)}}{=} \mC_1.
\]

For the induction step, suppose now that the cone $\langle P_i(t) \mid i \in \mI, t \in \Zpp \rangle_{\Qp} + \mC_k$ is equal to $\mC_1$, we want to prove that  $\langle P_i(t) \mid i \in \mI, t \in \Zpp \rangle_{\Qp} + \mC_{k+1}$ is equal to $\mC_1$.

By the induction hypothesis, there exist indices $i_1, \ldots, i_m \in \mI$ as well as positive integers $t_1, \ldots, t_m \in \Zpp$, such that 
\[
\langle P_{i_j}(t_j) \mid j = 1, \ldots, m \rangle_{\Qp} + \mC_k = \mC_1.
\]
Condition (ii) of the proposition shows that there exist indices $i'_1, \ldots, i'_{m'} \in \mI$ such that
\[
\langle \ba_{ki'_j} \mid j = 1, \ldots, m' \rangle_{\Qp} + \mC_{k+1} = \mC_{k}.
\]
Hence
\begin{equation}\label{eq:aC}
\langle P_{i_j}(t_j) \mid j = 1, \ldots, m \rangle_{\Qp} + \langle \ba_{ki'_j} \mid j = 1, \ldots, m' \rangle_{\Qp} + \mC_{k+1} = \mC_1.
\end{equation}
We show that there exists $t \in \Zpp$ such that
\[
\langle P_{i_j}(t_j) \mid j = 1, \ldots, m \rangle_{\Qp} + \langle P_{i'_j}(t) \mid j = 1, \ldots, m' \rangle_{\Qp} + \mC_{k+1} = \mC_1.
\]
Suppose the contrary, that for every $t \in \Zpp$, 
\[
\langle P_{i_j}(t_j) \mid j = 1, \ldots, m \rangle_{\Qp} + \langle P_{i'_j}(t) \mid j = 1, \ldots, m' \rangle_{\Qp} + \mC_{k+1} \subsetneq \mC_{1}.
\]
For any $\Qp$-cone $\mC$, define the \emph{normal cone} of $\mC$ as the set of vectors $\bv \in V$ such that $\bv^{\top} \bc \leq 0$ for all $\bc \in \mC$.
For every $t$, take a normalized vector $\bv_t \in \mC_1$ (meaning the norm of $\bv_t$ is 1) in the normal cone of $\langle P_{i_j}(t_j) \mid j = 1, \ldots, m \rangle_{\Qp} + \langle P_{i'_j}(t) \mid j = 1, \ldots, m' \rangle_{\Qp} + \mC_{k+1}$.
That is,
\begin{equation}\label{eq:innerprod}
\bv_t^{\top} P_{i_j}(t_j) \leq 0 \text{ for all } j, \quad \bv_t^{\top} P_{i'_j}(t) \leq 0 \text{ for all } j, \quad \bv_t \perp \mC_{k+1}.
\end{equation}
Such a vector must exist because $\langle P_{i_j}(t_j) \mid j = 1, \ldots, m \rangle_{\Qp} + \langle P_{i'_j}(t) \mid j = 1, \ldots, m' \rangle_{\Qp} + \mC_{k+1}$ is a strict sub-cone of the linear space $\mC_1$.
The $\R$-linear space $V_{\R} = V \otimes_{\Q} \R$ is finite dimensional and hence compact.
Embed $V$ into $V_{\R}$ canonically, then the sequence $\{\bv_t\}_{t \in \Zpp}$ has a limit point in $V_{\R}$.
Denote by $\bv_{lim}$ this limit point.
As all the vectors $\bv_t$ are in $\mC_1$, $\bv_{lim}$ must be in $\mC_1 \otimes_{\Q} \R$.
Since the inner product of $V$ canonically extends to the inner product of $V_{\R}$, taking the limit of \eqref{eq:innerprod}, we have
\begin{equation}\label{eq:innerprodlim1}
\bv_{lim}^{\top} \cdot P_{i_j}(t_j) \leq 0 \text{ for all } j, \quad \bv_{lim} \perp \mC_{k+1},
\end{equation}
and
\begin{align}\label{eq:innerprodlim2}
\bv_{lim}^{\top} \cdot \ba_{ki'_j} = \; & \bv_{lim}^{\top} \cdot \lim_{t \rightarrow \infty} \left(\frac{P_{i'_j}(t)}{t^k} - t \ba_{k+1, i'_j} - \cdots - t^{d-k} \ba_{d, i'_j} \right) \nonumber \\
= \; & \bv_{lim}^{\top} \cdot \lim_{t \rightarrow \infty} \frac{P_{i'_j}(t)}{t^k} 
\leq 0, \quad j = 1, \ldots, m'.
\end{align}
The second equality is due to $\ba_{k+1, i'_j}, \ldots, \ba_{d, i'_j} \in \mC_{k+1} \perp \bv_{lim}$.
Hence, \eqref{eq:innerprodlim1} and \eqref{eq:innerprodlim2} show that $\bv_{lim}^{\top} \cdot \bv \leq 0$ for all $\bv$ in the $\Rp$-cone
\[
\langle P_{i_j}(t_j) \mid j = 1, \ldots, m \rangle_{\Qp} + \langle \ba_{ki'_j} \mid j = 1, \ldots, m' \rangle_{\Qp} + \mC_{k+1} \overset{\text{Eq.}~\eqref{eq:aC}}{=} \mC_1.
\]
Since $\mC_1$ is a linear space, $\bv_{lim}$ is non-zero (it has norm one) and is in $\mC_1 \otimes_{\Q} \R$, this yields a contradiction.
We have thus shown that there exists $t \in \Zpp$ such that
\[
\langle P_{i_j}(t_j) \mid j = 1, \ldots, m \rangle_{\Qp} + \langle P_{i'_j}(t) \mid j = 1, \ldots, m' \rangle_{\Qp} + \mC_{k+1} = \mC_1.
\]
Since $P_i(t) \in \mC_1, i \in \mI, t \in \Zpp$, this means
\[
\langle P_i(t) \mid i \in \mI, t \in \Zpp \rangle_{\Qp} + \mC_{k+1} = \mC_1,
\]
concluding the induction.

Finally, take $k = d+1$.
This yields $\langle P_i(t) \mid i \in \mI, t \in \Zpp \rangle_{\Qp} = \mC_1$.
\end{proof}

\subsection{Full proof of Theorem~\ref{thm:equivn}}\label{subsec:fullproof}
In this subsection, with Propositions~\ref{prop:lieperm}~-~\ref{prop:cone} at our disposal, we will show the proof of Theorem~\ref{thm:equivn}.
First, we need the following lemma.
\begin{lem}\label{lem:filalg}
Let $\mH$ be a finite subset of the Lie algebra $\mun$.
Let $W, V$ be linear subspaces of $\Lie_{\geq 1}(\mH)$ such that $W + \Lie_{\geq 2}(V) = V$, then $\Lie_{\geq 2}(W) = \Lie_{\geq 2}(V)$.
\end{lem}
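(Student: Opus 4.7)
The one containment $\Lie_{\geq 2}(W) \subseteq \Lie_{\geq 2}(V)$ is automatic: the hypothesis $W + \Lie_{\geq 2}(V) = V$ (together with $0 \in \Lie_{\geq 2}(V)$) forces $W \subseteq V$, and $\Lie_{\geq 2}$ is monotone in its argument. So the whole content of the lemma is the reverse inclusion $\Lie_{\geq 2}(V) \subseteq \Lie_{\geq 2}(W)$. The plan is to prove this by downward induction on the bracket depth $k$, exploiting the nilpotency of $\mun$.

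Concretely, I will establish the auxiliary claim that for every $k \geq 2$,
\[
\langle [V]_k \rangle_{\Q} \;\subseteq\; \Lie_{\geq 2}(W) + \Lie_{\geq k+1}(V).
\]
The proof of this claim is a direct computation. Take any generator $[\ldots [[v_1,v_2],v_3],\ldots,v_k]$ with $v_i \in V$. Using the hypothesis, decompose $v_i = w_i + u_i$ with $w_i \in W$ and $u_i \in \Lie_{\geq 2}(V)$, and expand the $k$-fold bracket by multilinearity. The purely-$w$ term $[\ldots[[w_1,w_2],w_3],\ldots,w_k]$ lies in $\Lie_{\geq 2}(W)$. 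In every remaining term, at least one slot is filled by some $u_i$, which is itself a $\Q$-linear combination of left-brackets of elements of $V$ of length $\geq 2$. Substituting this expression back into the $k$-fold bracket produces a nested Lie expression in elements of $V$ using at least $k+1$ of them; by the standard fact (recalled right after the definition of $[\mH]_k$ in the paper) that any nested bracket can be rewritten, via bilinearity, anticommutativity, and the Jacobi identity, as a linear combination of left-bracketings of the same elements, this term lies in $\Lie_{\geq k+1}(V)$. The claim follows.

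With the claim in hand, the main inclusion follows by downward induction on $k$, starting from a large $k$. Because $V \subseteq \Lie_{\geq 1}(\mH) \subseteq \mun$, Lemma~\ref{lem:nilclass} (or, equivalently, the fact that $\mun$ is nilpotent of class $n-1$) gives $\Lie_{\geq N}(V) = \{0\}$ for $N$ sufficiently large, which handles the base case. For the inductive step, assume $\Lie_{\geq k+1}(V) \subseteq \Lie_{\geq 2}(W)$. Then by the claim,
\[
\langle [V]_k \rangle_{\Q} \;\subseteq\; \Lie_{\geq 2}(W) + \Lie_{\geq k+1}(V) \;\subseteq\; \Lie_{\geq 2}(W),
\]
and hence $\Lie_{\geq k}(V) = \langle [V]_k \rangle_{\Q} + \Lie_{\geq k+1}(V) \subseteq \Lie_{\geq 2}(W)$. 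Taking $k = 2$ gives $\Lie_{\geq 2}(V) \subseteq \Lie_{\geq 2}(W)$.

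The only non-routine point is the claim, and even there the work reduces to noticing that replacing a single letter in a depth-$k$ left-bracket by a depth-$\geq 2$ Lie polynomial produces a nested expression of total depth $\geq k+1$, which the Jacobi identity rewrites into $\Lie_{\geq k+1}(V)$. So I do not expect any real obstacle; the lemma is essentially a formal consequence of the filtration property $[\Lie_{\geq i}(\mH), \Lie_{\geq j}(\mH)] \subseteq \Lie_{\geq i+j}(\mH)$ noted in the paper, applied to descend from $V$ to $W$.
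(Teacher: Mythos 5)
Your proof is correct and is essentially the paper's argument: both isolate an auxiliary claim of the form $[V]_k \subseteq (\text{something in }\Lie_{\geq 2}(W)) + \Lie_{\geq k+1}(V)$ and then resolve the filtration using the nilpotency of $\mun$. The only organizational differences are that the paper proves its version of the claim ($[V]_k \subseteq [W]_k + \Lie_{\geq k+1}(V)$) by upward induction, decomposing one factor at a time via bilinearity, whereas you decompose all $k$ factors at once and invoke the graded Jacobi-rewriting fact directly, and the paper finishes with an explicit telescoping chain where you phrase it as a downward induction; these are cosmetic variants of the same proof.
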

\begin{proof}
Since $W + \Lie_{\geq 2}(V) = V$, we have $W \subseteq V$, and thus $\Lie_{\geq 2}(W) \subseteq \Lie_{\geq 2}(V)$.
Therefore, it suffices to prove the opposite inclusion $\Lie_{\geq 2}(W) \supseteq \Lie_{\geq 2}(V)$.

Note that since $W, V$ are linear spaces, the sets $[V]_k, [W]_k$ are also linear spaces for all $k = 1, \ldots, n$.
We use induction on $k$ to show that
\begin{equation}\label{eq:filalgind}
    [V]_k \subseteq [W]_k + \Lie_{\geq k+1}(V).
\end{equation}
For $k = 1$ this immediately results from the equation $W + \Lie_{\geq 2}(V) = V$.
Suppose Equation~\eqref{eq:filalgind} hold for $k-1$. 
Then, take any elements $x \in V, y \in [V]_{k-1}$, by the induction hypothesis and by $W + \Lie_{\geq 2}(V) = V$, there exist $x' \in W, y' \in [W]_{k-1}$, such that $x - x' \in \Lie_{\geq 2}(V), y - y' \in \Lie_{\geq k}(V)$.
Then, 
\begin{align*}
    [y, x] & = [y', x'] + [y - y', x'] + [y, x - x'] \\
    & \in [[W]_{k-1}, W] + [\Lie_{\geq k}(V), W] + [[V]_{k-1}, \Lie_{\geq 2}(V)] \\
    & \subseteq [W]_k + [\Lie_{\geq k}(V), V] + [\Lie_{\geq k-1}(V), \Lie_{\geq 2}(V)] \\
    & \subseteq [W]_k + \Lie_{\geq k+1}(V).
\end{align*}
Taking the linear span for all $x \in V, y \in [V]_{k-1}$ shows $[V]_k \subseteq [W]_k + \Lie_{\geq k+1}(V)$, concluding the induction.

Now, for any $l = 2, \ldots, d$, take the sum of Equation~\eqref{eq:filalgind} for $k = l, \ldots, d$, we have
\[
\Lie_{\geq l}(V) = \sum_{k \geq l} [V]_k \subseteq \sum_{k \geq l} [W]_k + \sum_{k \geq l} \Lie_{\geq k+1}(V) = \Lie_{\geq l}(W) + \Lie_{\geq l+1}(V).
\]
Therefore,
\begin{align*}
    & \Lie_{\geq 2}(V) \\
    \subseteq & \; \Lie_{\geq 2}(W) + \Lie_{\geq 3}(V) \\
    \subseteq & \; \Lie_{\geq 2}(W) + \Lie_{\geq 3}(W) + \Lie_{\geq 4}(V) \\
    & \vdots \\
    \subseteq & \; \Lie_{\geq 2}(W) + \Lie_{\geq 3}(W) + \cdots + \Lie_{\geq n}(W) \\
    = & \; \Lie_{\geq 2}(W).
\end{align*}
This shows the inclusion $\Lie_{\geq 2}(W) \supseteq \Lie_{\geq 2}(V)$.
\end{proof}

Let $\mG = \{A_1, \ldots, A_K\}$ be a finite alphabet of elements in $\UT(n, \Q)$.
For any vector $\bl = (\ell_1, \ldots, \ell_K) \in \Zp^K$, define inductively the following $\Q$-cones $\mR_k(\bl)$ for $k = 11, 10, \ldots, 2$:
    \begin{align}\label{eq:defR1}
    & \mR_{11}(\bl) \coloneqq \{0\}, \\
    & \mR_k(\bl) \coloneqq \mR_{k+1}(\bl) +
    \bigg\langle H_k(\log B_1, \ldots, \log B_m) \;\bigg|\;  B_i \in \mG^*, \sum_{i = 1}^m \PI^{\mG}(B_i) \in \{\bl, 2 \bl\} \bigg\rangle_{\Qp}.
    \end{align}
That is, $\mR_k(\bl)$ is the $\Qp$-cone generated by the elements $H_j(\log B_1, \ldots, \log B_m), j \geq k$, where $B_1, \ldots, B_m$ are \emph{words} in $\mG^*$, and the Parikh images of $B_i$ sum up to $\bl$ or $2 \bl$.
Recall the definition of
\[
\log \mG_{\supp(\bl)} \coloneqq \{\log A_i \mid i \in \supp(\bl)\}
\]
as the set of logarithm of matrices in $\mG$ whose index appears in $\supp(\bl)$.
Combining Proposition~\ref{prop:lieperm} and \ref{prop:k}, we can show the following proposition that characterizes the cones $\mR_{k}(\log \mG_{\supp(\bl)})$ up to the quotient by $\Lie_{\geq 2}(\Lie_{\geq 2}(\log \mG_{\supp(\bl)}))$.

\begin{restatable}{prop}{linspacen}\label{prop:linspacen}
Let $\mG = \{A_1, \ldots, A_K\}$ be a finite set of matrices in $\UT(n, \Q)$ that satisfies $[\log \mG]_{11} = \{0\}$.
Let $\bl = (\ell_1, \ldots, \ell_K) \in \Zp^K$ be a non-zero vector that satisfies $\sum_{i = 1}^K \ell_i \log A_i \in \Lie_{\geq 2}(\log \mG_{\supp(\bl)})$ as well as $\ell_i \geq 10$ for all $i \in \supp(\bl)$.
Consider the quotient linear space $\mun / \Lie_{\geq 2}(\Lie_{\geq 2}(\log \mG_{\supp(\bl)}))$. 

For any set $\mC \subseteq \mun$, denote by $\overline{\mC}$ the subset of $\mun / \Lie_{\geq 2}(\Lie_{\geq 2}(\log \mG_{\supp(\bl)}))$ consisting of the equivalence classes $c + \Lie_{\geq 2}(\Lie_{\geq 2}(\log \mG_{\supp(\bl)})), c \in \mC$.
Then for all $k \leq 11$, the cone $\overline{\mR_k(\bl)}$
is equal to the linear space
$\overline{\Lie_{\geq k}(\log \mG_{\supp(\bl)})}$.
\end{restatable}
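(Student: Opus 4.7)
The plan is to establish the equality by downward induction on $k$, from $k=11$ down to $k=2$. Abbreviate $\mH = \log \mG_{\supp(\bl)}$ and denote by $\sim$ congruence modulo $\Lie_{\geq 2}(\Lie_{\geq 2}(\mH))$. The base case $k=11$ is immediate: $[\log \mG]_{11}=\{0\}$ forces $\Lie_{\geq 11}(\mH)=\{0\}$, and $\mR_{11}(\bl)=\{0\}$ by definition. In the inductive step, the inclusion $\overline{\mR_k(\bl)} \subseteq \overline{\Lie_{\geq k}(\mH)}$ follows at once from Theorem~\ref{thm:BCH}: each generator $H_k(\log B_1,\ldots,\log B_m)$ of $\mR_k(\bl)$ has $\log B_i \in \Lie_{\geq 1}(\mH)$ since every letter of $B_i$ comes from $\mG_{\supp(\bl)}$, so $H_k(\log B_1,\ldots,\log B_m) \in \Lie_{\geq k}(\mH)$, and the $\mR_{k+1}(\bl)$-contribution is absorbed by the inductive hypothesis.

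For the reverse inclusion $\overline{\Lie_{\geq k}(\mH)} \subseteq \overline{\mR_k(\bl)}$, the first task is to derive a negation principle from Proposition~\ref{prop:k}. Given any $(B_1,\ldots,B_{k+1}) \in (\mG^*)^{k+1}$ with $\sum_i \PI^{\mG}(B_i) = \bl$, the BCH formula together with the standing hypothesis $\sum_i \ell_i \log A_i \in \Lie_{\geq 2}(\mH)$ shows $\sum_i \log B_i \in \Lie_{\geq 2}(\mH)$, so Proposition~\ref{prop:k} produces
\[
X_0 + \sum_{s=1}^r \alpha_s X_s \in \Lie_{\geq k+1}(\mH) + \Lie_{\geq 2}(\Lie_{\geq 2}(\mH)),
\]
where $X_0 = \sum_{\tau \in \Sym_{k+1}} H_k(\log B_{\tau(1)},\ldots,\log B_{\tau(k+1)})$, each $X_s$ is an analogous symmetrized sum in which every $B_i$ appears once or twice (so its Parikh image is $\bl$ or $2\bl$, matching the definition of $\mR_k(\bl)$), and $\alpha_s > 0$. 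Passing to the quotient and invoking the inductive hypothesis $\overline{\mR_{k+1}(\bl)} = \overline{\Lie_{\geq k+1}(\mH)}$, which is a $\Q$-linear space and hence closed under negation, I rearrange to obtain $-\overline{X_0} \in \overline{\mR_k(\bl)}$. Then for any fixed permutation $\pi$, the identity $-H_k(\log B_{\pi(1)},\ldots,\log B_{\pi(k+1)}) = -X_0 + \sum_{\tau \neq \pi} H_k(\log B_{\tau(1)},\ldots,\log B_{\tau(k+1)})$, together with the fact that the second summand on the right is a positive combination of generators of $\mR_k(\bl)$, shows after taking quotients that the negative of every individual $(k+1)$-argument generator also lies in $\overline{\mR_k(\bl)}$.

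With the negation principle in hand, Proposition~\ref{prop:lieperm} closes the argument. Given indices $j_1,\ldots,j_k \in \supp(\bl)$, set $B_i = A_{j_i}$ for $i = 1,\ldots,k$ and let $B_{k+1} \in \mG^*$ be any word with $\PI^{\mG}(B_{k+1}) = \bl - \sum_{i=1}^k \be_{j_i}$; the hypothesis $\ell_i \geq 10 \geq k$ guarantees this residual Parikh image is non-negative even in the worst case where all $j_i$ coincide, so such a $B_{k+1}$ exists. Taking $C_i = \log B_i$, Proposition~\ref{prop:lieperm} rewrites $[\log A_{j_1},\ldots,\log A_{j_k}]$ as the signed $\Z$-combination $\sum_{\sigma \in \Sym_k} \mu_k(\sigma) H_k(C_{\sigma(1)},\ldots,C_{\sigma(k)},C_{k+1})$ of $(k+1)$-argument generators of $\mR_k(\bl)$. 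By the negation principle each signed summand lies in $\overline{\mR_k(\bl)}$, hence so does the whole bracket. This gives $\overline{[\mH]_k} \subseteq \overline{\mR_k(\bl)}$, and combined with $\overline{\Lie_{\geq k+1}(\mH)} = \overline{\mR_{k+1}(\bl)} \subseteq \overline{\mR_k(\bl)}$ from the induction, we conclude $\overline{\Lie_{\geq k}(\mH)} \subseteq \overline{\mR_k(\bl)}$.

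The main obstacle I expect is the clean bridging of ``negating $X_0$'' to ``negating arbitrary generators'': the move $-g = -X_0 + (X_0 - g)$ is admissible in the cone only because $X_0 - g$ is itself a positive combination of $\mR_k(\bl)$-generators, which in turn hinges on the Parikh-image totals in Proposition~\ref{prop:k} ($(1,\ldots,1)$ and $(2,\ldots,2)$ on the $\bj_s$) matching precisely the allowed totals $\bl$ and $2\bl$ in the definition of $\mR_k(\bl)$; any mismatch would break the argument. A secondary technical point is that the whole induction relies on the level-$(k+1)$ statement supplying a genuine $\Q$-linear space, not merely a cone, as this is what allows the rearrangement producing $-\overline{X_0} \in \overline{\mR_k(\bl)}$; this is also why the quantitative bound $\ell_i \geq 10$ is imposed, as the worst-case $m = k+1$ Parikh partition used in the bracket-realization step requires exactly $k = 10$ slots for a single letter.
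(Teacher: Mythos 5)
Your proposal is correct and follows essentially the same argument as the paper: downward induction from $k=11$, the easy inclusion via Theorem~\ref{thm:BCH}, the negation principle extracted from Proposition~\ref{prop:k} combined with the positivity of the $\alpha_s$ and the $\bl$/$2\bl$ Parikh totals, and the realization of iterated brackets via Proposition~\ref{prop:lieperm} using the $\ell_i \geq 10 \geq k$ bound to pad out $B_{k+1}$. The paper compresses the "negation principle" into a single displayed inclusion chain (its equation for $-H_k$), while you spell it out as $-g = -X_0 + (X_0 - g)$, but the reasoning is the same.
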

\begin{proof}

We show that the claim is true for $k = 11, 10, \ldots, 2,$ using induction with reverse order on $k$.
For $k = 11$, we have $\overline{\mR_{11}(\bl)} = \overline{\Lie_{\geq 11}(\log \mG_{\supp(\bl)})} = \{0\}$ because $[\log \mG]_{11} = \{0\}$.
Now for some $10 \geq k \geq 2$, suppose $\overline{\mR_{k+1}(\bl)} = \overline{\Lie_{\geq {k+1}}(\log \mG_{\supp(\bl)})}$ by induction hypothesis.
We will show that $\overline{\mR_k(\bl)} = \overline{\Lie_{\geq k}(\log \mG_{\supp(\bl)})}$.

    First, we show that for any $i_1, i_2, \ldots, i_k \in \supp(\bl)$, we have
    \[
    [\ldots[[\log A_{i_1}, \log A_{i_2}], \log A_{i_3}], \ldots , \log A_{i_k}] \in \mR_k(\bl) + \Lie_{\geq 2}(\Lie_{\geq 2}(\log \mG_{\supp(\bl)})).
    \]
    Take a tuple of words $(B'_1, \ldots, B'_{k+1})$ with $B'_1 = A_{i_1}, B'_2 = A_{i_2}, \ldots, B'_k = A_{i_k}, B'_{k+1} \in \mG^*$, such that $\sum_{i=1}^{k+1} \PI^{\mG}(B'_i) = \bl$.
    Such a tuple can always be found because $\bl$ satisfies $\ell_i \geq 10 \geq k, i \in \supp(\bl)$.
    For this tuple, the BCH formula gives us
    \[
    \sum_{i = 1}^{k+1} \log B'_i \in \sum_{i = 1}^K \ell_i \log A_i + \Lie_{\geq 2}(\log \mG_{\supp(\bl)}) \subseteq \Lie_{\geq 2}(\log \mG_{\supp(\bl)}).
    \]
    Hence, for any $\sigma \in \Sym_k$, Proposition~\ref{prop:k} shows that 
    \begin{align}\label{eq:minusHin}
    & - H_k\left(\log B'_{\sigma(1)}, \log B'_{\sigma(2)}, \ldots, \log B'_{\sigma(k)}, \log B'_{k+1}\right) \nonumber \\
    \in & \bigg\langle H_k(\log B_1, \ldots, \log B_{k+1}) \;\bigg|\; B_i \in \mG^*, \sum_{i=1}^{k+1} \PI^{\mG}(B_i) = \bl \bigg\rangle_{\Qp} \nonumber \\
    & \quad + \bigg\langle H_k(\log B_1, \ldots, \log B_{2k+2}) \;\bigg|\;  B_i \in \mG^*, \sum_{i=1}^{2k+2} \PI^{\mG}(B_i) = 2 \bl \bigg\rangle_{\Qp} \nonumber \\
    & \quad + \Lie_{\geq k+1}(\log \mG_{\supp(\bl)}) + \Lie_{\geq 2}(\Lie_{\geq 2}(\log \mG_{\supp(\bl)})) \nonumber \\
    \subseteq & \; \mR_k(\bl) + \Lie_{\geq k+1}(\log \mG_{\supp(\bl)}) + \Lie_{\geq 2}(\Lie_{\geq 2}(\log \mG_{\supp(\bl)})) \nonumber \\
    = & \; \mR_k(\bl) + \Lie_{\geq 2}(\Lie_{\geq 2}(\log \mG_{\supp(\bl)})).
    \end{align}
    The last equality come from $\overline{\Lie_{\geq k+1}(\log \mG_{\supp(\bl)})} = \overline{\mR_{k+1}(\bl)} \subseteq \overline{\mR_k(\bl)}$ by the induction hypothesis.
    
    Hence, by Proposition~\ref{prop:lieperm},
    \begin{align*}
    & [\ldots [[\log A_{i_1}, \log A_{i_2}], \log A_{i_3}], \ldots, \log A_{i_k}] \\
    = & \; [\ldots [[\log B'_1, \log B'_2], \log B'_3], \ldots, \log B'_k] \\
    = & \; \sum_{\sigma \in \Sym_k} \mu(\sigma) H_{k}(\log B'_{\sigma(1)}, \log B'_{\sigma(2)}, \ldots, \log B'_{\sigma(k)}, \log B'_{k+1}) \\
    \in & \; \mR_k(\bl) + \Lie_{\geq 2}(\Lie_{\geq 2}(\log \mG_{\supp(\bl)})).
    \end{align*}
    The last inclusion comes from the fact that both the expressions $H_{k}(\log B'_{\sigma(1)}, \ldots, \log B'_{\sigma(k)}, \log B'_{k+1})$ and $- H_{k}(\log B'_{\sigma(1)}, \ldots, \log B'_{\sigma(k)}, \log B'_{k+1})$ are in the cone $\mR_k(\bl) + \Lie_{\geq 2}(\Lie_{\geq 2}(\log \mG_{\supp(\bl)}))$ (by Equation~\eqref{eq:minusHin}).
    Therefore for every $\sigma \in \Sym_k$, regardless of the sign which $\mu(\sigma)$ takes, the summand $\mu(\sigma) H_{k}(\log B'_{\sigma(1)}, \ldots, \log B'_{\sigma(k)}, \log B'_{k+1})$ is in the cone $\mR_k(\bl) + \Lie_{\geq 2}(\Lie_{\geq 2}(\log \mG_{\supp(\bl)}))$.
    
    Therefore, $[\log \mG_{\supp(\bl)}]_k \subseteq \mR_k(\bl) + \Lie_{\geq 2}(\Lie_{\geq 2}(\log \mG_{\supp(\bl)}))$; that is, $\overline{[\log \mG_{\supp(\bl)}]_k} \subseteq \overline{\mR_k(\bl)}$.
    And since $\overline{\Lie_{\geq k+1} (\log \mG_{\supp(\bl)})} \subseteq \overline{\mR_{k+1}(\bl)} \subseteq \overline{\mR_k(\bl)}$ by the induction hypothesis, we have
    \begin{align}\label{eq:Toddsubset1}
    \overline{\Lie_{\geq k} (\log \mG_{\supp(\bl)})} = \overline{\left\langle [\log \mG_{\supp(\bl)}]_k \right\rangle_{\Q}} + \overline{\Lie_{\geq k+1} (\log \mG_{\supp(\bl)})}
    \subseteq \overline{\mR_k(\bl)}.
    \end{align}
    
    Next, take any tuple $(B_1, \ldots, B_m) \in \left(\mG^*\right)^m$, $\sum_{i=1}^m \PI^{\mG}(B_i) = \bl$ or $2 \bl$.
    Note that $\log B_i \in \Lie_{\geq 1}(\log \mG_{\supp(\bl)}), i = 1, \ldots, m,$ by the BCH formula.
    Hence, the expression $H_k(\log B_1, \ldots, \log B_m)$ can be written as a linear combination of elements in $\left[\Lie_{\geq 1}(\log \mG_{\supp(\bl)})\right]_k$. 
    That is,
    \begin{multline}\label{eq:Toddsubset2}
    \mR_k(\bl) \subseteq \left\langle\left[\Lie_{\geq 1}(\log \mG_{\supp(\bl)})\right]_k\right\rangle_{\Q} + \mR_{k+1}(\bl) \\
    \subseteq \Lie_{\geq k}(\log \mG_{\supp(\bl)}) + \mR_{k+1}(\bl) = \Lie_{\geq {k+1}}(\log \mG_{\supp(\bl)}).
    \end{multline}
    Combining \eqref{eq:Toddsubset1} and \eqref{eq:Toddsubset2} we have the desired equality.
    This concludes the induction and thus the whole proof.
\end{proof}

We now prove Theorem~\ref{thm:equivn}.
Although part (i) has already been proven when the theorem is first stated, we will restate it for the sake of completeness.

\equivn*
\begin{proof}
(i)
Let $w$ be a word with $\PI^{\mG}(w) = \bl$. 
Write $w = B_1 B_2 \cdots B_m$ $B_i \in \mG, i = 1, \ldots, m$.
Regrouping by letters, we have $\sum_{i = 1}^K \ell_i \log A_i = \sum_{i = 1}^m \log B_i$.

If $\log w = 0$, then by the BCH formula, we have
\[
\sum_{i = 1}^m \log B_i + \sum_{k=2}^{n-1} H_k(\log B_1, \ldots, \log B_m) = \log(B_1 B_2 \cdots B_m) = 0.
\]
The higher order terms $H_k, k \geq n$ vanish because $[\log \mG]_{n} = \{0\}$ (a consequence of $\mG \subseteq \UT(n, \Q)$).
Therefore, $\sum_{i = 1}^K \ell_i \log A_i = \sum_{i = 1}^m \log B_i = - \sum_{k=2}^{n-1} H_k(\log B_1, \ldots, \log B_m)$.

Since the Parikh image of the word $B_1 \cdots B_m$ is $\bl$, the matrices $B_i$ all lie in the subset $\{A_i \mid i \in \supp(\bl)\}$ of $\mG$.
Therefore, $\log B_i \in \log \mG_{\supp(\bl)}$ for all $i$.
By Theorem~\ref{thm:BCH}, for all $k \geq 2$ we have $- H_k(\log B_1, \ldots, \log B_m) \in \big\langle[\{\log B_i \mid i = 1, \ldots, m\}]_k \big\rangle_{\Q} \subseteq \Lie_{\geq k}(\log \mG_{\supp(\bl)}) \subseteq \Lie_{\geq 2}(\log \mG_{\supp(\bl)})$.
Therefore, we have
$
\sum_{i = 1}^K \ell_i \log A_i = - \sum_{k=2}^{n-1} H_k(\log B_1, \ldots, \log B_m) \in \Lie_{\geq 2}(\log \mG_{\supp(\bl)}).
$

(ii) Suppose condition~\eqref{eq:condn} hold for the vector $\bl$.
Resonating Example~\ref{example:u4}, our proof for (ii) proceeds in four steps.
Now we give an overview of each step.
As the first step, we want to construct some matrices $A'_1, \ldots, A'_{K'} \in \sgmG$, such that
\begin{equation}\label{eq:UTnstep1goal}
\langle \log A'_i \mid i = 1, \ldots, K' \rangle_{\Qp} + \Lie_{\geq 2}(\Lie_{\geq 2}(\log \mG_{\supp(\bl)})) = \Lie_{\geq 2}(\log \mG_{\supp(\bl)}) + \Lie_{\geq 2}(\Lie_{\geq 2}(\log \mG_{\supp(\bl)})).
\end{equation}
The candidates for the matrices $A'_1, \ldots, A'_{K'}$ are of the form $B_1^t \cdots B_m^t$, where $m \geq 1$, $t \in \Zpp$, $B_i \in \mG^{*}, i = 1, \ldots, m$ and $\sum_{i = 1}^m \PI^{\mG}(B_i) = \bl$ or $2 \bl$.
The general strategy is to invoke Proposition~\ref{prop:cone} 
while using Proposition~\ref{prop:linspacen} to guarantee that the conditions (i) and (ii) of Proposition~\ref{prop:cone} are satisfied.

As the second step, we work in the new alphabet $\mG' = \{A'_1, \ldots, A'_{K'}\}$ of matrices found in the previous step.
We want to fabricate some matrices $A''_1, \ldots, A''_{K''} \in \langle\mG'\rangle$, such that
\begin{multline}\label{eq:UTnstep2goal}
\langle \log A''_i \mid i = 1, \ldots, K'' \rangle_{\Qp} + \Lie_{\geq 2}(\Lie_{\geq 2}(\Lie_{\geq 2}(\log \mG_{\supp(\bl)}))) \\
= \Lie_{\geq 2}\left(\Lie_{\geq 2}(\log \mG_{\supp(\bl)})\right) + \Lie_{\geq 2}(\Lie_{\geq 2}(\Lie_{\geq 2}(\log \mG_{\supp(\bl)}))).
\end{multline}
The candidates for the matrices $A''_1, \ldots, A''_{K''}$ are of the form $B_1^t \cdots B_m^t$, where $B_i \in \left(\mG'\right)^{*}, i = 1, \ldots, m$. The idea is to again invoke Proposition~\ref{prop:cone} and to use Proposition~\ref{prop:linspacen} for the new alphabet $\mG'$ and a suitable vector $\bl'$.

As the third step, we work in the new alphabet $\mG'' = \{A''_1, \ldots, A''_{K''}\}$ of matrices found in the previous step.
We want to fabricate some matrices $A'''_1, \ldots, A'''_{K'''} \in \langle\mG''\rangle$, such that
\begin{equation}\label{eq:UTnstep3goal}
\langle \log A'''_i \mid i = 1, \ldots, K'' \rangle_{\Qp}
= \Lie_{\geq 2}(\Lie_{\geq 2}(\Lie_{\geq 2}(\log \mG_{\supp(\bl)}))).
\end{equation}
(Note that $\Lie_{\geq 2}(\Lie_{\geq 2}(\Lie_{\geq 2}(\Lie_{\geq 2}(\log \mG_{\supp(\bl)})))) = \{0\}$.) 
The candidates for $A'''_1, \ldots, A'''_{K'''}$ are of the form $B_1^t \cdots B_m^t$, where $B_i \in \left(\mG''\right)^{*}, i = 1, \ldots, m$. The idea is to again invoke Proposition~\ref{prop:cone} and to use Proposition~\ref{prop:linspacen} for the new alphabet $\mG''$ and a suitable vector $\bl''$.

As the fourth and last step, we work in the new alphabet $\mG''' = \{A'''_1, \ldots, A'''_{K'''}\}$ of matrices found in the previous step.
We then observe that the matrices $A'''_1, \ldots, A'''_{K'''}$ commute with each other, because $\Lie_{\geq 2}(\Lie_{\geq 2}(\Lie_{\geq 2}(\Lie_{\geq 2}(\log \mG_{\supp(\bl)})))) = \{0\}$.
Hence, it is very easy to search for the desired non-empty word $w \in \left(\mG'''\right)^{*}$ with $\log w = 0$.

We now give the detailed account of each step.
\begin{enumerate}[noitemsep, label = (\arabic*)]
    \item \textbf{Find matrices $A'_1, \ldots, A'_{K'} \in \sgmG$ satisfying condition \eqref{eq:UTnstep1goal}.}
    Since the right hand side of Equation~\eqref{eq:condn} is a linear space, we can replace $\bl$ by $10\bl$, and thus suppose $\bl$ satisfy $\ell_i \geq 10, i \in \supp(\bl)$.
    Since $\Zpp \cdot 10\bl \subseteq \Zpp \cdot \bl$, the resulting word $w$ will still satisfy $\PI^{\mG}(w) \in \Zpp \cdot \bl$.
    
    Since $\bl$ satisfies $\sum_{i = 1}^K \ell_i \log A_i \in \Lie_{\geq 2}(\log \mG_{\supp(\bl)})$,
    we are able to use Proposition~\ref{prop:linspacen} for the vector $\bl$.
    Our aim is to apply Proposition~\ref{prop:cone} in the quotient space
    \[
    V \coloneqq \mun/\Lie_{\geq 2}(\Lie_{\geq 2}(\log \mG_{\supp(\bl)})),
    \]
    for the index set 
    \[
    \mI \coloneqq \bigg\{(B_1, \ldots, B_m) \;\bigg|\; m \geq 1, B_i \in \mG^{*}, \sum_{i = 1}^m \PI^{\mG}(B_i) \in \{\bl, 2 \bl\}\bigg\},
    \]
    that is, the set of tuples of words whose concatenation has Parikh image $\bl$ or $2 \bl$.
    For any element $\bx \in \mun$, denote by $\overline{\bx} \coloneqq \bx + \Lie_{\geq 2}(\Lie_{\geq 2}(\log \mG_{\supp(\bl)})$ its equivalence class in 
    \[
    V = \mun/\Lie_{\geq 2}(\Lie_{\geq 2}(\log \mG_{\supp(\bl)})).
    \]
    For any tuple $b = (B_1, \ldots, B_m) \in \mI$, consider the vectors in $V$:
    \begin{align*}
    \ba_{1b} & \coloneqq \overline{\sum_{i = 1}^m \log B_i}, \\
    \ba_{kb} & \coloneqq \overline{H_{k}(\log B_1, \ldots, \log B_m)}, \quad k = 2, \ldots, 10,
    \end{align*}
    and
    \[
    P_{b}(t) \coloneqq \overline{\log(B_1^t \cdots B_m^t)} = t \ba_{1b} + \sum_{k = 2}^{10} t^k \ba_{kb},
    \]
    coming from the BCH formula for $B_1^t, \ldots, B_m^t$.
    We now apply Proposition~\ref{prop:cone} to these vectors:
    we need to verify that the cones $\mC_k, k = 10, \ldots, 1$ as defined in Proposition~\ref{prop:cone} are indeed linear spaces.
    Proposition~\ref{prop:linspacen} shows that 
    \[
    \mC_{10} = \langle \ba_{{10}b} \mid b \in \mI \rangle_{\Qp} = \overline{\mR_{10}(\bl)}
    \]
    and
    \[
    \mC_k = \langle \ba_{kb} \mid b \in \mI \rangle_{\Qp} + \mC_{k+1} 
    = \overline{\mR_k(\bl)} , \quad k = 9, \ldots, 2,
    \]
    are linear subspaces of $\mun/\Lie_{\geq 2}(\Lie_{\geq 2}(\log \mG_{\supp(\bl)})$. 
    Furthermore, by the condition \\
    $\sum_{i = 1}^K \ell_i \log A_i \in \Lie_{\geq 2}(\log \mG_{\supp(\bl)})$, we have 
    \[
    \ba_{1b} \in \left\{\overline{\sum_{i = 1}^K \ell_i \log A_i}, 2 \cdot \overline{\sum_{i = 1}^K \ell_i \log A_i} \right\} \subseteq \overline{\Lie_{\geq 2}(\log \mG_{\supp(\bl)})} = \overline{\mR_2(\bl)}
    \]
    for all $b \in \mI$. 
    Hence, 
    \[
    \mC_1 = \langle \ba_{1b} \mid b \in \mI \rangle_{\Qp} + \mC_{2} = \overline{\mR_2(\bl)}
    = \overline{\Lie_{\geq 2}(\log \mG_{\supp(\bl)})}
    \]
    is also a linear space.
    The conditions (i) and (ii) in Proposition~\ref{prop:cone} are thus satisfied. 
    We can thus apply Proposition~\ref{prop:cone}, which yields
    \[
    \left\langle P_b(t) \mid b \in \mI, t \in \Zp \right\rangle_{\Qp} = \mC_1 = \overline{\Lie_{\geq 2}(\log \mG_{\supp(\bl)})}.
    \]
    In other words,
    \begin{equation*}
    \bigg\langle \overline{\log(B_1^t \cdots B_m^t)} \;\bigg|\; t \in \Zp,  m \geq 1, B_i \in \mG^{*}, \sum_{i = 1}^m \PI^{\mG}(B_i) \in \{\bl, 2 \bl\} \bigg\rangle_{\Qp} \\
    = \overline{\Lie_{\geq 2}(\log \mG_{\supp(\bl)})}.
    \end{equation*}
    Since $\mun/\Lie_{\geq 2}(\Lie_{\geq 2}(\log \mG_{\supp(\bl)}))$ is of finite dimension, this shows that there exist $K' > 0$ tuples of words $(B_{11}, \ldots, B_{1m})$, $\ldots$, $(B_{K'1}, \ldots, B_{K'm})$ with $\sum_{i = 1}^m \PI^{\mG}(B_{ji}) = \bl$ or $2 \bl$ for all $j \in \{1, \ldots, K'\}$, as well as positive integers $t_1, \ldots, t_{K'} \in \Zpp$, such that
    \begin{multline*}
    \langle \log(B_{i1}^{t_i} \cdots B_{im}^{t_i}) \mid i = 1, \ldots, K' \rangle_{\Qp} + \Lie_{\geq 2}(\Lie_{\geq 2}(\log \mG_{\supp(\bl)})) \\
    = \Lie_{\geq 2}(\log \mG_{\supp(\bl)}) + \Lie_{\geq 2}(\Lie_{\geq 2}(\log \mG_{\supp(\bl)})).
    \end{multline*}
    Hence, the matrices $A'_i = B_{i1}^{t_i} \cdots B_{im}^{t_i}, i = 1, \ldots, K'$ satisfy the Equation~\eqref{eq:UTnstep1goal}.
    Define a new alphabet $\mG' = \{A'_1, \ldots, A'_{K'}\} \subseteq G$.
    
    \item \textbf{Find matrices $A''_1, \ldots, A''_{K''} \in \langle\mG'\rangle$ satisfying condition \eqref{eq:UTnstep2goal}.}
    Since the right hand side of Equation~\eqref{eq:UTnstep1goal} is a linear space, we have $- \log A'_j \in \langle \log A'_i \mid i = 1, \ldots, K' \rangle_{\Qp} + \Lie_{\geq 2}(\Lie_{\geq 2}(\log \mG_{\supp(\bl)}))$ for $j = 1, \ldots, K'$.
    Hence, there exists a non-zero vector $\bl' = (\ell'_1, \ldots, \ell'_{K'})$ in $\Zp^{K'}$, satisfying $\supp(\bl') = \{1, \ldots, K'\}$, $\ell'_i \geq 10$ for all $i \in \{1, \ldots, K'\}$, and 
    \begin{equation}\label{eq:sumprime}
    \sum_{i = 1}^{K'} \ell'_i \log A'_i \in \Lie_{\geq 2}(\Lie_{\geq 2}(\log \mG_{\supp(\bl)})).
    \end{equation}
    
    Define $\log \mG'_{\supp(\bl')} \coloneqq \{\log A'_i \mid i \in \supp(\bl')\} = \log \mG'$, because $\supp(\bl') = \{1, \ldots, K'\}$.
    First, we claim that
    \begin{equation}\label{eq:LeqLL}
    \Lie_{\geq 2}(\Lie_{\geq 2}(\log \mG_{\supp(\bl)})) = \Lie_{\geq 2}(\log \mG'_{\supp(\bl')}).
    \end{equation}
    Indeed, Equation~\eqref{eq:UTnstep1goal} shows that 
    \begin{multline*}
    \left\langle \log \mG'_{\supp(\bl')} \right\rangle_{\Q} + \Lie_{\geq 2}(\Lie_{\geq 2}(\log \mG_{\supp(\bl)}))
    \\ = \Lie_{\geq 2}(\log \mG_{\supp(\bl)}) + \Lie_{\geq 2}(\Lie_{\geq 2}(\log \mG_{\supp(\bl)})) = \Lie_{\geq 2}(\log \mG_{\supp(\bl)}).
    \end{multline*}
    Applying Lemma~\ref{lem:filalg} with $W = \left\langle \log \mG'_{\supp(\bl')} \right\rangle_{\Q}, V = \Lie_{\geq 2}(\log \mG_{\supp(\bl)})$ to the above equation yields the equality~\eqref{eq:LeqLL}.
    Consequently, we have 
    \[
    \sum_{i = 1}^{K'} \ell'_i \log A'_i \in \Lie_{\geq 2}(\log \mG'_{\supp(\bl')})
    \]
    by \eqref{eq:sumprime}.
    Apply Proposition~\ref{prop:linspacen} for the alphabet $\mG'$ and the vector $\bl'$, then we have that, in the quotient space
    \[
    \mun / \Lie_{\geq 2}(\Lie_{\geq 2}(\log \mG'_{\supp(\bl')})),
    \]
    the equations $\overline{\mR_k(\bl')} = \overline{\Lie_{\geq k}(\log \mG'_{\supp(\bl')})}, k = 10, \ldots, 2$, hold.
    Then, applying Proposition~\ref{prop:cone} in the quotient linear space 
    \[
    V \coloneqq \mun/\Lie_{\geq 2}(\Lie_{\geq 2}(\log \mG'_{\supp(\bl')}))
    \]
    as in the previous step, we have
    \begin{multline*}
    \bigg\langle \overline{\log(B_1^t \cdots B_m^t)} \;\bigg|\; t \in \Zp,  m \geq 1, B_i \in \left(\mG'\right)^{*}, \sum_{i=1}^m \PI^{\mG'}(B_i) \in \{\bl', 2 \bl'\} \bigg\rangle_{\Qp} \\
    = \overline{\Lie_{\geq 2}(\log \mG'_{\supp(\bl')})}.
    \end{multline*}
    Hence, there exist $K'' > 0$ tuples of words $(B'_{11}, \ldots, B'_{1m})$, $\ldots$, $(B'_{K''1}, \ldots, B'_{K''m})$ in $\left(\mG'\right)^*$ with $\sum_{i=1}^m \PI^{\mG'}(B'_{ji}) = \bl'$ or $2 \bl'$ for all $j$, as well as positive integers $t'_1, \ldots, t'_{K''} \in \Zpp$, such that
    \begin{multline}\label{eq:step2res}
    \langle \log({B'}_{i1}^{t'_i} \cdots {B'}_{im}^{t'_i}) \mid i = 1, \ldots, K'' \rangle_{\Qp} + \Lie_{\geq 2}(\Lie_{\geq 2}(\log \mG'_{\supp(\bl')})) \\
    = \Lie_{\geq 2}(\log \mG'_{\supp(\bl')}) + \Lie_{\geq 2}(\Lie_{\geq 2}(\log \mG'_{\supp(\bl')})).
    \end{multline}
    Substituting with $\Lie_{\geq 2}(\log \mG'_{\supp(\bl')}) = \Lie_{\geq 2}(\Lie_{\geq 2}(\log \mG_{\supp(\bl)}))$, Equation~\eqref{eq:step2res} can be rewritten as
    \begin{multline*}
    \langle \log(B_{i1}^{t'_i} \cdots B_{im}^{t'_i}) \mid i = 1, \ldots, K'' \rangle_{\Qp} + \Lie_{\geq 2}(\Lie_{\geq 2}(\Lie_{\geq 2}(\log \mG_{\supp(\bl)}))) \\
    = \Lie_{\geq 2}(\Lie_{\geq 2}(\log \mG_{\supp(\bl)})) + \Lie_{\geq 2}(\Lie_{\geq 2}(\Lie_{\geq 2}(\log \mG_{\supp(\bl)}))).
    \end{multline*}
    Hence, the matrices $A''_i = {B'}_{i1}^{t'_i} \cdots {B'}_{im}^{t'_i}, i = 1, \ldots, K''$ satisfy the Equation~\eqref{eq:UTnstep2goal}.
    Define the new alphabet $\mG'' = \{A''_1, \ldots, A''_{K''}\}$.
    
    \item \textbf{Find matrices $A'''_1, \ldots, A'''_{K'''} \in \langle\mG''\rangle$ satisfying condition \eqref{eq:UTnstep3goal}.}
    Similar to the previous step, one can find a vector $\bl'' = (\ell''_1, \ldots, \ell''_{K''}) \in \Zp^{K''}$, satisfying $\supp(\bl'') = \{1, \ldots, K''\}$, $\ell''_i \geq 10, i = 1, \ldots, K''$, and 
    \[
    \sum_{i = 1}^{K''} \ell''_i \log A''_i \in \Lie_{\geq 2}(\Lie_{\geq 2}(\log \mG'_{\supp(\bl')})).
    \]
    Define $\log \mG''_{\supp(\bl'')} \coloneqq \{\log A''_i \mid i \in \supp(\bl'')\} = \log \mG''$.
    As in the previous step, we have
    \[
    \Lie_{\geq 2}(\log \mG''_{\supp(\bl'')}) = \Lie_{\geq 2}(\Lie_{\geq 2}(\log \mG'_{\supp(\bl')})).
    \]
    Combining it with $\Lie_{\geq 2}(\log \mG'_{\supp(\bl')}) = \Lie_{\geq 2}(\Lie_{\geq 2}(\log \mG_{\supp(\bl)}))$, we have
    \[
    \Lie_{\geq 2}(\log \mG''_{\supp(\bl'')}) = \Lie_{\geq 2}(\Lie_{\geq 2}(\Lie_{\geq 2}(\log \mG_{\supp(\bl)}))).
    \]
    Apply Proposition~\ref{prop:linspacen} for the alphabet $\mG''$ and the vector $\bl''$, then we have that, in the quotient space $\mun / \Lie_{\geq 2}(\Lie_{\geq 2}(\log \mG''_{\supp(\bl'')}))$, the equations $\overline{\mR_k(\bl'')} = \overline{\Lie_{\geq k}(\log \mG''_{\supp(\bl'')})}$, $k = 10, \ldots, 2$, hold.
    
    Then, applying Proposition~\ref{prop:cone} in the quotient linear space 
    \[
    V \coloneqq \mun/\Lie_{\geq 2}(\Lie_{\geq 2}(\log \mG''_{\supp(\bl'')}))
    \]
    as in the previous steps, we have
    \begin{multline*}
    \bigg\langle \overline{\log(B_1^t \cdots B_m^t)} \;\bigg|\; t \in \Zp,  m \geq 1, B_i \in \left(\mG'\right)^{*}, \sum_{i=1}^m \PI^{\mG''}(B_i) \in \{\bl'', 2 \bl''\} \bigg\rangle_{\Qp} \\
    = \overline{\Lie_{\geq 2}(\log \mG''_{\supp(\bl'')})}.
    \end{multline*}
    Hence, there exist $K''' > 0$ tuples of words $(B''_{11}, \ldots, B''_{1m})$, $\ldots$, $(B''_{K'''1}, \ldots, B''_{K'''m})$ in $\left(\mG''\right)^*$ with $\sum_{i=1}^m \PI^{\mG''}(B''_{ji}) = \bl''$ or $2 \bl''$ for all $j$, as well as positive integers $t''_1, \ldots, t''_{K'''} \in \Zpp$, such that
    \begin{multline}\label{eq:step3res}
    \langle \log({B''}_{i1}^{t''_i} \cdots {B''}_{im}^{t''_i}) \mid i = 1, \ldots, K''' \rangle_{\Qp} + \Lie_{\geq 2}(\Lie_{\geq 2}(\log \mG''_{\supp(\bl'')})) \\
    = \Lie_{\geq 2}(\log \mG''_{\supp(\bl'')}) + \Lie_{\geq 2}(\Lie_{\geq 2}(\log \mG''_{\supp(\bl'')})).
    \end{multline}
    Since $\Lie_{\geq 2}(\log \mG''_{\supp(\bl'')}) = \Lie_{\geq 2}(\Lie_{\geq 2}(\Lie_{\geq 2}(\log \mG_{\supp(\bl)})))$, we have
    \[
    \Lie_{\geq 2}(\Lie_{\geq 2}(\log \mG''_{\supp(\bl'')})) \subseteq \Lie_{\geq 16}(\log \mG_{\supp(\bl)}) = \{0\}.
    \]
    Thus, Equation~\eqref{eq:step3res} can be rewritten as
    \begin{equation*}
    \langle \log({B''}_{i1}^{t''_i} \cdots {B''}_{im}^{t''_i}) \mid i = 1, \ldots, K''' \rangle_{\Qp} 
    = \Lie_{\geq 2}(\Lie_{\geq 2}(\Lie_{\geq 2}(\log \mG_{\supp(\bl)}))).
    \end{equation*}
    Hence, the matrices $A'''_i = {B''}_{i1}^{t''_i} \cdots {B''}_{im}^{t''_i}, i = 1, \ldots, K'''$ satisfy the Equation~\eqref{eq:UTnstep3goal}.
    Define the new alphabet $\mG''' = \{A'''_1, \ldots, A'''_{K'''}\}$.
    
    \item \textbf{Find a word $w \in \langle\mG'''\rangle$ with $\log w = 0$.}
    Since the right hand side of Equation~\eqref{eq:UTnstep3goal} is a linear space, we have $- \log A'''_j \in \langle \log A'''_i \mid i = 1, \ldots, K''' \rangle_{\Qp}$ for $j = 1, \ldots, K'''$.
    Hence, there exists a non-zero vector $\bl''' = (\ell'''_1, \ldots, \ell'''_{K'}) \in \Zp^{K'''}$, satisfying
    \begin{equation*}
    \sum_{i = 1}^{K'''} \ell'''_i \log A'''_i = 0.
    \end{equation*}
    Since $\log \mG''' \in \Lie_{\geq 2}(\Lie_{\geq 2}(\Lie_{\geq 2}(\log \mG_{\supp(\bl)}))) \subseteq \Lie_{\geq 8}(\log \mG_{\supp(\bl)})$, we have 
    \[
    \Lie_{\geq 2}(\log \mG''') \subseteq \Lie_{\geq 16}(\log \mG_{\supp(\bl)}) = \{0\}.
    \]
    Hence, by the BCH formula,
    \[
    \log ({A'''}_1^{\ell'''_1} \cdots {A'''}_{K'''}^{\ell'''_{K'''}}) = \sum_{i = 1}^{K'''} \ell'''_i \log A'''_i = 0,
    \]
    because the terms $H_k, k \geq 2$ are in $\Lie_{\geq 2}(\log \mG''')$, which vanishes.
    Therefore, we have found the non-empty word $w = {A'''}_1^{\ell'''_1} \cdots {A'''}_{K'''}^{\ell'''_{K'''}} \in \left(\mG'''\right)^{*}$ satisfying $\log w = 0$.
    By replacing $A'''_i$ with their corresponding words ${B''}_{i1}^{t''_i} \cdots {B''}_{im}^{t''_i}$ in $\left(\mG''\right)^*$, then replacing $A''_i$ with corresponding words in $\left(\mG'\right)^*$, then replacing $A'_i$ with corresponding words in $\mG^*$, we see that $w$ considered as a word in $\mG^*$ has Parikh image in $\Zpp \cdot \bl$, because the words $B_{i1}^{t_i} \cdots B_{im}^{t_i}$ corresponding to $A'_i$ all have Parikh image in $\Zpp \cdot \bl$.
\end{enumerate}
\end{proof}

\section{Conjecture for higher nilpotency class}\label{sec:conj}
In Sections~\ref{sec:alg} and \ref{sec:techthm}, we showed that the invertible subset of any finite set $\mG \subseteq G$ is computable in polynomial time, where $G$ is a subgroup of $\UT(n, \Q)$ of nilpotency class at most ten.
The only obstacle for generalizing this result to higher nilpotency class is to prove Proposition~\ref{prop:k} for $k \geq 11$.
If the identities \eqref{eq:propk} exist for $k \geq 11$, then they can be found with the same computer aided procedure as the one used in the proof of Lemma~\ref{lem:H5}-\ref{lem:H7} (see Section~\ref{app:proofH}).
Following this idea, given $k \geq 11$, we propose the following conjecture, which generalizes Proposition~\ref{prop:k}:

\begin{conj}\label{conj:k}
Let $\mH \subset \UT(n, \Q)$ be any finite set of matrices.
There exist an integer $r \geq 0$, positive rational numbers $\alpha_1, \ldots, \alpha_r$, as well as, for $s = 1, \ldots, r$, words $\bj_s = j_{s,1} j_{s,2} \cdots j_{s,m_s}$ in the alphabet $\mI = \{1, 2, \ldots, k+1\}$, such that $\PI^{\mI}(\bj_s) \in \Zpp \cdot (1, 1, \ldots, 1)$ and
\begin{multline}\label{eq:conj}
    \sum_{\sigma \in \Sym_{k+1}} H_{k}(\log B_{\sigma(1)}, \ldots, \log B_{\sigma(k+1)})
    + \sum_{s = 1}^r \alpha_{s} \sum_{\sigma \in \Sym_{k+1}} H_k(\log B_{\sigma(j_{s,1})}, \ldots, \log B_{\sigma(j_{s,m_s})}) \\
    \in \Lie_{\geq k+1}(\log \mH) + \Lie_{\geq 2}(\Lie_{\geq 2}(\log \mH))
\end{multline}
for all matrices $B_1, \ldots, B_{k+1}$ in $\UT(n, \Q)$ satisfying $\log B_i \in \Lie_{\geq 1}(\log \mH)$
and $\sum_{i = 1}^{k+1} \log B_i \in \Lie_{\geq 2}(\log \mH)$.
\end{conj}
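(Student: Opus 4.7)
The plan is to reduce Conjecture~\ref{conj:k} to a finite-dimensional linear programming feasibility problem over $\Q$, in direct analogy with the computer-assisted proofs of Lemmas~\ref{lem:H5}, \ref{lem:H7}, and \ref{lem:H9}. For even $k$, the antisymmetry argument of Lemma~\ref{lem:antisym} already settles the conjecture with $r = 0$, so the content lies entirely in the odd case. The first step is to pass to the finite-dimensional metabelian quotient
\[
L \coloneqq \Lie_{\geq 1}(\log \mH) \big/ \big(\Lie_{\geq k+1}(\log \mH) + \Lie_{\geq 2}(\Lie_{\geq 2}(\log \mH))\big),
\]
in which Equation~\eqref{eq:conj} reads as the vanishing of a single element. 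Since the identity must hold universally in $B_1, \ldots, B_{k+1}$ subject only to the linear constraint $\sum_{i=1}^{k+1} \log B_i \in \Lie_{\geq 2}(\log \mH)$, it suffices to verify it in the free version of $L$ on $k+1$ generators modulo that constraint.

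Next, I would apply the Dynkin formula (Lemma~\ref{lem:dynkin}) to expand each symmetrized sum
\[
S_{\bj} \coloneqq \sum_{\sigma \in \Sym_{k+1}} H_k\big(\log B_{\sigma(j_1)}, \ldots, \log B_{\sigma(j_m)}\big)
\]
as a $\Q$-linear combination of symmetrized $k$-fold iterated brackets. Full symmetrization by $\Sym_{k+1}$, combined with the hypothesis $\PI^{\mI}(\bj) \in \Zpp \cdot (1, \ldots, 1)$, collapses the expansion onto a finite set of canonical basis elements of the weight-$k$ component of $L$, of the type $\sum_{i \neq j}[[\ldots[[\log B_i, \log B_j], \log B_j], \log B_i], \ldots, \log B_i]$ already encountered in the proof sketches of Lemmas~\ref{lem:H5}--\ref{lem:H9}. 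This yields an effectively computable coefficient vector $\bv_{\bj} \in \Q^N$ for each admissible pattern $\bj$, where $N$ is the dimension of that weight-$k$ component.

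The conjecture then amounts to the existence of non-negative rationals $\alpha_s$ satisfying $\bv_{\bj_0} + \sum_s \alpha_s \bv_{\bj_s} = 0$, where $\bj_0 = (1, 2, \ldots, k+1)$ is the canonical pattern. This is a standard linear programming feasibility question, decidable by a single LP once the vectors are tabulated. The effective procedure alluded to in the introduction is therefore the following: for increasing values of a bound $M$, enumerate all patterns $\bj$ of length $m \leq M$ with uniform Parikh image, compute the associated $\bv_{\bj}$ via the Dynkin formula and reduction to the basis of $L$, and test feasibility of the LP; if a feasible solution exists for some $M$, the conjecture is proved, and any such solution rescales to strictly positive coefficients by discarding zero components.

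The main obstacle, and the reason a proof remains elusive for $k \geq 11$, is that feasibility of this LP is not guaranteed by any structural argument currently known. The even-$k$ case is handled by the involution $\sigma \mapsto \operatorname{rev}(\sigma)$, but for odd $k$ this involution fixes each summand rather than producing cancellation, so the required cancellation must come from subtler metabelian identities at weight $k$. Identifying a mechanism that uniformly in $k$ guarantees that the positive cone spanned by $\{-\bv_{\bj}\}_{\bj \neq \bj_0}$ contains $\bv_{\bj_0}$ appears to require a genuinely new idea beyond the Dynkin formula; and since both $N$ and the number of admissible patterns grow rapidly with $k$, the computer-assisted verification itself becomes prohibitive well before any conceptual obstacle is encountered.
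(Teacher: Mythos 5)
Since the statement in question is an open conjecture in the paper (it is only verified for $k \leq 10$; see the discussion after its statement and footnote~\ref{foot:class}), there is no complete proof for you to reproduce, and you correctly recognized this: your write-up describes a verification procedure rather than a proof. That procedure matches the one the paper lays out in Section~\ref{sec:conj} and Appendix~\ref{app:proofH}. In particular, you pass to the quotient by $\Lie_{\geq k+1}(\log\mH) + \Lie_{\geq 2}(\Lie_{\geq 2}(\log\mH))$, expand the symmetrized sums with the Dynkin formula, reduce each candidate pattern $\bj$ to a coefficient vector over a finite canonical basis (the paper's $\tM(P,c)$ over partition-integer pairs $(P,c)$ with $\min(P)\geq 2$ and $c\neq\max(P)$, computed by Algorithm~\ref{alg:lincombrewrite}), then pose an LP-feasibility/cone-membership question and enumerate patterns with increasing uniform Parikh image; this gives a semi-decision procedure that terminates exactly when the conjecture holds. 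You also correctly note that for even $k$ the reversal involution of Lemma~\ref{lem:antisym} already gives $r=0$, and that the true obstacle is the absence of any structural guarantee of LP-feasibility for odd $k\geq 11$, together with the combinatorial blowup in the number of patterns and basis elements — which is precisely the paper's stated reason for stopping at nilpotency class ten.
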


For even $k$, Conjecture~\ref{conj:k} is correct by the antisymmetry of $H_k$ (Lemma~\ref{lem:antisym}).
For $k = 3$, it is correct by taking $r = 0$ and using Lemma~\ref{lem:H3}.
For $k = 5, 7, 9$, it is verified by Lemma~\ref{lem:H5}, \ref{lem:H7}, \ref{lem:H9}, where the words $\bj_s, s = 1, \ldots, r,$ all satisfy $\PI^{\mI}(\bj_s) = (2, 2, \ldots, 2)$.

For odd $k$ larger than $10$, using Algorithm~\ref{alg:lincombrewrite} in Section~\ref{app:proofH}, we can search for words $\bj_s$ that potentially verify Conjecture~\ref{conj:k}.
Namely, starting with $q = 2$, take all the words $\bj_s$ satisfying $\PI^{\mI}(\bj_s) = (p, p, \ldots, p), 2 \leq p \leq q$.
Under the equivalence relation $\sim$ (defined in the proof of Lemma~\ref{lem:H5}), we can write each expression 
\[
h_k(\bj_s) \coloneqq \sum_{\sigma \in \Sym_{k+1}} H_k(\log B_{\sigma(j_{s,1})}, \ldots, \log B_{\sigma(j_{s,m_s})})
\]
as a linear combination of expressions $\tM(P, c)$ (see Section~\ref{app:proofH}) using Algorithm~\ref{alg:lincombrewrite}.
Then, writing $- \sum_{\sigma \in \Sym_{k+1}} H_{k}(\log B_{\sigma(1)}, \ldots, \log B_{\sigma(k+1)})$ also as a linear combination of $\tM(P, c)$, we can verify whether it is in the $\Qp$-cone generated by the elements $h_k(\bj_s)$.
If this is the case, then there exist positive rational numbers $\alpha_s, s = 1, 2, \ldots, $ satisfying Equation~\eqref{eq:conj}.
If this is not the case, we can increase $q$ and repeat the above procedure.

If there exists a relation of the form \eqref{eq:conj}, then the above procedure terminates for some $q$ and returns this relation. Otherwise it does not terminate.
In practice, it is more computationally viable to not take all the words $\bj_s$ satisfying $\PI^{\mI}(\bj_s) = (p, p, \ldots, p)$, but only a small amount of them chosen randomly.

Due to the restraint on computational power, we have only verified Conjecture~\ref{conj:k} for all $k \leq 10$, this is the reason why the main result of this paper stops at nilpotency class ten.
However, if we can verify Conjecture~\ref{conj:k} for larger $k$ (it suffices to verify for odd $k$), then we can extend the result of this paper to higher nilpotency class.
This is formalized by the following theorem.

\begin{thrm}\label{thm:conj}
Let $G$ be a subgroup of $\UT(n, \Q)$ whose nilpotency class is at most $d$.
If Conjecture~\ref{conj:k} holds for all $k \leq d$, then Algorithm~\ref{alg:invn} correctly computes the invertible subset of any finite set $\mG \subseteq G$ in polynomial time.
\end{thrm}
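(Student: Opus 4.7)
The plan is to revisit the proof of Theorem~\ref{thm:invn} and note that the only place where the hypothesis $d \leq 10$ entered was the invocation of Proposition~\ref{prop:k}. Since Algorithm~\ref{alg:invn} itself makes no reference to $d$ beyond computing spanning sets of $\Lie_{\geq 2}(\{\log A_i \mid i \in S\})$, its statement remains unchanged; once Proposition~\ref{prop:k} is replaced by the assumed instances of Conjecture~\ref{conj:k}, the rest of the argument will go through with only minor bookkeeping adjustments.

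First I would upgrade Proposition~\ref{prop:linspacen} to arbitrary nilpotency class $d$. Let $Q$ denote the maximum value of $p$ occurring in the tuples $(p, \ldots, p) = \PI^{\mI}(\bj_s)$ across all invocations of Conjecture~\ref{conj:k} for odd $3 \leq k \leq d$. I would redefine
\[
\mR_k(\bl) \coloneqq \mR_{k+1}(\bl) + \bigg\langle H_k(\log B_1, \ldots, \log B_m) \;\bigg|\; B_i \in \mG^{*},\; \sum_{i=1}^m \PI^{\mG}(B_i) \in \{\bl, 2\bl, \ldots, Q\bl\} \bigg\rangle_{\Qp},
\]
with $\mR_{d+1}(\bl) = \{0\}$, and rerun the downward induction from $k=d$ to $k=2$. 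For every left-bracketing of length $k$ of elements of $\log \mG_{\supp(\bl)}$, I would pick auxiliary words $B'_1, \ldots, B'_{k+1}$ whose Parikh images sum to $\bl$ (which requires $\ell_i \geq d$ for all $i \in \supp(\bl)$, arranged in advance by replacing $\bl$ with $d\bl$), then apply Conjecture~\ref{conj:k} to rewrite $-\sum_{\sigma} H_k(\log B'_{\sigma(1)}, \ldots, \log B'_{\sigma(k+1)})$ modulo $\Lie_{\geq k+1}(\log \mG_{\supp(\bl)}) + \Lie_{\geq 2}(\Lie_{\geq 2}(\log \mG_{\supp(\bl)}))$ as a non-negative combination of $H_k$-expressions whose arguments sum to elements of $\{\bl, \ldots, Q\bl\}$, and conclude by Proposition~\ref{prop:lieperm} just as in the original proof.

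Next I would extend the four-step argument of Subsection~\ref{subsec:fullproof} to general $d$. Since $[\Lie_{\geq i}, \Lie_{\geq j}] \subseteq \Lie_{\geq i+j}$, an $N$-fold nesting of $\Lie_{\geq 2}(\cdot)$ applied to $\log \mG_{\supp(\bl)}$ lies in $\Lie_{\geq 2^N}(\log \mG_{\supp(\bl)})$, so $N \coloneqq \lceil \log_2(d+1) \rceil$ reductions suffice to reach $\{0\}$. At step $j = 1, \ldots, N-1$ I would build an alphabet $\mG^{(j)} \subseteq \sgmG$ and a vector $\bl^{(j)}$ with coordinates at least $d$, such that the analogue of Equation~\eqref{eq:UTnstep1goal} holds at nesting depth $j$, by applying the upgraded Proposition~\ref{prop:linspacen} in the quotient $\mun / \Lie_{\geq 2}(\Lie_{\geq 2}(\log \mG^{(j-1)}_{\supp(\bl^{(j-1)})}))$ combined with Proposition~\ref{prop:cone}. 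Lemma~\ref{lem:filalg} transports the filtration identity across consecutive levels verbatim, and the final step, in which the resulting alphabet's generators all commute (because $\Lie_{\geq 2}$ of them vanishes), produces a non-empty word $w$ with $\log \pi(w) = 0$ and $\PI^{\mG}(w) \in \Zpp \cdot \bl$ by taking a positive linear relation among the commuting generators.

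For polynomial running time, Algorithm~\ref{alg:invn} performs at most $K$ iterations of its main loop, and each iteration solves a homogeneous linear system of size polynomial in the input once $d$ is treated as a fixed constant: a $\Q$-spanning set of $\Lie_{\geq 2}(\{\log A_i \mid i \in S\})$ is furnished by $\bigcup_{2 \leq k \leq d} [\{\log A_i \mid i \in S\}]_k$, whose cardinality is at most $K^d$ and whose entries grow in bit length by a factor $O(d)$, both polynomial for fixed $d$; Lemma~\ref{lem:compsupp} then yields polynomial-time complexity per iteration. The main obstacle I anticipate is purely bookkeeping: checking that enlarging the allowed Parikh image set from $\{\bl, 2\bl\}$ to $\{\bl, \ldots, Q\bl\}$ does not break the downward induction in the upgraded Proposition~\ref{prop:linspacen}, and that the chosen lower bound on $\ell_i$ remains sufficient across all $N$ nested constructions. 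Neither issue appears genuinely hard, but both require care.
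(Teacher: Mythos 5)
Your proposal matches the paper's own argument essentially verbatim: replace $\{\bl, 2\bl\}$ by a bounded set of positive multiples of $\bl$ in the definition of $\mR_k(\bl)$ (the paper uses $\Zpp \cdot \bl$, a cosmetic superset of your $\{\bl, \ldots, Q\bl\}$), rerun the descending induction of Proposition~\ref{prop:linspacen} using Conjecture~\ref{conj:k} in place of Proposition~\ref{prop:k}, iterate the nesting argument of Subsection~\ref{subsec:fullproof} for $\lceil \log_2(d+1) \rceil$ rounds via Lemma~\ref{lem:filalg}, and treat $d$ as a constant for the complexity bound. The paper's proof of Theorem~\ref{thm:conj} is itself a brief sketch that appeals to ``the same arguments as Theorem~\ref{thm:equivn}''; your write-up is a slightly more explicit unrolling of exactly that.
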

\begin{proof}
For any $\bl \in \Zp^K$,
similar to Equation~\eqref{eq:defR1}, define recursively the cones
\begin{multline*}
    \mR_{d+1}(\bl) \coloneqq \{0\}, \\
    \mR_k(\bl) \coloneqq \mR_{k+1}(\bl) +
    \bigg\langle H_k(\log B_1, \ldots, \log B_m) \;\bigg|\; m \geq 1, B_i \in \mG^*, \sum_{i = 1}^m \PI^{\mG}(B_i) \in \Zpp \cdot \bl \bigg\rangle_{\Qp}, \\
    k = d, d-1, \ldots, 3, 2,
\end{multline*}
and the set 
\[
\log \mG_{\supp(\bl)} \coloneqq \{\log A_i \mid A_i \in \mG, i \in \supp(\bl)\}.
\]
Suppose $\bl$ satisfies $\ell_i \geq d, i \in \supp(\bl)$.
Consider the quotient space $\mun / \Lie_{\geq 2}(\Lie_{\geq 2}(\log \mG_{\supp(\bl)}))$.
Following the pattern in the proof of Proposition~\ref{prop:linspacen}, we can show that for all $k \leq d+1$, the cone $\overline{\mR_k(\bl)}$
is equal to the linear space
$\overline{\Lie_{\geq k}(\log \mG_{\supp(\bl)})}$.

Then, using the same arguments as Theorem~\ref{thm:equivn}, we can show the following generalization of Theorem~\ref{thm:equivn}:
\begin{enumerate}[nolistsep, label=(\roman*)]
    \item If there exists a word $w \in \mG^*$ with $\PI^{\mG}(w) = \bl$ and $\log w = 0$, then
    \begin{equation}\label{eq:condconj}
        \sum_{i = 1}^K \ell_i \log A_i \in \Lie_{\geq 2}(\log \mG_{\supp(\bl)}).
    \end{equation}
    \item If $\bl$ satisfies \eqref{eq:condconj}, then there exists a non-empty word $w \in \mG^*$, with $\PI^{\mG}(w) \in \Zpp \cdot \bl$, such that $\log w = 0$.
\end{enumerate}
From here, the proof of correctness of Algorithm~\ref{alg:invn} and its complexity analysis is identical to the proof of Theorem~\ref{thm:invn}, replacing the property $[\log \mG]_{11} = \{0\}$ by $[\log \mG]_{d+1} = \{0\}$.
\end{proof}
A natural question is whether our result can be extended to arbitrary nilpotency class $d$.
This can either be done by proving Conjecture~\ref{conj:k} for higher $k$ or by finding another way to approach this problem.
In particular, similar to Corllary~\ref{cor:decnilp}, this would yield the decidability for the Identity Problem and the Group Problem for arbitrary finitely generated nilpotent groups of class at most $d$.

\bibliography{UTFive}

\begin{thebibliography}{10}

\bibitem{babai1985trading}
L.~Babai.
\newblock Trading group theory for randomness.
\newblock In {\em Proceedings of the seventeenth annual ACM symposium on Theory
  of computing}, pages 421--429, 1985.

\bibitem{babai1996multiplicative}
L.~Babai, R.~Beals, J.-y. Cai, G.~Ivanyos, and E.~M. Luks.
\newblock Multiplicative equations over commuting matrices.
\newblock In {\em Proceedings of the Seventh Annual ACM-SIAM Symposium on
  Discrete Algorithms}, pages 498--507, 1996.

\bibitem{babai2011code}
L.~Babai, P.~Codenotti, J.~A. Grochow, and Y.~Qiao.
\newblock Code equivalence and group isomorphism.
\newblock In {\em Proceedings of the twenty-second annual ACM-SIAM symposium on
  Discrete Algorithms}, pages 1395--1408. SIAM, 2011.

\bibitem{baker1905alternants}
H.~F. Baker.
\newblock Alternants and continuous groups.
\newblock {\em Proceedings of the London Mathematical Society}, 2(1):24--47,
  1905.

\bibitem{Baumslag2007LectureNO}
G.~Baumslag.
\newblock {\em Lecture notes on nilpotent groups}.
\newblock American Mathematical Society, 2007.

\bibitem{beals1993vegas}
R.~Beals and L.~Babai.
\newblock Las vegas algorithms for matrix groups.
\newblock In {\em Proceedings of 1993 IEEE 34th Annual Foundations of Computer
  Science}, pages 427--436. IEEE, 1993.

\bibitem{bell2017identity}
P.~C. Bell, M.~Hirvensalo, and I.~Potapov.
\newblock The {I}dentity {P}roblem for {M}atrix {S}emigroups in
  $\operatorname{SL}_2(\mathbb{Z})$ is {NP}-complete.
\newblock In {\em Proceedings of the Twenty-Eighth Annual ACM-SIAM Symposium on
  Discrete Algorithms}, pages 187--206. SIAM, 2017.

\bibitem{bell2010undecidability}
P.~C. Bell and I.~Potapov.
\newblock On the undecidability of the identity correspondence problem and its
  applications for word and matrix semigroups.
\newblock {\em International Journal of Foundations of Computer Science},
  21(06):963--978, 2010.

\bibitem{blondel2005decidable}
V.~D. Blondel, E.~Jeandel, P.~Koiran, and N.~Portier.
\newblock Decidable and undecidable problems about quantum automata.
\newblock {\em SIAM Journal on Computing}, 34(6):1464--1473, 2005.

\bibitem{bokut1963basis}
L.~A. Bokut'.
\newblock A basis for free polynilpotent lie algebras.
\newblock {\em Algebra i logika}, 2(4):13--19, 1963.

\bibitem{campbell1897law}
J.~E. Campbell.
\newblock On a law of combination of operators (second paper).
\newblock {\em Proceedings of the London Mathematical Society}, 1(1):14--32,
  1897.

\bibitem{casas2009efficient}
F.~Casas and A.~Murua.
\newblock An efficient algorithm for computing the
  {B}aker-{C}ampbell-{H}ausdorff series and some of its applications.
\newblock {\em Journal of Mathematical Physics}, 50(3):033513, 2009.

\bibitem{choffrut2005some}
C.~Choffrut and J.~Karhum{\"a}ki.
\newblock Some decision problems on integer matrices.
\newblock {\em RAIRO-Theoretical Informatics and Applications-Informatique
  Th{\'e}orique et Applications}, 39(1):125--131, 2005.

\bibitem{colcombet2019reachability}
T.~Colcombet, J.~Ouaknine, P.~Semukhin, and J.~Worrell.
\newblock On reachability problems for low-dimensional matrix semigroups.
\newblock In C.~Baier, I.~Chatzigiannakis, P.~Flocchini, and S.~Leonardi,
  editors, {\em 46th International Colloquium on Automata, Languages, and
  Programming, {ICALP} 2019, July 9-12, 2019, Patras, Greece}, volume 132 of
  {\em LIPIcs}, pages 44:1--44:15. Schloss Dagstuhl - Leibniz-Zentrum f{\"{u}}r
  Informatik, 2019.

\bibitem{DEGRAAF200231}
W.~A. {de Graaf} and W.~Nickel.
\newblock Constructing faithful representations of finitely-generated
  torsion-free nilpotent groups.
\newblock {\em Journal of Symbolic Computation}, 33(1):31--41, 2002.

\bibitem{derksen2005quantum}
H.~Derksen, E.~Jeandel, and P.~Koiran.
\newblock Quantum automata and algebraic groups.
\newblock {\em Journal of Symbolic Computation}, 39(3-4):357--371, 2005.

\bibitem{dong2022identity}
R.~Dong.
\newblock On the identity problem for unitriangular matrices of dimension four.
\newblock In S.~Szeider, R.~Ganian, and A.~Silva, editors, {\em 47th
  International Symposium on Mathematical Foundations of Computer Science,
  {MFCS} 2022, August 22-26, 2022, Vienna, Austria}, volume 241 of {\em
  LIPIcs}, pages 43:1--43:14. Schloss Dagstuhl - Leibniz-Zentrum f{\"{u}}r
  Informatik, 2022.

\bibitem{dynkin2000calculation}
E.~B. Dynkin.
\newblock Calculation of the coefficients in the {C}ampbell--{H}ausdorff
  formula.
\newblock {\em Selected Papers of E. B. Dynkin with Commentary. Ed. by
  Yushkenich, A. A.}, pages 31--35, 2000.

\bibitem{erdmann2006introduction}
K.~Erdmann and M.~J. Wildon.
\newblock {\em Introduction to Lie algebras}, volume 122.
\newblock Springer, 2006.

\bibitem{garzon1991isomorphism}
M.~Garzon and Y.~Zalcstein.
\newblock On isomorphism testing of a class of 2-nilpotent groups.
\newblock {\em Journal of Computer and System Sciences}, 42(2):237--248, 1991.

\bibitem{gong1998classification}
M.-P. Gong.
\newblock {\em Classification of Nilpotent Lie Algebras of Dimension 7 (over
  Algebraically Closed Fields and $\mathbb{R}$)}.
\newblock PhD thesis, University of Waterloo, 1998.

\bibitem{grunewald1980some}
F.~Grunewald and D.~Segal.
\newblock Some general algorithms. {II}: Nilpotent groups.
\newblock {\em Annals of Mathematics}, 112(3):585--617, 1980.

\bibitem{hausdorff1906symbolische}
F.~Hausdorff.
\newblock Die symbolische {E}xponentialformel in der {G}ruppentheorie.
\newblock {\em Berichte \"{u}ber die Verhandlungen der
  K\"{o}niglich-S\"{a}chsischen Gesellschaft der Wissenschaften zu Leipzig,
  Mathematisch-Physische Klasse}, 58:19--48, 1906.

\bibitem{holt2005handbook}
D.~F. Holt, B.~Eick, and E.~A. O'Brien.
\newblock {\em Handbook of Computational Group Theory}.
\newblock Chapman and Hall/CRC, 2005.

\bibitem{hrushovski2018polynomial}
E.~Hrushovski, J.~Ouaknine, A.~Pouly, and J.~Worrell.
\newblock Polynomial invariants for affine programs.
\newblock In {\em Proceedings of the 33rd Annual ACM/IEEE Symposium on Logic in
  Computer Science}, pages 530--539, 2018.

\bibitem{kargapolov1979fundamentals}
M.~I. Kargapolov and J.~I. Merzljakov.
\newblock {\em Fundamentals of the Theory of Groups}, volume~62.
\newblock Springer, 1979.

\bibitem{khukhro1998p}
E.~I. Khukhro.
\newblock {\em p-Automorphisms of Finite p-Groups}, volume 246.
\newblock Cambridge University Press, 1998.

\bibitem{ko2017identity}
S.~Ko, R.~Niskanen, and I.~Potapov.
\newblock On the identity problem for the special linear group and the
  {H}eisenberg group.
\newblock In I.~Chatzigiannakis, C.~Kaklamanis, D.~Marx, and D.~Sannella,
  editors, {\em 45th International Colloquium on Automata, Languages, and
  Programming, {ICALP} 2018, July 9-13, 2018, Prague, Czech Republic}, volume
  107 of {\em LIPIcs}, pages 132:1--132:15. Schloss Dagstuhl - Leibniz-Zentrum
  f{\"{u}}r Informatik, 2018.

\bibitem{kopytov1968solvability}
V.~M. Kopytov.
\newblock Solvability of the problem of occurrence in finitely generated
  soluble groups of matrices over the field of algebraic numbers.
\newblock {\em Algebra and Logic}, 7(6):388--393, 1968.

\bibitem{lo1999practical}
E.~H. Lo and G.~Ostheimer.
\newblock A practical algorithm for finding matrix representations for
  polycyclic groups.
\newblock {\em Journal of Symbolic Computation}, 28(3):339--360, 1999.

\bibitem{loday1992serie}
J.-L. Loday.
\newblock S{\'e}rie de {H}ausdorff, idempotents {E}ul{\'e}riens et algebres de
  {H}opf.
\newblock {\em Expositiones Mathematicae}, 12, 1994.

\bibitem{lohrey2021subgroup}
M.~Lohrey.
\newblock Subgroup membership in {GL}(2, {Z}).
\newblock In {\em 38th International Symposium on Theoretical Aspects of
  Computer Science (STACS 2021)}. Schloss Dagstuhl-Leibniz-Zentrum f{\"u}r
  Informatik, 2021.

\bibitem{macdonald2019low}
J.~Macdonald, A.~Miasnikov, and D.~Ovchinnikov.
\newblock Low-complexity computations for nilpotent subgroup problems.
\newblock {\em International Journal of Algebra and Computation},
  29(04):639--661, 2019.

\bibitem{mal1951some}
A.~Mal'cev.
\newblock On some classes of infinite soluble groups.
\newblock {\em Mat. Sb. (N.S.)}, 28(70):567--588, 1951.

\bibitem{markov1947certain}
A.~Markov.
\newblock On certain insoluble problems concerning matrices.
\newblock {\em Doklady Akad. Nauk SSSR}, 57(6):539--542, 1947.

\bibitem{mikhailova1966occurrence}
K.~A. Mikhailova.
\newblock The occurrence problem for direct products of groups.
\newblock {\em Matematicheskii Sbornik}, 112(2):241--251, 1966.

\bibitem{nickel1994computing}
W.~Nickel.
\newblock Computing nilpotent quotients of finitely presented groups.
\newblock {\em Geometric and computational perspectives on infinite groups},
  25:175--191, 1994.

\bibitem{roman2022undecidability}
V.~Roman'kov.
\newblock Undecidability of the submonoid membership problem for a sufficiently
  large finite direct power of the heisenberg group.
\newblock {\em arXiv preprint arXiv:2209.14786}, 2022.

\bibitem{sun2023faster}
X.~Sun.
\newblock Faster isomorphism for $p$-groups of class 2 and exponent $p$.
\newblock In {\em Proceedings of the 55th Annual ACM Symposium on Theory of
  Computing}, pages 433--440, 2023.

\bibitem{sagemath}
{The Sage Developers}.
\newblock {\em {S}ageMath, the {S}age {M}athematics {S}oftware {S}ystem
  ({V}ersion 9.0)}, 2020.
\newblock {\tt https://www.sagemath.org}.

\bibitem{vera2003conjugacy}
A.~Vera-L{\'o}pez and J.~M. Arregi.
\newblock Conjugacy classes in unitriangular matrices.
\newblock {\em Linear algebra and its applications}, 370:85--124, 2003.

\end{thebibliography}

\begin{appendices}

\section{Omitted proofs from Sections~\ref{sec:intro}-\ref{sec:prelim}}\label{app:proofCor}

In this section of the appendix we give the proofs of several semigroup and group theory results omitted in the main paper.

\propinvtoid*
\begin{proof}
For a word $w$ over the alphabet $\mG$, define $\pi(w)$ to be the matrix obtained by multiplying consecutively the matrices appearing in $w$.

(i) If the Identity Problem has a positive answer, let $w$ be a non-empty word over the alphabet $\mG$ such that $\pi(w) = I$.
Write $w = A_i w'$, ($w'$ could be the empty word), then $A_i^{-1} = \pi(w')$. If $A_i = I$ then obviously $A_i^{-1} = A_i \in \sgmG$. If $A_i \neq I$ then $\pi(w') \neq I$ so $w'$ is not the empty word and $\pi(w') \in \sgmG$. Therefore $A_i^{-1} \in \sgmG$.
Conversely, if $A_i \in \mG_{inv}$, then either $A_i = I$ in which case $I = A_i \in \sgmG$, or $A_i^{-1} = \pi(w')$ for some non-empty word $w'$, so $I = \pi(A_i w') \in \sgmG$.

(ii)
Since every element in $\mG_{inv}$ is invertible in $\sgmG$, the semigroup $\langle \mG_{inv} \rangle$ it generates also only contains invertible elements.
Therefore, if $\mG = \mG_{inv}$ then $\sgmG$ is a group.
On the other hand, if $\sgmG$ is a group then every element of $\mG$ is invertible in $\sgmG$, so $\mG = \mG_{inv}$.
\end{proof}

{\renewcommand\footnote[1]{}\decnilp*}

\begin{proof}
A consistent polycyclic presentation~\cite[Chapter.~8]{holt2005handbook} of $G$ can be computed from a finite presentation of $G$ \cite{nickel1994computing}, so we can suppose that a consistent polycyclic presentation of $G$ is given.
Let $G$ be a finitely generated nilpotent group of class at most ten.
The set of torsion elements in $G$ forms a normal subgroup $T$ of $G$.
A set of generators of $T$ along with a presentation can be effectively computed by \cite[Theorem~8]{macdonald2019low}.
Then, by \cite[Lemma~8.38]{holt2005handbook}, a consistent polycyclic presentation for the torsion-free nilpotent group $G/T$ can be computed.
Note that $G/T$ is still nilpotent and its nilpotency class does not exceed that of $G$.
An embedding of $G/T$ as a subgroup of $\UT(n, \Q)$ for some $n$ can then be effectively computed (\cite{lo1999practical}, \cite{DEGRAAF200231}).
Using this embedding, the Identity Problem and the Group Problem can be decided in the quotient group $G/T$ by Theorem~\ref{thm:invn} and Proposition~\ref{prop:invtoid}(i), (iii).

By \cite[Theorem~2.1]{Baumslag2007LectureNO}, $G$ can be embedded (injectively) as a subgroup of a direct product $A \times G/T$, where $A$ is a finite group.
Let $\phi: G \hookrightarrow A \times G/T$ denote this embedding, and let $p: A \times G/T \rightarrow G/T$ be the natural projection.

By the injectivity of $\phi$, the Identity Problem has a positive answer for $\mG \subseteq G$ if and only if it has a positive answer for $\phi(\mG) \subseteq A \times G/T$.
We claim that the Identity Problem has a positive answer for $\phi(\mG) \subseteq A \times G/T$ if and only if it has a positive answer for $p(\phi(\mG)) \subseteq G/T$.
Indeed, for any group $H$, denote $e_H$ its natural element.
If $e_{A \times G/T} \in \langle \phi(\mG) \rangle$, then obviously $e_{G/T} \in \langle p(\phi(\mG)) \rangle$ because $p$ is a semigroup homomorphism.
If $e_{G/T} \in \langle p(\phi(\mG)) \rangle$, then there exists $a \in A$ such that $(a, e_{G/T}) \in \langle \phi(\mG) \rangle$.
Because $A$ is finite, there exists a positive integer $k$, such that $a^k = e_A$ for all $a \in A$.
Then $e_{A \times G/T} = (e_A, e_{G/T}) = (a^k, e_{G/T}^k) \in \langle \phi(\mG) \rangle$.
This proves the claim.
Therefore, the Identity Problem for $\mG \subseteq G$ is equivalent to the Identity Problem for $p(\phi(\mG)) \subseteq G/T$, which is decidable by the first part of the proof.

The injectivity of $\phi$ also shows that the Group Problem has a positive answer for $\mG \subseteq G$ if and only if it has a positive answer for $\phi(\mG) \subseteq A \times G/T$.
We claim that the Group Problem has a positive answer for $\phi(\mG) \subseteq A \times G/T$ if and only if it has a positive answer for $p(\phi(\mG)) \subseteq G/T$.
Indeed, suppose $\langle \phi(\mG) \rangle$ is a group, then there exists a non-empty word $w \in \phi(\mG)^+$ where every letter of $\phi(\mG)$ appears at least once, and whose product is equal to the neutral element $e_{A \times G/T}$.
This is because every letter $B \in \phi(\mG)$ has a inverse in $\langle \phi(\mG) \rangle$, hence multiplying $B$ with the word representing its inverse yields a word $w_B$ whose product is the neutral element, and where the letter $B$ appears.
Concatenating the words $w_B$ for all $B \in \phi(\mG)$ yields the word $w$.
Next, projecting each letter in $w$ with $p$ yields a non-empty word $p(w) \in p(\phi(\mG))^+$ where every letter of $p(\phi(\mG))$ appears at least once, and whose product is equal to the neutral element $e_{G/T}$. 
Thus every element in $p(\phi(\mG))$ is invertible in $\langle p(\phi(\mG)) \rangle$.
This show that if $\langle \phi(\mG) \rangle$ is a group then $\langle p(\phi(\mG)) \rangle$ is a group.
For the opposite implication, suppose $\langle p(\phi(\mG)) \rangle$ is a group, then there exists a non-empty word $w \in \phi(\mG)^+$ where every letter of $\phi(\mG)$ appears at least once, and whose product is equal to some element $(a, e_{G/T}) \in A \times e_{G/T}$.
Because $A$ is finite, there exists a positive integer $k$, such that $a^k = e_A$.
Hence the product of the word $w^k \in \phi(\mG)^+$ is equal to $(a^k, e_{G/T}^k) = e_{A \times G/T}$.
As every letter of $\phi(\mG)$ appears in $w^k$ at least once, every element in $\phi(\mG)$ is invertible in $\langle \phi(\mG) \rangle$.
Thus $\langle \phi(\mG) \rangle$ is a group.
Therefore, the Group Problem for $\mG \subseteq G$ is equivalent to the Group Problem for $p(\phi(\mG)) \subseteq G/T$, which is decidable by the first part of the proof.
\end{proof}

\lemcompsupp*
\begin{proof}
For $i = 1, \ldots, K$, we can check whether $i \in \supp(\Lambda)$ in the following way.
By definition, $i \in \supp(\Lambda)$ if and only if the system 
\begin{equation}\label{eq:LPZ}
(\ell_1, \ldots, \ell_K) \in V, \ell_1 \geq 0, \ldots, \ell_i > 0, \ldots, \ell_K \geq 0
\end{equation}
has an \emph{integer} solution $(\ell_1, \ldots, \ell_K) \in \Z^K$.
By the homogeneity of the system~\eqref{eq:LPZ}, it has an \emph{integer} solution if and only if it has a \emph{rational} solution.
The existence of a rational solution to system~\eqref{eq:LPZ} can be decided by linear programming in polynomial time.
Therefore, the support of $\Lambda$ can be computed in polynomial time by checking whether $i \in \supp(\Lambda)$ for every $i = 1, \ldots, K$. 
\end{proof}

\nilclass*
\begin{proof}
For an element $g \in G$ and a rational number $q \in \Q$, define $g^q \coloneqq \exp(q \log g)$.
A group $G \leq \UT(n, \Q)$ is called \emph{$\Q$-powered} if for every element $g \in G$ and $q \in \Q$, we have $g^q \in G$.
A unitriangular matrix group over $\Q$ is torsion-free, because $A^n = I \iff n \log A = 0 \iff \log A = 0 \iff A = I$.
Therefore, by \cite[Theorem~9.20(a)]{khukhro1998p}, $G$ can be embedded in a $\Q$-powered group $\hat{G}$ of the same nilpotency class $d$.\footnote{One can take the group $\hat{G}$ to be \emph{Mal'cev completion} of $G$.}
By \cite[Theorem~10.3(d)]{khukhro1998p}, $\log \hat{G}$ is a Lie algebra over $\Q$, and $\log \hat{G}$ is of nilpotency class $d$ (meaning $[\log \hat{G}]_{d+1} = \{0\}$).
Therefore, $[\log G]_{d+1} \subseteq [\log \hat{G}]_{d+1} = \{0\}$.
\end{proof}

\section{Computer-aided proof of Lemma~\ref{lem:H5}-\ref{lem:H9}}\label{app:proofH}
In this section we give the detailed account for the proof of Lemma~\ref{lem:H5}-\ref{lem:H9} using computer assistance.

We fix an integer $k$ for the whole section.
Let $\mH$ be a subset of $\UT(n, \Q)$.
For $x, y \in \mun$, we write
\[
x \overset{\Lie_{\geq 2}(\Lie_{\geq 2}(\log \mH))}{\sim} y
\]
if $x - y \in \Lie_{\geq 2}(\Lie_{\geq 2}(\log \mH))$, and
\[
x \overset{\Lie_{\geq k+1}(\log \mH)}{\sim} y
\]
if $x - y \in \Lie_{\geq k+1}(\log \mH)$.
Obviously, $\overset{\Lie_{\geq 2}(\Lie_{\geq 2}(\log \mH))}{\sim}$ and $\overset{\Lie_{\geq k+1}(\log \mH)}{\sim}$ are equivalence relations and we denote by $\sim$ the transitive closure of these two relations.

The following lemma shows the effect of the relation $\overset{\Lie_{\geq 2}(\Lie_{\geq 2}(\log \mH))}{\sim}$.
In fact, the quotient Lie algebra $L \coloneqq \Lie_{\geq 1}(\log \mH) / \Lie_{\geq 2}(\Lie_{\geq 2}(\log \mH))$ is \emph{metabelian}, meaning $[[L,L],[L,L]] = 0$.
This property allows us the permute elements in iterated Lie brackets:
\begin{lem}\label{lem:effectmod22}
For $C_1, \ldots, C_k \in \Lie_{\geq 1}(\log \mH)$ and $i = 3, \ldots, k-1$, we have
\[
    [\ldots [[\ldots [C_1, C_2], \ldots, C_i], C_{i+1}], \ldots, C_k] \overset{\Lie_{\geq 2}(\Lie_{\geq 2}(\log \mH))}{\sim}
    [\ldots [[\ldots [C_1, C_2], \ldots, C_{i+1}], C_i], \ldots, C_k].
\]
\end{lem}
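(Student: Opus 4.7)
The plan is to reduce the swap of the two adjacent positions $i$ and $i+1$ to a single Jacobi application, and then to bracket the resulting element with the remaining letters $C_{i+2},\ldots,C_k$ while staying inside the ideal $\Lie_{\geq 2}(\Lie_{\geq 2}(\log \mH))$. The crucial observation that makes the hypothesis $i \geq 3$ usable is that the ``head'' $A \coloneqq [\ldots[C_1,C_2],\ldots,C_{i-1}]$ of the iterated bracket is already a left-bracket of length $i-1 \geq 2$, hence $A \in \Lie_{\geq i-1}(\log \mH) \subseteq \Lie_{\geq 2}(\log \mH)$, using the standard fact (recalled in the paper just after Example~\ref{example:u4}) that $[\Lie_{\geq a}(\log \mH),\Lie_{\geq b}(\log \mH)] \subseteq \Lie_{\geq a+b}(\log \mH)$.

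With this setup, both sides of the desired relation have the form $\operatorname{ad}_{C_k}\circ\cdots\circ\operatorname{ad}_{C_{i+2}}$ applied to $[[A,C_i],C_{i+1}]$ and to $[[A,C_{i+1}],C_i]$ respectively. Hence the first step of the proof is to show
\[
[[A,C_i],C_{i+1}] - [[A,C_{i+1}],C_i] \;=\; [A,[C_i,C_{i+1}]],
\]
which is immediate from the Jacobi identity plus anticommutativity. This element is a bracket of two members of $\Lie_{\geq 2}(\log \mH)$ (namely $A$ and $[C_i,C_{i+1}]$), and so lies in $[\Lie_{\geq 2}(\log \mH),\Lie_{\geq 2}(\log \mH)] \subseteq \Lie_{\geq 2}(\Lie_{\geq 2}(\log \mH))$ by the very definition of $\Lie_{\geq 2}(\cdot)$.

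The final step, which is the only part that requires a small argument rather than a bare identity, is to show that applying the operators $\operatorname{ad}_{C_{i+2}}, \ldots, \operatorname{ad}_{C_k}$ preserves $\Lie_{\geq 2}(\Lie_{\geq 2}(\log \mH))$; equivalently, that $\Lie_{\geq 2}(\Lie_{\geq 2}(\log \mH))$ is a Lie ideal of $\Lie_{\geq 1}(\log \mH)$. I would prove this by induction on the length of a left-bracket generator. For the base case $[X,Y]$ with $X,Y \in \Lie_{\geq 2}(\log \mH)$ and $C \in \Lie_{\geq 1}(\log \mH)$, Jacobi gives
\[
[[X,Y],C] \;=\; [[X,C],Y] + [X,[Y,C]],
\]
and both $[X,C]$ and $[Y,C]$ lie in $[\Lie_{\geq 2}(\log \mH),\Lie_{\geq 1}(\log \mH)] \subseteq \Lie_{\geq 3}(\log \mH) \subseteq \Lie_{\geq 2}(\log \mH)$; so the right-hand side is a sum of brackets of pairs in $\Lie_{\geq 2}(\log \mH)$, hence in $\Lie_{\geq 2}(\Lie_{\geq 2}(\log \mH))$. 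The inductive step for longer left-brackets $[W,X_m]$ (with $W \in \Lie_{\geq 2}(\Lie_{\geq 2}(\log \mH))$ of smaller depth and $X_m \in \Lie_{\geq 2}(\log \mH)$) is the same Jacobi expansion, invoking the induction hypothesis on $[W,C]$.

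I do not expect any serious obstacle: the main content is a single Jacobi manipulation combined with the ideal property, and the only tempting pitfall is forgetting that the condition $i \geq 3$ is precisely what guarantees that the head of the bracket already lives in $\Lie_{\geq 2}(\log \mH)$ — without this, the identity $[[A,C_i],C_{i+1}] - [[A,C_{i+1}],C_i] = [A,[C_i,C_{i+1}]]$ would only produce an element of $[\Lie_{\geq 1}(\log \mH),\Lie_{\geq 2}(\log \mH)] \subseteq \Lie_{\geq 3}(\log \mH)$, which is strictly weaker than lying in $\Lie_{\geq 2}(\Lie_{\geq 2}(\log \mH))$.
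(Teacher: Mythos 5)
Your proof is correct and follows essentially the same route as the paper's: a single Jacobi swap produces the inner element $[A,[C_i,C_{i+1}]] \in [\Lie_{\geq 2}(\log\mH),\Lie_{\geq 2}(\log\mH)]$, and the remaining work is to show that $\Lie_{\geq 2}(\Lie_{\geq 2}(\log\mH))$ is stable under $\operatorname{ad}_{C}$ for $C\in\Lie_{\geq 1}(\log\mH)$ so that the outer $\operatorname{ad}_{C_{i+2}},\ldots,\operatorname{ad}_{C_k}$ keep the difference inside. The only cosmetic divergence is in proving that stability: the paper avoids induction by rewriting a generator $[\ldots[X_1,X_2],\ldots,X_s]$ of $\Lie_{\geq 2}(\Lie_{\geq 2}(\log\mH))$ as a two-factor bracket $[X',X_s]$ with $X'\in\Lie_{\geq 2(s-1)}(\log\mH)\subseteq\Lie_{\geq 2}(\log\mH)$ and applying Jacobi once, whereas you induct on the generator's length --- both arguments are equally valid.
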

\begin{proof}
For $i = 3, \ldots, k-1$, by the Jacobi identity,
\begin{align}\label{eq:LjLL}
    & [\ldots [[\ldots [C_1, C_2], \ldots, C_i], C_{i+1}], \ldots, C_k] - 
    [\ldots [[\ldots [C_1, C_2], \ldots, C_{i+1}], C_i], \ldots, C_k] \nonumber\\
    = & \; [\ldots [[\ldots [C_1, C_2], \ldots, C_{i-1}], [C_i, C_{i+1}]], \ldots, C_k] \nonumber \\
    \in & \; [\ldots[\Lie_{\geq 2}(\log \mH), \Lie_{\geq 2}(\log \mH)], \ldots, C_k]. \nonumber \\
    \subseteq & \; [\ldots[\Lie_{\geq 2}(\Lie_{\geq 2}(\log \mH)), C_{i+2}], \ldots, C_k].
\end{align}

We then show that 
\begin{equation}\label{eq:impL22}
    X \in \Lie_{\geq 2}(\Lie_{\geq 2}(\log \mH)), Y \in \Lie_{\geq 1}(\log \mH) \implies [X, Y] \in \Lie_{\geq 2}(\Lie_{\geq 2}(\log \mH)).
\end{equation}
Since $X$ is in $\Lie_{\geq 2}(\Lie_{\geq 2}(\log \mH))$, it can be written as a linear combination of elements of the form $[\ldots[X_1, X_2], \ldots, X_s]$ where $s \geq 2$, $X_i \in \Lie_{\geq 2}(\log \mH), i = 1, \ldots, s$.
Therefore it suffices to show the implication~\eqref{eq:impL22} for the case $X = [\ldots[X_1, X_2], \ldots, X_s]$ where $X_i \in \Lie_{\geq 2}(\log \mH), i = 1, \ldots, s$.
Let 
\[
X' \coloneqq [\ldots[X_1, X_2], \ldots, X_{s-1}] \in \Lie_{\geq 2(s-1)}(\log \mH) \subseteq \Lie_{\geq 2}(\log \mH),
\]
so $X = [X', X_s]$ with $X', X_s \in \Lie_{\geq 2}(\log \mH)$.
Then by the Jacobi identity, 
\[
[X, Y] = [[X', X_s], Y] = - [[X_s, Y], X'] - [[Y, X'], X_s],
\]
where 
\begin{multline*}
    [[X_s, Y], X'] \in [[\Lie_{\geq 2}(\log \mH), \Lie_{\geq 1}(\log \mH)], \Lie_{\geq 2}(\log \mH)] \\
    \subseteq [\Lie_{\geq 2}(\log \mH), \Lie_{\geq 2}(\log \mH)] \subseteq \Lie_{\geq 2}(\Lie_{\geq 2}(\log \mH))
\end{multline*}
and 
\begin{multline*}
    [[Y, X'], X_s] \in [[\Lie_{\geq 1}(\log \mH), \Lie_{\geq 2}(\log \mH)], \Lie_{\geq 2}(\log \mH)] \\
    \subseteq [\Lie_{\geq 2}(\log \mH), \Lie_{\geq 2}(\log \mH)] \subseteq \Lie_{\geq 2}(\Lie_{\geq 2}(\log \mH)).
\end{multline*}
Therefore $[X, Y] \in \Lie_{\geq 2}(\Lie_{\geq 2}(\log \mH))$, showing the implication~\eqref{eq:impL22}.

Applying this implication with $Y = C_{i+2}, C_{i+3}, \ldots, C_k$ in Equation~\eqref{eq:LjLL} shows
\begin{align*}
    & [\ldots[\Lie_{\geq 2}(\Lie_{\geq 2}(\log \mH)), C_{i+2}], \ldots, C_k] \\
    \subseteq & \; [\ldots[\Lie_{\geq 2}(\Lie_{\geq 2}(\log \mH)), C_{i+3}], \ldots, C_k] \\
    & \vdots \\
    \subseteq & \; [\Lie_{\geq 2}(\Lie_{\geq 2}(\log \mH)), C_{k}] \\
    \subseteq & \; \Lie_{\geq 2}(\Lie_{\geq 2}(\log \mH))
\end{align*}

Hence Equation~\eqref{eq:LjLL} yields
\[
    [\ldots [[\ldots [C_1, C_2], \ldots, C_i], C_{i+1}], \ldots, C_k] \overset{\Lie_{\geq 2}(\Lie_{\geq 2}(\log \mH))}{\sim}
    [\ldots [[\ldots [C_1, C_2], \ldots, C_{i+1}], C_i], \ldots, C_k].
\]
\end{proof}




Fix an integer $k$.
Define an \emph{integer partition} $P$ (of $k$) to be a series of numbers $(a_1, \ldots, a_s)$ such that $a_1 \geq a_2 \geq \cdots \geq a_s \geq 1$ and $k = a_1 + \cdots + a_s$.
Define $\max(P) \coloneqq a_1, \min(P) \coloneqq a_s$ and $\set(P) \coloneqq \{t \mid \exists a_i = t\}$.
Define a \emph{set partition} $S$ (of $\{1, \ldots, k\}$) to be a set of non-empty disjoint sets $S = \{A_1, \ldots, A_s\}$ such that $A_1 \cup \cdots \cup A_s = \{1, \ldots, k\}$.
For any $k$-tuple $\bj = (j_1, \ldots, j_k) \in \{1, \ldots, k+1\}^k$, define the \emph{associated set partition} of $\bj$ the set partition consisting of sets of indices of its distinct elements
\[
\SP(\bj) \coloneqq \bigg\{A_i \coloneqq \{l \mid j_l = i\} \;\bigg|\; i = 1, \ldots, k+1, A_i \neq \emptyset \bigg\}.
\]
For example, if $k = 6$, $\bj = (4, 2, 7, 2, 2, 4)$, then $\SP(\bj) = \{ \{1, 6\}, \{2, 4, 5\}, \{3\} \}$.

Define the \emph{associated integer partition} $\IP(S)$ of a set partition $S$ to be the series of set cardinalities in $S$ in decreasing order.
For example, if $k = 6$, $S = \{ \{1, 6\}, \{2, 4, 5\}, \{3\} \}$, then $\IP(S) = (3, 2, 1)$.
In particular, in this example we have $\max(\IP(S)) = 3, \min(\IP(S)) = 1$ and $\set(P) = \{3, 2, 1\}$.

We now fix elements $C_1, \ldots, C_{k} \in \Lie_{\geq 1}(\log \mH)$.
For a given tuple $\bj = (j_1, \ldots, j_k) \in \{1, \ldots, k+1\}^k$, define the symmetric sums
\begin{align*}
\Phi(\bj) & \coloneqq \frac{1}{\left(k+1 - \card(\SP(\bj))\right)!} \sum_{\sigma \in \Sym_{k+1}} \varphi_k(C_{\sigma(j_1)}, C_{\sigma(j_2)}, \ldots, C_{\sigma(j_{k})}), \\
M(\bj) & \coloneqq \frac{1}{\left(k+1 - \card(\SP(\bj))\right)!} \sum_{\sigma \in \Sym_{k+1}}[\ldots[C_{\sigma(j_1)}, C_{\sigma(j_2)}], \ldots, C_{\sigma(j_{k})}].
\end{align*}
Here, $\varphi_k$ is the expression defined in the Dynkin formula~\eqref{eq:Dynkin2}.
The relation between $\Phi(\bj)$ and $M(\bj)$ can be computed as follows.
\begin{align}\label{eq:PhitoM}
    \Phi(\bj) = & \frac{1}{\left(k+1 - \card(\SP(\bj))\right)!} \sum_{\sigma \in \Sym_{k+1}} \varphi_k(C_{\sigma(j_1)}, C_{\sigma(j_2)}, \ldots, C_{\sigma(j_{k})}) \nonumber \\
    = & \frac{1}{\left(k+1 - \card(\SP(\bj))\right)!} \sum_{\sigma \in \Sym_{k+1}} \sum_{\tau \in \Sym_k} \frac{(-1)^{d(\tau)}}{k^2 \binom{k-1}{d(\tau)}} [\ldots[C_{\sigma(j_{\tau(1)})}, C_{\sigma(j_{\tau(2)})}], \ldots, C_{\sigma(j_{\tau(k)})}] \nonumber \\
    = &  \sum_{\tau \in \Sym_k} \frac{(-1)^{d(\tau)}}{k^2 \binom{k-1}{d(\tau)}} \cdot \frac{1}{\left(k+1 - \card(\SP(\bj))\right)!} \sum_{\sigma \in \Sym_{k+1}} [\ldots[C_{\sigma(j_{\tau(1)})}, C_{\sigma(j_{\tau(2)})}], \ldots, C_{\sigma(j_{\tau(k)})}] \nonumber \\
    = & \sum_{\tau \in \Sym_k} \frac{(-1)^{d(\tau)}}{k^2 \binom{k-1}{d(\tau)}} \cdot M(\bj_{\tau}),
\end{align}
where $\bj_{\tau} \coloneqq (j_{\tau(1)}, j_{\tau(2)}, \ldots, j_{\tau(k)})$.

From the definition of $\Phi(\bj)$ and $M(\bj)$ it follows that that for any $\sigma \in \Sym_{k+1}$, writing $\sigma(\bj) \coloneqq (\sigma(j_1), \ldots, \sigma(j_k))$, we have $\Phi(\sigma(\bj)) = \Phi(\bj)$ and $M(\sigma(\bj)) = M(\bj)$.
By this symmetry, $\Phi(\bj)$ and $M(\bj)$ only depend on their associated set partition $\SP(\bj)$. 
Hence for any set partition $S$, we can define
\[
\Phi(S) \coloneqq \Phi(\bj), \quad M(S) \coloneqq M(\bj), \quad \text{ where } \SP(\bj) = S.
\]
From Equation~\eqref{eq:PhitoM} we get
\begin{equation}\label{eq:PhitoMset}
    \Phi(S) = \sum_{\tau \in \Sym_k} \frac{(-1)^{d(\tau)}}{k^2 \binom{k-1}{d(\tau)}} \cdot M(S_{\tau}),
\end{equation}
where $S_{\tau}$ is the set partition obtained by replacing $i$ by $\tau(i)$ in all sets of $S$ for all $i = 1, \ldots, k$:
\[
S_{\tau} \coloneqq \bigg\{\{\tau(j) \mid j \in A\} \;\bigg|\; A \in S \bigg\}.
\]

For two set partitions $S_1$ and $S_2$, $S_2$ is called a \emph{coarsening} of $S_1$ if for every $A \in S_1$, there exists $A' \in S_2$ such that $A \subseteq A'$.
For example, $\{ \{1, 3, 4\}, \{2, 5, 6\} \}$ is a coarsening of $\{ \{1, 3, 4\}, \{2\}, \{5, 6\} \}$.
In particular, any set partition is a coarsening of itself.
Denote by $S_2 \succcurlyeq S_1$ if $S_2$ is a coarsening of $S_1$.

The next lemma shows the effect of the relation $\overset{\Lie_{\geq k+1}(\log \mH)}{\sim}$ for sums over coarsenings.

\begin{lem}\label{lem:effectmodk1}
Let $\mH$ be a subset of $\UT(n, \Q)$.
Suppose $C_1, \ldots, C_{k+1} \in \Lie_{\geq 1}(\log \mH)$ and $\sum_{i = 1}^{k+1} C_i \in \Lie_{\geq 2}(\log \mH)$.
If a set partition $S$ satisfies $\min(S) = 1$, then
\begin{equation}
    \sum_{S' \succcurlyeq S} M(S') \overset{\Lie_{\geq k+1}(\log \mH)}{\sim} 0.
\end{equation}
\end{lem}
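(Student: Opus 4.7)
The plan is to first establish an auxiliary \emph{single-merge relation} as a consequence of the hypothesis $\sum_{i=1}^{k+1} C_i \in \Lie_{\geq 2}(\log \mH)$, then to apply this relation to every coarsening of $S$ that preserves the singleton block and sum the results. Concretely, I will first show that for any set partition $S$ of $\{1, \ldots, k\}$ with $\card(S) = s$ containing a singleton block $A_1 = \{i_0\}$,
\begin{equation*}
M(S) + \sum_{j=2}^{s} M(S_{\mathrm{merge}_j}) \overset{\Lie_{\geq k+1}(\log \mH)}{\sim} 0,
\end{equation*}
where $S_{\mathrm{merge}_j}$ is $S$ with $A_1$ and $A_j$ merged into a single block.

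To prove the single-merge relation, fix a representative tuple $\bj = (j_1, \ldots, j_k)$ with $\SP(\bj)=S$, assuming WLOG that $A_1 = \{i_0\}$ uses index $1$ and $A_2, \ldots, A_s$ use indices $2, \ldots, s$; let $T(\bj') := [\ldots[C_{j'_1}, C_{j'_2}], \ldots, C_{j'_k}]$ denote the left-iterated bracket. For $j^* \in \{1, \ldots, k+1\}$, define $\bj^{(j^*)}$ to be $\bj$ with the entry at position $i_0$ replaced by $j^*$. Since $i_0$ is the only position of $\bj$ carrying index $1$, multi-linearity of the Lie bracket at position $i_0$ gives, for every $\sigma \in \Sym_{k+1}$,
\begin{equation*}
\sum_{j^* = 1}^{k+1} T(\sigma(\bj^{(j^*)})) \;=\; T_{i_0}\bigl(\sigma(\bj);\, X\bigr),
\end{equation*}
where the right-hand side is the Lie bracket with $X := \sum_{i=1}^{k+1} C_i$ placed at position $i_0$ and $C_{\sigma(j_l)}$ at every other position $l$. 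This bracket has one factor in $\Lie_{\geq 2}(\log \mH)$ and $k-1$ factors in $\Lie_{\geq 1}(\log \mH)$, hence lies in $\Lie_{\geq k+1}(\log \mH)$. Summing over $\sigma$ and partitioning the left side by whether $\SP(\bj^{(j^*)}) = S$ (which happens when $j^* = 1$ or $j^* \in \{s+1, \ldots, k+1\}$, a total of $k+2-s$ relabeled copies) or $\SP(\bj^{(j^*)}) = S_{\mathrm{merge}_{j^*}}$ (for $j^* \in \{2, \ldots, s\}$) yields $(k+2-s)!\bigl(M(S)+\sum_{j=2}^{s} M(S_{\mathrm{merge}_j})\bigr) \in \Lie_{\geq k+1}(\log \mH)$, which proves the single-merge relation after dividing by $(k+2-s)!$.

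For the second stage, set $S^- := \{A_2, \ldots, A_s\}$, the induced set partition of $\{1, \ldots, k\}\setminus\{i_0\}$. For every coarsening $S'' \succcurlyeq S^-$, the partition $\{A_1\} \cup S''$ is a coarsening of $S$ still containing the singleton $A_1$, so the single-merge relation applies to it. Summing these relations over all $S'' \succcurlyeq S^-$ gives
\begin{equation*}
\sum_{S'' \succcurlyeq S^-} M(\{A_1\} \cup S'') \;+\; \sum_{S'' \succcurlyeq S^-} \sum_{B \in S''} M\bigl(\{A_1 \cup B\} \cup (S'' \setminus \{B\})\bigr) \;\overset{\Lie_{\geq k+1}(\log \mH)}{\sim}\; 0.
\end{equation*}
The first double sum lists each coarsening $S' \succcurlyeq S$ in which $A_1$ remains a singleton exactly once. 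For the second, the assignment $(S'', B) \mapsto \{A_1 \cup B\} \cup (S'' \setminus \{B\})$ is a bijection onto the set of coarsenings $S' \succcurlyeq S$ in which $A_1$ is absorbed into a strictly larger block; its inverse sends such an $S'$ to the pair $(S'', B)$ with $B := C \setminus A_1$ (for $C \in S'$ the block containing $A_1$) and $S'' := (S' \setminus \{C\}) \cup \{B\}$. The two sums therefore enumerate every coarsening of $S$ exactly once, giving $\sum_{S' \succcurlyeq S} M(S') \sim 0$ modulo $\Lie_{\geq k+1}(\log \mH)$.

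The main delicacies are bookkeeping rather than deep algebra: correctly tracking how the $k+1-s$ relabeling cases combine with the $j^*=1$ term to produce the factor $(k+2-s)!$ in the single-merge derivation, and verifying that the bijection in the second stage is well-defined and exhaustive. The hypothesis $\min(S)=1$ is essential, as it is precisely what makes the multi-linearity step for a single position $i_0$ applicable.
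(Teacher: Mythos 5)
Your proof is correct. It rests on the same algebraic kernel as the paper's proof---multilinearity at the singleton position combined with $\sum_{i=1}^{k+1} C_i \in \Lie_{\geq 2}(\log\mH)$---but the route is genuinely different and more elaborate. The paper collapses the entire sum $\sum_{S'\succcurlyeq S} M(S')$ in one step: it observes that this sum enumerates exactly the tuples $\bj$ whose entries are constant on each block of $S$, rewrites it as an independent multi-sum $\sum_{i_1}\cdots\sum_{i_s}$ over one index per block, and then uses linearity of the bracket in the singleton's position to extract $\sum_{i_1=1}^{k+1} C_{i_1}$, landing everything in $\Lie_{\geq k+1}(\log\mH)$ at a stroke. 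You instead first distill a local ``single-merge'' identity $M(S) + \sum_{j=2}^{s} M(S_{\mathrm{merge}_j}) \sim 0$, proved by the same multilinearity argument applied one tuple at a time (and your factorial bookkeeping---the $(k+2-s)$ relabeled copies of $S$ combining with the $j^*=1$ term to yield $(k+2-s)!$, matching the $(k+2-s)!$ on the merged terms---is correct), and then aggregate these relations over all coarsenings of the reduced partition $S^-$ via a well-defined bijection between $\{(S'',B): S''\succcurlyeq S^-,\ B\in S''\}$ and the coarsenings of $S$ that swallow $A_1$. Both proofs are sound; yours isolates a reusable local relation at the cost of an extra combinatorial double-counting step, while the paper's direct multi-sum rewrite is shorter and avoids the bijection entirely.
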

\begin{proof}
    First let us illustrate the intuition with an example.
    Let $k = 6$, $S = \{ \{1, 3, 4\}, \{2\}, \{5, 6\} \}$, then there are five coarsenings of $S$, which are:
    \[
    S, \{ \{1, 3, 4\}, \{2, 5, 6\} \}, \{ \{1, 3, 4, 2 \}, \{5, 6\} \}, \{ \{1, 3, 4, 5, 6\}, \{2\} \}, \{ \{1, 3, 4, 2, 5, 6\} \}.
    \]
Correspondingly,
\begin{align*}
& M(S) + M(\{ \{1, 3, 4\}, \{2, 5, 6\} \}) + M(\{ \{1, 3, 4, 2 \}, \{5, 6\} \}) + M(\{ \{1, 3, 4, 5, 6\}, \{2\} \}) \\
& \quad + M(\{ \{1, 3, 4, 2, 5, 6\} \}) \\
= & \frac{1}{4!}\sum_{\sigma \in \Sym_{7}}[[[[[[C_{\sigma(1)}, C_{\sigma(2)}], C_{\sigma(1)}], C_{\sigma(1)}], C_{\sigma(3)}], C_{\sigma(3)}] \\
& \quad + \frac{1}{5!}\sum_{\sigma \in \Sym_{7}}[[[[[[C_{\sigma(1)}, C_{\sigma(2)}], C_{\sigma(1)}], C_{\sigma(1)}], C_{\sigma(2)}], C_{\sigma(2)}] \\
& \quad + \frac{1}{5!}\sum_{\sigma \in \Sym_{7}}[[[[[[C_{\sigma(1)}, C_{\sigma(1)}], C_{\sigma(1)}], C_{\sigma(1)}], C_{\sigma(2)}], C_{\sigma(2)}] \\
& \quad + \frac{1}{5!}\sum_{\sigma \in \Sym_{7}}[[[[[[C_{\sigma(1)}, C_{\sigma(2)}], C_{\sigma(1)}], C_{\sigma(1)}], C_{\sigma(1)}], C_{\sigma(1)}] \\
& \quad + \frac{1}{6!}\sum_{\sigma \in \Sym_{7}}[[[[[[C_{\sigma(1)}, C_{\sigma(1)}], C_{\sigma(1)}], C_{\sigma(1)}], C_{\sigma(1)}], C_{\sigma(1)}] \\
= & \sum_{i, j, k \text{ distinct}}[[[[[[C_{i}, C_{j}], C_{i}], C_{i}], C_{k}], C_{k}] + \sum_{i \neq j = k}[[[[[[C_{i}, C_{j}], C_{i}], C_{i}], C_{k}], C_{k}] \\
& \quad + \sum_{i = j \neq k}[[[[[[C_{i}, C_{j}], C_{i}], C_{i}], C_{k}], C_{k}] + \sum_{i = k \neq j}[[[[[[C_{i}, C_{j}], C_{i}], C_{i}], C_{k}], C_{k}] \\
& \quad + \sum_{i = j = k}[[[[[[C_{i}, C_{j}], C_{i}], C_{i}], C_{k}], C_{k}] \\
= & \sum_{i = 1}^7 \sum_{j = 1}^7 \sum_{k = 1}^7 [[[[[[C_{i}, C_{j}], C_{i}], C_{i}], C_{k}], C_{k}] \\
= & \sum_{i = 1}^7 \sum_{k = 1}^7 [[[[[[C_{i}, \sum_{j = 1}^7 C_{j}], C_{i}], C_{i}], C_{k}], C_{k}] \\
\in & \sum_{i = 1}^7 \sum_{k = 1}^7 [[[[[[C_{i}, \Lie_{\geq 2}(\log \mH)], C_{i}], C_{i}], C_{k}], C_{k}] \\
\subseteq & \Lie_{\geq 7}(\log \mH).
\end{align*}
So $\sum_{S' \succcurlyeq S} M(S') \overset{\Lie_{\geq k+1}(\log \mH)}{\sim} 0$ for this particular example.

For the general case, write $S = \{A_1, \ldots, A_s\}$ with $\card(A_1) = 1$, then
\begin{align*}
    & \sum_{S' \succcurlyeq S} M(S') \\
    = & \sum_{S' \succcurlyeq S} \sum_{\overset{\bj \in \{1, \ldots, k+1\}^k}{\SP(\bj) = S'}} [\ldots[C_{j_1}, C_{j_2}], \ldots, C_{j_{k}}] \\
    = & \sum_{\overset{(j_1, \ldots, j_k) \in \{1, \ldots, k+1\}^k}{j_{i} = j_{i'} \text{ if } i, i' \text{ are in the same set of } S}} [\ldots[C_{j_1}, C_{j_2}], \ldots, C_{j_{k}}] \\
    = & \sum_{i_1 = 1}^{k+1} \cdots \sum_{i_s = 1}^{k+1} [\ldots[C_{i_{f(1)}}, C_{i_{f(2)}}], \ldots, C_{i_{f(k)}}] \quad \text{ where } f(r) \text{ is defined by } r \in A_{f(r)}. \\
    = & \sum_{i_2 = 1}^{k+1} \cdots \sum_{i_s = 1}^{k+1} [\ldots[\ldots[C_{i_{f(1)}}, C_{i_{f(2)}}], \ldots, \sum_{i_1 = 1}^{k+1} C_{i_{1}}], \ldots, C_{i_{f(k)}}] \\
    \in & \sum_{i_2 = 1}^{k+1} \cdots \sum_{i_s = 1}^{k+1} [\ldots[\ldots[C_{i_{f(1)}}, C_{i_{f(2)}}], \ldots, \Lie_{\geq 2}(\log \mH)], \ldots, C_{i_{f(k)}}] \\
    \subseteq & \; \Lie_{\geq k+1}(\log \mH).
\end{align*}
Hence $\sum_{S' \succcurlyeq S} M(S') \overset{\Lie_{\geq k+1}(\log \mH)}{\sim} 0$.
\end{proof}

Using Equation~\eqref{eq:PhitoMset}, Lemma~\ref{lem:effectmodk1} gives the following corollaries.
\begin{cor}\label{cor:effectmodk1}
    Let $\mH$ be a subset of $\UT(n, \Q)$.
    Suppose $C_1, \ldots, C_{k+1} \in \Lie_{\geq 1}(\log \mH)$ and $\sum_{i = 1}^{k+1} C_i \in \Lie_{\geq 2}(\log \mH)$.
    If a set partition $S$ satisfies $\min(S) = 1$, then
    \begin{equation}
        \sum_{S' \succcurlyeq S} \Phi(S') \overset{\Lie_{\geq k+1}(\log \mH)}{\sim} 0.
    \end{equation}
\end{cor}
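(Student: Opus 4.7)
The plan is to deduce this from Lemma~\ref{lem:effectmodk1} by substituting the expression of $\Phi$ in terms of $M$ given by Equation~\eqref{eq:PhitoMset} and swapping the order of summation. Concretely, I would start from
\[
\sum_{S' \succcurlyeq S} \Phi(S') = \sum_{S' \succcurlyeq S} \sum_{\tau \in \Sym_k} \frac{(-1)^{d(\tau)}}{k^2 \binom{k-1}{d(\tau)}} M(S'_\tau) = \sum_{\tau \in \Sym_k} \frac{(-1)^{d(\tau)}}{k^2 \binom{k-1}{d(\tau)}} \sum_{S' \succcurlyeq S} M(S'_\tau),
\]
where $S'_\tau$ denotes the relabeling of $S'$ by $\tau$ as defined right after Equation~\eqref{eq:PhitoMset}.

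The next step is to observe that for each fixed $\tau \in \Sym_k$, the map $S' \mapsto S'_\tau$ is a bijection on set partitions of $\{1,\ldots,k\}$ which preserves the coarsening relation (since it just relabels elements). Hence as $S'$ ranges over all coarsenings of $S$, $S'_\tau$ ranges over all coarsenings of $S_\tau$. This gives
\[
\sum_{S' \succcurlyeq S} M(S'_\tau) = \sum_{S'' \succcurlyeq S_\tau} M(S'').
\]
Moreover, since $S_\tau$ is obtained from $S$ by merely relabeling, the multiset of block sizes is preserved, so $\min(S_\tau) = \min(S) = 1$. Therefore Lemma~\ref{lem:effectmodk1} applies to $S_\tau$ and yields $\sum_{S'' \succcurlyeq S_\tau} M(S'') \overset{\Lie_{\geq k+1}(\log \mH)}{\sim} 0$ for every $\tau \in \Sym_k$.

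Substituting this back into the displayed expression, each summand in the $\tau$-sum lies in $\Lie_{\geq k+1}(\log \mH)$, so the whole $\Q$-linear combination does as well, giving $\sum_{S' \succcurlyeq S} \Phi(S') \overset{\Lie_{\geq k+1}(\log \mH)}{\sim} 0$. There is no genuine obstacle here: once one checks the bookkeeping that the relabeling $S'\mapsto S'_\tau$ is a coarsening-preserving bijection and preserves the minimum block size, the corollary follows immediately from the two cited ingredients.
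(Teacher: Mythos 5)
Your proof is correct and takes essentially the same route as the paper: expand $\Phi$ via Equation~\eqref{eq:PhitoMset}, swap the sums, note that $S' \mapsto S'_\tau$ is a coarsening-preserving bijection with $\min(\IP(S_\tau)) = \min(\IP(S)) = 1$, and apply Lemma~\ref{lem:effectmodk1} termwise. You spell out the bijection and block-size preservation more explicitly than the paper does, but the argument is the same.
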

\begin{proof}
    For any $\tau \in \Sym_k$, we have that $S'_{\tau} \succcurlyeq S_{\tau}$ if and only if $S' \succcurlyeq S$.
    Therefore by Equation~\eqref{eq:PhitoMset},
    \begin{multline*}
        \sum_{S' \succcurlyeq S} \Phi(S') = \sum_{S' \succcurlyeq S} \sum_{\tau \in \Sym_k} \frac{(-1)^{d(\tau)}}{k^2 \binom{k-1}{d(\tau)}} \cdot M(S'_{\tau}) = \sum_{S'_{\tau} \succcurlyeq S_{\tau}} \sum_{\tau \in \Sym_k} \frac{(-1)^{d(\tau)}}{k^2 \binom{k-1}{d(\tau)}} \cdot M(S'_{\tau}) \\
        = \sum_{\tau \in \Sym_k} \frac{(-1)^{d(\tau)}}{k^2 \binom{k-1}{d(\tau)}} \cdot \sum_{S'_{\tau} \succcurlyeq S_{\tau}} M(S'_{\tau}) \overset{\Lie_{\geq k+1}(\log \mH)}{\sim} 0.
    \end{multline*}
\end{proof}

\begin{cor}\label{cor:effectmodk2}
Let $\mH$ be a subset of $\UT(n, \Q)$.
Suppose $C_1, \ldots, C_{k+1} \in \Lie_{\geq 1}(\log \mH)$ and $\sum_{i = 1}^{k+1} C_i \in \Lie_{\geq 2}(\log \mH)$.
For any set partition $S$, the symmetric sum $\Phi(S)$ is equivalent under $\overset{\Lie_{\geq k+1}(\log \mH)}{\sim}$ to a linear combination of $\Phi(S')$ where $\min(\IP(S')) \geq 2$ (that is, every set in the partitions $S'$ has cardinality at least two).

In other words, there exist integers $\alpha_{S'}$, where $S'$ ranges over all set partitions satisfying $\min(\IP(S')) \geq 2$, such that
\[
\Phi(S) \overset{\Lie_{\geq k+1}(\log \mH)}{\sim} \sum_{S', \min(\IP(S')) \geq 2} \alpha_{S'} \Phi(S').
\]
\end{cor}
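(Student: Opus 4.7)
The plan is a straightforward induction on the number of parts $\card(S)$ of the set partition $S$, using Corollary~\ref{cor:effectmodk1} as the single key input. The intuition is that Corollary~\ref{cor:effectmodk1} gives us a linear relation among $\Phi(S')$ for all coarsenings $S' \succcurlyeq S$ whenever $S$ contains a singleton, and this relation can always be solved for the ``finest'' term $\Phi(S)$, thereby trading a partition that contains a singleton for a linear combination of strictly coarser partitions.

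First, the base case. When $\card(S) = 1$, necessarily $S = \{\{1, \ldots, k\}\}$, and since $k \geq 2$ we have $\min(\IP(S)) = k \geq 2$; so the statement holds trivially by taking $\alpha_{S} = 1$ and all other coefficients zero.

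For the inductive step, fix $S$ with $\card(S) \geq 2$ and assume the claim is proven for every set partition with strictly fewer parts. If $\min(\IP(S)) \geq 2$, the claim is immediate as above. Otherwise $\min(\IP(S)) = 1$, so Corollary~\ref{cor:effectmodk1} applies and yields
\[
\Phi(S) \overset{\Lie_{\geq k+1}(\log \mH)}{\sim} - \sum_{\substack{S' \succcurlyeq S \\ S' \neq S}} \Phi(S').
\]
Every strict coarsening $S'$ of $S$ is obtained by merging at least two distinct parts of $S$, so $\card(S') < \card(S)$. Hence the induction hypothesis applies to each such $S'$, and each $\Phi(S')$ is equivalent under $\overset{\Lie_{\geq k+1}(\log \mH)}{\sim}$ to a $\Z$-linear combination of $\Phi(S'')$ with $\min(\IP(S'')) \geq 2$. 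Summing these relations and collecting coefficients produces the required combination for $\Phi(S)$.

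There is no genuine obstacle here: the only thing to be careful about is that Corollary~\ref{cor:effectmodk1} is stated as $\sum_{S' \succcurlyeq S} \Phi(S') \sim 0$, which includes the term $\Phi(S)$ itself (since $S$ is trivially a coarsening of $S$), so that we can indeed solve for $\Phi(S)$ and express it through \emph{strict} coarsenings, which is what makes the induction on $\card(S)$ well-founded.
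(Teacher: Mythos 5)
Your proof is correct and follows essentially the same strategy as the paper's: both apply Corollary~\ref{cor:effectmodk1} to a partition with a singleton part, solve the resulting relation for $\Phi(S)$, and iterate on the strict coarsenings. The only difference is cosmetic: the paper's proof is a one-line sketch that asserts the procedure terminates ``for sufficiently many times,'' while you make the termination rigorous by an explicit induction on $\card(S)$, observing that every strict coarsening has strictly fewer parts.
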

\begin{proof}
    Corollary~\ref{cor:effectmodk1} shows that if $\min(\IP(S)) = 1$, then under the equivalence $\overset{\Lie_{\geq k+1}(\log \mH)}{\sim}$, we can replace $\Phi(S)$ by $- \sum_{S' \succcurlyeq S, S' \neq S} \Phi(S')$.
    Repeat this ``coarsening'' procedure for all $\Phi(S')$, $\min(\IP(S')) = 1,$ for sufficiently many times, we can rewrite $\Phi(S)$ as a linear combination of expressions $\Phi(S')$ where $\min(\IP(S')) \geq 2$.
\end{proof}

Define a \emph{partition-integer pair} to be a pair $(P, c)$, where $P$ is an integer partition and $c$ is a number in $\set(P)$.
For a partition-integer pair $(P, c)$, define the following symmetric sum.
\[
\tM(P, c) \coloneqq M(S),
\]
where $S$ is a set partition such that $\IP(S) = P$, and $1 \in A \in S$ with $\card(A) = \max(P)$ and $2 \in A' \in S$ with $\card(A') = c$.
For example, a possible definition of $\tM((3,2,1), 1)$ can be
\begin{multline*}
\tM((3,2,1), 1) \coloneqq M(\{\{1, 3, 4\}, \{2\}, \{5, 6\}\}) \\
= \frac{1}{4!} \sum_{\sigma \in \Sym_{7}}[[[[[[C_{\sigma(2)}, C_{\sigma(7)}], C_{\sigma(2)}], C_{\sigma(2)}], C_{\sigma(4)}], C_{\sigma(4)}] \\
= \sum_{1 \leq i, j, k \leq 7, i,j,k \text{ distinct}}[[[[[[C_{i}, C_{j}], C_{i}], C_{i}], C_{k}], C_{k}].
\end{multline*}
Note that this definition \emph{a priori} depends on the choice of the set partition $S$. 
However, under the equivalence relation $\overset{\Lie_{\geq 2}(\Lie_{\geq 2}(\log \mH))}{\sim}$, different choices of $S$ result in the same equivalence class.
Indeed, let $\bj$ be a tuple whose associated set partition is $S$.
By Lemma~\ref{lem:effectmod22}, any exchange of order among the elements $j_3, j_4, \ldots, j_{k}$ will not change the equivalence class of $[\ldots[C_{\sigma(j_1)}, C_{\sigma(j_2)}], \ldots, C_{\sigma(j_{k})}]$, so it will not change the equivalence class of $M(\bj)$.
This means that the equivalent class of $M(S)$ does not change when we permute the numbers $3, 4, \ldots, k$.
For example, $M(\{ \{1, 3, 4\}, \{2\}, \{5, 6\} \}) \sim M(\{ \{1, 3, 5\}, \{2\}, \{4, 6\} \})$, because
\begin{multline*}
    M(\{ \{1, 3, 4\}, \{2\}, \{5, 6\} \}) = \frac{1}{4!} \sum_{\sigma \in \Sym_{7}}[[[[[[C_{\sigma(2)}, C_{\sigma(7)}], C_{\sigma(2)}], C_{\sigma(2)}], C_{\sigma(4)}], C_{\sigma(4)}] \\
    \overset{\Lie_{\geq 2}(\Lie_{\geq 2}(\log \mH))}{\sim} \frac{1}{4!} \sum_{\sigma \in \Sym_{7}}[[[[[[C_{\sigma(2)}, C_{\sigma(7)}], C_{\sigma(2)}], C_{\sigma(4)}], C_{\sigma(2)}], C_{\sigma(4)}] = M(\{ \{1, 3, 5\}, \{2\}, \{4, 6\}\}).
\end{multline*}
Hence, the equivalent class of $M(S)$ only depends on the integer partition $\IP(S)$ as well as the cardinality of the sets where $1$ and $2$ belong.
This is uniquely determined by the partition-cardinality pair $(P, c)$.

\begin{lem}\label{lem:effectmodk}
Let $\mH$ be a subset of $G$.
Suppose $C_1, \ldots, C_{k+1} \in \Lie_{\geq 1}(\log \mH)$ and $\sum_{i = 1}^{k+1} C_i \in \Lie_{\geq 2}(\log \mH)$.
For any set partition $S$ satisfying $\min(\IP(S)) \geq 2$, the symmetric sum $M(S)$
is equivalent (under $\sim$) to a linear combination of $\tM(P, c)$, where $(P, c)$ are partition-integer pairs satisfying $c \neq \max(P)$ and $\min(P) \geq 2$.

In other words, there exist integers $\beta_{(P, c)}$, where $(P, c)$ ranges over all partition-integer pairs with $c \neq \max(P)$ and $\min(P) \geq 2$, such that
\[
M(S) \sim \sum_{(P, c)} \beta_{(P, c)} \tM(P, c).
\]
\end{lem}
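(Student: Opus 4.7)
I would prove Lemma~\ref{lem:effectmodk} by case analysis on how the positions $1$ and $2$ sit in $S$, combined with an inductive reduction based on Lemma~\ref{lem:effectmodk1}. The key preliminary observation is that modulo $\sim$, $M(S)$ depends only on the triple $(P, s_1, s_2)$---where $P = \IP(S)$ and $s_i$ is the size of the block of $S$ containing $i$---together with whether $1$ and $2$ share a block. This follows from Lemma~\ref{lem:effectmod22}, which permits arbitrary permutation of the outer positions $3, \ldots, k$ of the iterated bracket without altering the $\sim$-class.

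I would handle the easy cases first. If $1$ and $2$ lie in the same block, then every tuple $\bj$ with $\SP(\bj) = S$ satisfies $j_1 = j_2$, so the innermost bracket $[C_{\sigma(j_1)}, C_{\sigma(j_2)}]$ vanishes and $M(S) = 0$. If $1$ and $2$ are in different blocks, let $S^{\mathrm{swap}}$ be the partition obtained by exchanging the elements $1$ and $2$, which has profile $(P, s_2, s_1)$. The antisymmetry of the innermost bracket combined with the outer-position freedom above yields $M(S) \sim -M(S^{\mathrm{swap}})$. When $s_1 = s_2$ the swap preserves the profile, so $M(S) \sim M(S^{\mathrm{swap}}) \sim -M(S)$ and $M(S) \sim 0$. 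Otherwise I may assume $s_1 > s_2$ (applying the swap if needed); if additionally $s_1 = \max(P)$ then, modulo $\sim$, $M(S)$ equals $\tM(P, s_2)$, which is already in the target normal form since $s_2 \neq \max(P)$.

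The remaining case is $s_2 < s_1 < \max(P)$, treated by an induction using Lemma~\ref{lem:effectmodk1}. Choose a maximum-size block $B$ of $S$ and an element $l \in B \setminus \{1, 2\}$, which exists because $|B| = \max(P) \geq 3$ and $B$ meets neither $1$ nor $2$; let $S^\ast$ refine $S$ by splitting $B$ into $B \setminus \{l\}$ and $\{l\}$, so that $\min(\IP(S^\ast)) = 1$. Lemma~\ref{lem:effectmodk1} then gives
\[
    \sum_{S' \succcurlyeq S^\ast} M(S') \overset{\Lie_{\geq k+1}(\log \mH)}{\sim} 0.
\]
Isolating $M(S)$---the coarsening that merges $\{l\}$ back into $B \setminus \{l\}$---writes it modulo $\sim$ as a $\Q$-linear combination of $M(S'')$ over all other coarsenings of $S^\ast$. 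Coarsenings in which $\{l\}$ is merged into the block of $1$ enlarge the block of $1$ to size $s_1 + 1$, strictly reducing the complexity; coarsenings in which $\{l\}$ merges with another block either strictly reduce complexity or produce a partition in which $1$ and $2$ share a block, handled by the first easy case. Coarsenings in which $\{l\}$ remains isolated carry a singleton, so a recursive invocation of Lemma~\ref{lem:effectmodk1} on such $S''$ expresses $M(S'')$ in terms of coarsenings with strictly fewer blocks. An induction on a well-chosen complexity measure combining these effects then closes the argument and yields $M(S)$ as the required combination of $\tM(P, c)$ terms.

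The main obstacle lies in designing this complexity measure. A naive choice such as $\max(\IP(S)) - s_1$ fails when $P$ contains several blocks of maximum size and $\{l\}$ is absorbed into a maximum-size block distinct from $B \setminus \{l\}$, since the resulting partition then has strictly larger maximum. A refined lexicographic measure---for instance $(k - s_1,\, k - s_2,\, \text{number of singletons},\, \text{number of blocks})$---together with the collapse of $M(S'')$ under the same-block or equal-size cases, must be verified to strictly decrease on every coarsening term. This combinatorial bookkeeping, while finite, is the delicate heart of the proof.
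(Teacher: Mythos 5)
Your approach diverges substantially from the paper's, and the core case contains an unresolved gap that you yourself flag but do not close.

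The paper's proof is a direct, one-step application of the Jacobi identity. Using Lemma~\ref{lem:effectmod22} to reorder positions $3,\ldots,k$, one arranges that position $3$ lies in a maximum-size block of $S$. Then the Jacobi identity at the innermost level,
\[
[[C_{\sigma(j_1)},C_{\sigma(j_2)}],C_{\sigma(j_3)}] = [[C_{\sigma(j_3)},C_{\sigma(j_2)}],C_{\sigma(j_1)}] - [[C_{\sigma(j_3)},C_{\sigma(j_1)}],C_{\sigma(j_2)}],
\]
followed by summation over $\sigma$, writes $M(S)$ exactly as $\tM(\IP(S),c) - \tM(\IP(S),c')$, where $c,c'$ are the block sizes of $2$ and $1$. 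Terms with $c=\max(P)$ vanish under $\sim$ by the anticommutativity argument (essentially your $s_1=s_2$ observation applied to the re-indexed partition). This handles every case uniformly in one step, with no induction and no dependence on how many maximum-size blocks $S$ has.

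Your preliminary reductions are correct: $M(S)=0$ when $1,2$ share a block; $M(S)\sim -M(S^{\mathrm{swap}})$, hence $M(S)\sim 0$ when $s_1=s_2$; and if one of $s_1,s_2$ equals $\max(P)$ you land directly in normal form. But the case $s_2<s_1<\max(P)$ is not closed. The termination argument is the missing piece, and the measure you propose, $(k-s_1,\, k-s_2,\, \text{\#singletons},\, \text{\#blocks})$, does not strictly decrease on every coarsening term. Concretely, suppose $P$ has two maximum-size blocks $B$ (containing $l$) and $B_1$, neither containing $1$ or $2$. After splitting off $\{l\}$, one coarsening of $S^\ast$ merges $\{l\}$ into $B_1$, yielding $S'$ with $\IP(S')=(\max(P)+1,\max(P)-1,\ldots)$, unchanged $s_1,s_2$, zero singletons, and the same number of blocks as $S$. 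Thus $S'$ ties $S$ on all four coordinates of your measure, is again in the hard sub-case (both $s_1$ and $s_2$ strictly below the new maximum, which has even grown), and the induction stalls. Absorbing such terms requires a cancellation among the relations of Lemma~\ref{lem:effectmodk1} that you have not exhibited, and it is not at all clear it can be extracted without reintroducing the Jacobi-identity manipulation the paper uses. In short, your route replaces a single algebraic rewrite with a linear-algebraic elimination over the coarsening lattice, but the well-foundedness of that elimination is precisely what remains unproved.
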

\begin{proof}
Write $S = \{A_1, \ldots, A_s\}$ with $\card(A_1) = \max(\IP(S))$.
By Lemma~\ref{lem:effectmod22}, the equivalence class of $M(S)$ does not change when we permute the numbers $3, 4, \ldots, k$.
We can therefore suppose $3 \in A_1$.
Take any tuple $\bj = (j_1, \ldots, j_k) \in \{1, \ldots, k+1\}^k$ with $\SP(\bj) = S$.
By the Jacobi identity,
\begin{multline}\label{eq:appJac}
[\ldots[[C_{\sigma(j_1)}, C_{\sigma(j_2)}], C_{\sigma(j_3)}], \ldots, C_{\sigma(j_k)}] = \\
[\ldots[[C_{\sigma(j_3)}, C_{\sigma(j_2)}], C_{\sigma(j_1)}], \ldots, C_{\sigma(j_k)}] - [\ldots[[C_{\sigma(j_3)}, C_{\sigma(j_1)}], C_{\sigma(j_2)}], \ldots, C_{\sigma(j_k)}].
\end{multline}
Summing up for $\sigma \in \Sym_{k+1}$, the expression $\sum_{\sigma \in \Sym_{k+1}}[\ldots[[C_{\sigma(j_3)}, C_{\sigma(j_2)}], C_{\sigma(j_1)}], \ldots, C_{\sigma(j_k)}]$ is equivalent to $\left(k+1 - \card(S)\right)! \cdot \tM(\IP(S), c)$, with $c = \card(A_i)$ where $j_2 \in A_i$.
Similarly, the expression 
\[
\sum_{\sigma \in \Sym_{k+1}}[\ldots[[C_{\sigma(j_3)}, C_{\sigma(j_1)}], C_{\sigma(j_2)}], \ldots, C_{\sigma(j_k)}]
\]
is equivalent to $\left(k+1 - \card(S)\right)! \cdot \tM(\IP(S), c')$, with $c' = \card(A_{i'})$ where $j_1 \in A_{i'}$.

We claim that if $c = \max(\IP(S))$, then $\tM(\IP(S), c) \sim 0$.
This is because, writing 
\[
\tM(\IP(S), c) = \frac{1}{\left(k+1 - \card(S)\right)!} \sum_{\sigma \in \Sym_{k+1}}[\ldots[[C_{\sigma(j_3)}, C_{\sigma(j_2)}], C_{\sigma(j_1)}], \ldots, C_{\sigma(j_k)}],
\]
if $j_2 \in A_i$ with $\card(A_i) = \max(\IP(S))$, then swapping $2$ and $3$ in the set partition $\SP(\bj)$ does not change its associated integer partition.
Therefore, we have
\begin{multline*}
\tM(\IP(S), \max(S')) = \frac{1}{\left(k+1 - \card(S)\right)!} \sum_{\sigma \in \Sym_{k+1}} [\ldots[[C_{\sigma(j_3)}, C_{\sigma(j_2)}], C_{\sigma(j_1)}], \ldots, C_{\sigma(j_k)}] \sim \\
- \frac{1}{\left(k+1 - \card(S)\right)!} \sum_{\sigma \in \Sym_{k+1}} [\ldots[[C_{\sigma(j_2)}, C_{\sigma(j_3)}], C_{\sigma(j_1)}], \ldots, C_{\sigma(j_k)}] \sim - \tM(\IP(S'), \max(S')),
\end{multline*}
so $\tM(\IP(S), \max(S)) \sim 0$.
This proves that if $c = \max(\IP(S))$, then $\tM(\IP(S), c) \sim 0$.

Summing up Equation~\eqref{eq:appJac} for $\sigma \in \Sym_{k+1}$, we conclude that 
\begin{multline*}
M(S) = \frac{1}{\left(k+1 - \card(S)\right)!} \sum_{\sigma \in \Sym_{k+1}} [\ldots[[C_{\sigma(j_1)}, C_{\sigma(j_2)}], C_{\sigma(j_3)}], \ldots, C_{\sigma(j_k)}] \\
= \tM(\IP(S), c) - \tM(\IP(S), c')
\end{multline*}
is equivalent (under $\sim$) to a linear combination of expressions
$\tM(\IP(S), c)$, where $c \neq \max(S)$.
\end{proof}

For any $k$, all partition-integer pairs satisfying $c \neq \max(P)$ and $\min(P) \geq 2$ can be effectively listed.
For example, when $k = 5$, there is only one pair $((3,2), 2)$.
When $k = 7$, there are three pairs 
\[
((5,2), 2), ((4,3), 3), ((3, 2, 2), 2).
\]
When $k = 9$, there are six pairs 
\[
((7,2), 2), ((6,3), 3), ((5,4), 4), ((5, 2, 2), 2), ((4,3,2), 3), ((4,3,2), 2).
\]

Combining Corollary~\ref{cor:effectmodk2}, Equation~\eqref{eq:PhitoMset} and Lemma~\ref{lem:effectmodk}, we obtain the following proposition.
\begin{prop}\label{prop:writelin}
    Suppose $C_1, \ldots, C_{k+1} \in \Lie_{\geq 1}(\log \mH)$ and $\sum_{i = 1}^{k+1} C_i \in \Lie_{\geq 2}(\log \mH)$.
    Let $m \geq 2$ and $\bj = (j_1, \ldots, j_m) \in \{1, \ldots, k+1\}^m$.
    The expression $\sum_{\sigma \in \Sym_{k+1}} H_k(C_{\sigma(j_1)}, \ldots, C_{\sigma(j_{m})})$ is equivalent (under $\sim$) to a linear combination of $\tM(P, c)$, where $(P, c)$ ranges over all partition-integer pairs with $c \neq \max(P)$ and $\min(P) \geq 2$. 
    Furthermore, this linear combination can be effectively computed.

    In other words, one can effectively compute rational numbers $\gamma_{(P, c)}$, such that
    \[
    \sum_{\sigma \in \Sym_{k+1}} H_k(C_{\sigma(j_1)}, \ldots, C_{\sigma(j_{m})}) \sim \sum_{(P, c)} \gamma_{(P, c)} \tM(P, c).
    \]
\end{prop}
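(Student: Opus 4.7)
The plan is to combine the Dynkin formula with the structural lemmas established above, applied in sequence. First I would apply the Dynkin formula (Lemma~\ref{lem:dynkin}) to expand each term $H_k(C_{\sigma(j_1)}, \ldots, C_{\sigma(j_m)})$ as a finite $\Q$-linear combination of expressions of the form $\varphi_k(C_{\sigma(i_1)}, \ldots, C_{\sigma(i_k)})$, where $(i_1, \ldots, i_k)$ records a multi-selection from $\{j_1, \ldots, j_m\}$ weighted by the multinomial coefficients $\tfrac{1}{i_1! \cdots i_m!}$. After summing over $\sigma \in \Sym_{k+1}$ and regrouping terms according to their associated set partition of $\{1, \ldots, k\}$, one obtains
\[
\sum_{\sigma \in \Sym_{k+1}} H_k(C_{\sigma(j_1)}, \ldots, C_{\sigma(j_m)}) = \sum_{S} \lambda_S \, \Phi(S),
\]
where $S$ ranges over set partitions of $\{1, \ldots, k\}$ and the coefficients $\lambda_S \in \Q$ are computable directly from the Dynkin expansion.

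Next I would invoke Corollary~\ref{cor:effectmodk2} on each $\Phi(S)$ whose partition has $\min(\IP(S)) = 1$, rewriting it, modulo $\Lie_{\geq k+1}(\log \mH)$, as an integer linear combination of $\Phi(S')$'s obtained by strictly coarsening $S$. Since each coarsening strictly reduces the number of parts, iterating this substitution terminates after finitely many rounds and leaves only symmetric sums $\Phi(S')$ with $\min(\IP(S')) \geq 2$.

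Then, for each such surviving $\Phi(S')$, I would apply the conversion formula~\eqref{eq:PhitoMset} to express $\Phi(S')$ as a rational linear combination of $M(S'_\tau)$ with $\tau \in \Sym_k$. Because $\tau$ only permutes labels and preserves the multiset of part sizes, every $S'_\tau$ still satisfies $\min(\IP(S'_\tau)) \geq 2$, so Lemma~\ref{lem:effectmodk} applies and rewrites each $M(S'_\tau)$, under $\sim$, as an integer combination of the canonical symbols $\tM(P, c)$ with $c \neq \max(P)$ and $\min(P) \geq 2$. Composing these three substitutions yields the claimed representation.

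The main obstacle is more bookkeeping than mathematical: one has to carry coefficients cleanly through three transformations living in two \emph{different} quotients (modulo $\Lie_{\geq k+1}(\log \mH)$ for the coarsening step, and modulo $\Lie_{\geq 2}(\Lie_{\geq 2}(\log \mH))$ for the Jacobi-based step of Lemma~\ref{lem:effectmodk}), while verifying that each substitution is algorithmic and terminating. Finiteness of the set of partitions of $\{1, \ldots, k\}$ and of the list of admissible pairs $(P, c)$ makes the overall procedure effective, so the resulting linear combination is explicitly computable.
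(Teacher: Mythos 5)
Your proposal follows the paper's proof exactly: Dynkin formula to reduce to $\Phi(S)$, Corollary~\ref{cor:effectmodk2} to eliminate parts of size one via iterated coarsening, Equation~\eqref{eq:PhitoMset} to pass from $\Phi$ to $M$ (with the observation that $\IP(S_\tau)=\IP(S)$ preserving $\min \geq 2$), and Lemma~\ref{lem:effectmodk} to land on the canonical $\tM(P,c)$. This is the same argument in the same order, with the same justifications, so it is correct.
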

\begin{proof}
By the Dynkin formula (Lemma~\ref{lem:dynkin}), the expression $\sum_{\sigma \in \Sym_{k+1}} H_k(C_{\sigma(j_1)}, \ldots, C_{\sigma(j_{m})})$ can be rewritten into a linear combination of $\Phi(\SP(\bj'))$, where $\bj'$ are subsequences (with possible repetition) of $\bj$.
Then, Corollary~\ref{cor:effectmodk2} shows that each $\Phi(\SP(\bj'))$ is equivalent (under $\sim$) to a linear combination of $\Phi(S')$ with $\min(\IP(S')) \geq 2$.
Next, Equation~\eqref{eq:PhitoMset} shows that each $\Phi(S'), \min(\IP(S')) \geq 2$ is equal to a linear combination of $M(S'')$ with $\min(\IP(S'')) \geq 2$.
The condition $\min(\IP(S'')) \geq 2$ is due to the fact that for any $\tau \in \Sym_k$ we have $\IP(S_{\tau}) = \IP(S)$.
Finally, by Lemma~\ref{lem:effectmodk}, each $M(S''), \min(\IP(S'')) \geq 2$ is equivalent (under $\sim$) to a linear combination of $\tM(P, c)$ with $c \neq \max(P)$ and $\min(P) \geq 2$.

In summary, any expression $\sum_{\sigma \in \Sym_{k+1}} H_k(C_{\sigma(j_1)}, \ldots, C_{\sigma(j_{r})})$ is equivalent to a linear combination of $\tM(P, c)$, where $(P, c)$ ranges over all partition-integer pairs with $c \neq \max(P)$ and $\min(P) \geq 2$. 
Furthermore, the proof of Corollary~\ref{cor:effectmodk2}, Equation~\eqref{eq:PhitoMset} and Lemma~\ref{lem:effectmodk} give an effective procedure that computes the coefficients of this linear combination.
\end{proof}

The effective procedure of Proposition~\ref{prop:writelin} is summarized by Algorithm~\ref{alg:lincombrewrite}.
Note that for the algorithm we fix the integer $k$, so all set partitions in the algorithm refer to set partitions of $k$.

\begin{algorithm}[ht!]
\caption{Find $\gamma_{(P, c)}$ where
$\sum_{\sigma \in \Sym_{k+1}} H_k(C_{\sigma(j_1)}, \ldots, C_{\sigma(j_{m})}) \sim \sum_{(P, c)} \gamma_{(P, c)} \tM(P, c)$} \vspace{0.1cm}
\label{alg:lincombrewrite}
\begin{description}[nosep]
\item[Input:] 
an integer $k$ and a tuple $\bj = (j_1, \ldots, j_m) \in \{1, \ldots, k+1\}^m$.
\item[Output:]
rational numbers $\gamma_{(P, c)}$, where $(P, c)$ ranges over all partition-integer pairs with $c \neq \max(P)$ and $\min(P) \geq 2$.
\end{description}
\begin{enumerate}[nosep, label = \arabic*.]
    \item \textbf{Compute rational numbers $a_{S}$ such that}
    \begin{equation}
        \sum_{\sigma \in \Sym_{k+1}} H_k(C_{\sigma(j_1)}, \ldots, C_{\sigma(j_{m})}) = \sum_{\text{set partition } S} a_{S} \Phi(S) 
    \end{equation}
    \textbf{in the following way:}
        \begin{enumerate}[nosep]
            \item Initialize with $a_S \coloneqq 0$ for all set partitions $S$.
            \item For every tuple $(i_1, \ldots, i_m) \in \Zp^m$ such that $i_1 + \cdots + i_m = k$, compute the sequence 
            \[
            \iota \coloneqq (\underbrace{j_1, \ldots, j_1}_{i_1}, \underbrace{j_2, \ldots, j_2}_{i_2}, \ldots, \underbrace{j_m, \ldots, j_m}_{i_m})
            \]
            and update $a_{\SP(\iota)} \coloneqq a_{\SP(\iota)} + \frac{(k+1 - \card(\SP(\iota)))!}{i_1 ! \cdots i_m !}$.
        \end{enumerate}
        
    \item \textbf{Compute rational numbers $b_{S}$ such that}
    \begin{equation}
        \sum_{\text{set partition } S} a_{S} \Phi(S) = \sum_{\overset{\text{set partition } S,}{\min(\IP(S)) \geq 2}} b_{S} \Phi(S)
    \end{equation}
    \textbf{in the following way:}
    \begin{enumerate}[nosep]
        \item Initialize with $b_S \coloneqq a_{S}$ for all set partitions $S$.
        \item Order all set partitions $S$ into $S_1, S_2, \ldots, S_{p}$, such that if $S_j \succcurlyeq S_i$ then $j \geq i$.
        \item For $i = 1, 2, \ldots, p$ : \\
        If $\min(\IP(S_i)) = 1$, then update $b_{S_i} \coloneqq 0$ and $b_{S_j} \coloneqq b_{S_j} - b_{S_i}$ for all $S_j \succcurlyeq S_i$.
    \end{enumerate}
    
    \item \textbf{Compute rational numbers $g_{S}$ such that}
    \begin{equation}
        \sum_{\overset{\text{set partition } S,}{\min(\IP(S)) \geq 2}} b_{S} \Phi(S) = \sum_{\overset{\text{set partition } S,}{\min(\IP(S)) \geq 2}} g_{S} M(S)
    \end{equation}
    \textbf{in the following way:}
    \begin{enumerate}[nosep]
            \item Initialize with $g_S \coloneqq 0$ for all set partitions $S, \min(\IP(S)) \geq 2$.
            \item For every set partition $S$ and every permutation $\sigma \in \Sym_{k}$, compute the set partition
            \[
            S_{\sigma} \coloneqq \Big\{\{\sigma(j) \mid j \in A\} \;\Big|\; A \in S \Big\}
            \]
            and update $g_{S_{\sigma}} \coloneqq g_{S_{\sigma}} + b_S \cdot \frac{(-1)^{d(\sigma)}}{k^2 \binom{k-1}{d(\sigma)}}$ (where $d(\cdot)$ denotes the number of descents).
        \end{enumerate}
    \item \textbf{Compute all partition-integer pairs $(P, c)$ with $c \neq \max(P)$ and $\min(P) \geq 2$.}
    \setcounter{algsplit}{\value{enumi}}
\end{enumerate}
(To be continued in the next page)
\end{algorithm}

\setcounter{algocf}{1}
\begin{algorithm}[h!]
\caption{(continued)}\vspace{0.1cm}
\begin{enumerate}[nosep, label = \arabic*.]
\setcounter{enumi}{\value{algsplit}}

    \item \textbf{Compute rational numbers $\gamma_{(P, c)}$ such that}
    \begin{equation}
        \sum_{\overset{\text{set partition } S}{\min(\IP(S)) \geq 2}} g_{S} M(S) = \sum_{\overset{(P, c)}{c \neq \max(P), \min(P) \geq 2}} \gamma_{(P, c)} \tM(P, c)
    \end{equation}
    \textbf{in the following way:}
    \begin{enumerate}[nosep]
        \item Initialize with $\gamma_{(P, c)} \coloneqq 0$ for all $(P, c)$, $c \neq \max(P)$ and $\min(P) \geq 2$.
        \item For all set partitions $S$ with $\min(\IP(S)) \geq 2$:
        \begin{enumerate}[nosep]
            \item If $1 \in A$, $\card(A) = \max(\IP(S))$ and $2 \in B$, $\card(B) \neq \max(\IP(S))$, then update $\gamma_{(\IP(S), \card(B))} \coloneqq \gamma_{(\IP(S), \card(B))} + g_S$.
            \item If $1 \in A$, $\card(A) \neq \max(\IP(S))$ and $2 \in B$, $\card(B) = \max(\IP(S))$, then update $\gamma_{(\IP(S), \card(A))} \coloneqq \gamma_{(\IP(S), \card(A))} - g_S$.
            \item If $1 \in A$, $\card(A) \neq \max(\IP(S))$ and $2 \in B$, $\card(B) \neq \max(\IP(S))$, then update $\gamma_{(\IP(S), \card(A))} \coloneqq \gamma_{(\IP(S), \card(A))} - g_S$, $\gamma_{(\IP(S), \card(B))} \coloneqq \gamma_{(\IP(S), \card(B))} + g_S$.
        \end{enumerate}
    \end{enumerate}
    \item \textbf{Return the numbers $\gamma_{(P, c)}$.}
\end{enumerate}
\end{algorithm}

We can now give computer assisted proofs of Lemmas~\ref{lem:H5}~-~\ref{lem:H9} based on Algorithm~\ref{alg:lincombrewrite}.

\begin{proof}[Proof of Lemma~\ref{lem:H5}]
(The SageMath~\cite{sagemath} code can be found at \url{https://doi.org/10.6084/m9.figshare.20124146.v1}.)
Set $k = 5$.
Using Algorithm~\ref{alg:lincombrewrite} on the tuples $(1, 2, 3, 4, 5, 6)$ and 
\[
\bj = (1, 2, 3, 4, 4, 5, 5, 6, 6, 1, 2, 3),
\]
we get 
\begin{align*}
    \sum_{\sigma \in \Sym_6} H_5(\log B_{\sigma(1)}, \ldots, \log B_{\sigma(6)})
    & \sim \tM((3,2), 2), \\
    \sum_{\sigma \in \Sym_6} H_5\big(\log B_{\sigma(j_1)}, \ldots, \log B_{\sigma(j_{12})}\big)
    & \sim - \tM((3,2), 2).
\end{align*}
Therefore,
\begin{equation*}
    \sum_{\sigma \in \Sym_{6}} H_5(\log B_{\sigma(1)}, \ldots, \log B_{\sigma(6)})
    + \sum_{\sigma \in \Sym_{6}} H_5(\log B_{\sigma(j_1)}, \ldots, \log B_{\sigma(j_{12})}) \sim 0.
\end{equation*}
\end{proof}

\begin{proof}[Proof of Lemma~\ref{lem:H7}]
(The SageMath~\cite{sagemath} code can be found at \url{https://doi.org/10.6084/m9.figshare.20124113.v1}.)
Set $k = 7$.
Using Algorithm~\ref{alg:lincombrewrite} on the tuples $(1, 2, \ldots, 8)$ and 
\begin{align*}
   \bj_1 = (j_{1,1}, j_{1,2}, \ldots, j_{1,16}) & = (1, 2, 3, 4, 5, 5, 6, 6, 7, 7, 8, 8, 1, 2, 3, 4), \\
   \bj_2 = (j_{2,1}, j_{2,2}, \ldots, j_{2,16}) & = (1, 2, 3, 4, 5, 4, 6, 7, 1, 2, 8, 3, 5, 6, 7, 8).
\end{align*}
We get 
\begin{equation*}
     \sum_{\sigma \in \Sym_8} H_7(\log B_{\sigma(1)}, \ldots, \log B_{\sigma(8)})
    \sim \frac{34}{15} \tM((5,2), 2)
    - \frac{34}{45} \tM((4,3), 3)
     + \frac{68}{15} \tM((3,2,2), 2), 
\end{equation*}
\begin{equation*}
    \sum_{\sigma \in \Sym_8} H_7\big(\log B_{\sigma(j_{1,1})}, \ldots, \log B_{\sigma(j_{1,16})}\big) 
    \sim \frac{34}{15} \tM((5,2), 2)
    + \frac{238}{45} \tM((4,3), 3)
     - \frac{68}{5} \tM((3,2,2), 2),
\end{equation*}
\begin{equation*}
    \sum_{\sigma \in \Sym_8} H_7\big(\log B_{\sigma(j_{2,1})}, \ldots, \log B_{\sigma(j_{2,16})}\big) 
    \sim - \frac{68}{15} \tM((5,2), 2)
    + \frac{34}{45} \tM((4,3), 3)
    - \frac{34}{5} \tM((3,2,2), 2).
\end{equation*}
Therefore,
\begin{equation*}
    \sum_{\sigma \in \Sym_{8}} H_7(\log B_{\sigma(1)}, \ldots, \log B_{\sigma(8)})
    + \sum_{s = 1}^2 \alpha_{s} \sum_{\sigma \in \Sym_{8}} H_7(\log B_{\sigma(j_{s,1})}, \ldots, \log B_{\sigma(j_{s,16})}) \sim 0
\end{equation*}
with $\alpha_1 = \frac{1}{15}, \alpha_2 = \frac{8}{15}$.
\end{proof}

\begin{proof}[Proof of Lemma~\ref{lem:H9}]
(The SageMath~\cite{sagemath} code can be found at \url{https://doi.org/10.6084/m9.figshare.20122979.v1}).
Set $k = 9$.
Using Algorithm~\ref{alg:lincombrewrite} on the tuples $(1, 2, \ldots, 10)$ and 
\begin{align*}
   (j_{1,1}, j_{1,2}, \ldots, j_{1,20}) & = (5, 4, 7, 10, 2, 8, 3, 8, 1, 9, 7, 6, 5, 6, 2, 3, 9, 10, 1, 4), \\
   (j_{2,1}, j_{2,2}, \ldots, j_{2,20}) & = (8, 3, 5, 7, 10, 6, 8, 2, 1, 10, 2, 4, 9, 1, 5, 9, 3, 6, 7, 4), \\
   (j_{3,1}, j_{3,2}, \ldots, j_{3,20}) & = (7, 10, 2, 6, 4, 9, 6, 4, 1, 5, 3, 5, 1, 9, 3, 7, 10, 2, 8, 8), \\
   (j_{4,1}, j_{4,2}, \ldots, j_{4,20}) & = (10, 2, 2, 6, 7, 1, 9, 3, 9, 4, 8, 7, 8, 5, 5, 1, 4, 10, 6, 3), \\
   (j_{5,1}, j_{5,2}, \ldots, j_{5,20}) & = (3, 5, 10, 1, 4, 8, 6, 9, 3, 2, 7, 6, 1, 10, 9, 7, 2, 4, 5, 8), \\
   (j_{6,1}, j_{6,2}, \ldots, j_{6,20}) & = (4, 7, 2, 10, 2, 1, 3, 5, 8, 1, 6, 9, 10, 7, 6, 8, 3, 5, 9, 4).
\end{align*}
We get 
\begin{multline*}
     \sum_{\sigma \in \Sym_{10}} H_9(\log B_{\sigma(1)}, \ldots, \log B_{\sigma(10)}) 
    \sim \frac{347}{105} \tM((7,2), 2) + \frac{347}{315} \tM((6,3), 3) \\
    + \frac{347}{105} \tM((5,4), 4) + \frac{1388}{105} \tM((5,2,2), 2) - \frac{347}{21} \tM((4,3,2), 3) + \frac{347}{21} \tM((4,3,2), 2),
\end{multline*}
\begin{multline*}
    \sum_{\sigma \in \Sym_{10}} H_9\big(\log B_{\sigma(j_{1,1})}, \ldots, \log B_{\sigma(j_{1,20})}\big) \sim -\frac{347}{105} \tM((7,2), 2) + \frac{21167}{945} \tM((6,3), 3) \\
    - \frac{4511}{315} \tM((5,4), 4) + 0 \cdot \tM((5,2,2), 2) + \frac{3817}{63} \tM((4,3,2), 3) + \frac{1735}{63} \tM((4,3,2), 2),
\end{multline*}
\begin{multline*}
    \sum_{\sigma \in \Sym_{10}} H_9\big(\log B_{\sigma(j_{2,1})}, \ldots, \log B_{\sigma(j_{2,20})}\big) \sim \frac{347}{45} \tM((7,2), 2) + \frac{18391}{945} \tM((6,3), 3) \\
    + \frac{347}{14} \tM((5,4), 4) - \frac{1388}{315} \tM((5,2,2), 2) + \frac{9022}{63} \tM((4,3,2), 3) - \frac{694}{63} \tM((4,3,2), 2),
\end{multline*}
\begin{multline*}
    \sum_{\sigma \in \Sym_{10}} H_9\big(\log B_{\sigma(j_{3,1})}, \ldots, \log B_{\sigma(j_{3,20})}\big) \sim \frac{16309}{42} \tM((7,2), 2) + \frac{85709}{630} \tM((6,3), 3) \\
    + \frac{241859}{1260} \tM((5,4), 4) + \frac{30883}{126} \tM((5,2,2), 2) - \frac{8675}{63} \tM((4,3,2), 3) + \frac{94037}{630} \tM((4,3,2), 2),
\end{multline*}
\begin{multline*}
    \sum_{\sigma \in \Sym_{10}} H_9\big(\log B_{\sigma(j_{4,1})}, \ldots, \log B_{\sigma(j_{4,20})}\big) \sim \frac{20473}{210} \tM((7,2), 2) - \frac{314729}{1890} \tM((6,3), 3) \\
    + \frac{4511}{140} \tM((5,4), 4) + \frac{137759}{630} \tM((5,2,2), 2) - \frac{23249}{315} \tM((4,3,2), 3) + \frac{33659}{210} \tM((4,3,2), 2),
\end{multline*}
\begin{multline*}
    \sum_{\sigma \in \Sym_{10}} H_9\big(\log B_{\sigma(j_{5,1})}, \ldots, \log B_{\sigma(j_{5,20})}\big) \sim \frac{347}{210} \tM((7,2), 2) + \frac{35741}{1890} \tM((6,3), 3) \\
    - \frac{18391}{1260} \tM((5,4), 4) + \frac{1041}{70} \tM((5,2,2), 2) - \frac{347}{63} \tM((4,3,2), 3) + \frac{1735}{126} \tM((4,3,2), 2),
\end{multline*}
\begin{multline*}
    \sum_{\sigma \in \Sym_{10}} H_9\big(\log B_{\sigma(j_{6,1})}, \ldots, \log B_{\sigma(j_{6,20})}\big) \sim - \frac{1388}{105} \tM((7,2), 2) - \frac{56561}{945} \tM((6,3), 3) \\
    + \frac{4511}{126} \tM((5,4), 4) - \frac{3123}{70} \tM((5,2,2), 2) - \frac{28454}{315} \tM((4,3,2), 3) - \frac{51703}{630} \tM((4,3,2), 2).
\end{multline*}
Therefore,
\begin{equation*}
    \sum_{\sigma \in \Sym_{10}} H_9(\log B_{\sigma(1)}, \ldots, \log B_{\sigma(10)})
    + \sum_{s = 1}^6 \alpha_{s} \sum_{\sigma \in \Sym_{10}} H_9(\log B_{\sigma(j_{s,1})}, \ldots, \log B_{\sigma(j_{s,20})}) \sim 0
\end{equation*}
with $\alpha_1 = \frac{44566633}{13702661}, \alpha_2 = \frac{557040}{13702661}, \alpha_3 = \frac{205175}{3915046}, \alpha_4 = \frac{1307207}{13702661}, \alpha_5 = \frac{86275275}{27405322}, \alpha_6 = \frac{4105194}{1957523}$.
\end{proof}
\end{appendices}

\end{document}